\documentclass{article}

\usepackage{csquotes}
\usepackage[a4paper,margin=2cm]{geometry}
\usepackage{mathtools}
\usepackage{amssymb}
\usepackage{amsthm}
\usepackage{thmtools}
\usepackage{xspace}
\usepackage[dvipsnames]{xcolor}
\usepackage{bbm}
\usepackage[colorlinks=true, allcolors=Maroon]{hyperref}
\usepackage[capitalise,nameinlink]{cleveref}
\usepackage{physics}
\usepackage[framemethod=tikz]{mdframed}
\usepackage{dirtytalk}
\usepackage{tikz}
\usetikzlibrary{arrows.meta,calc,positioning}

\newtheorem{theorem}{Theorem}[section]
\newtheorem{definition}[theorem]{Definition}
\newtheorem{lemma}[theorem]{Lemma}
\newtheorem{corollary}[theorem]{Corollary}
\newtheorem{observation}[theorem]{Observation}
\newtheorem{problem}[theorem]{Problem}
\newtheorem{proposition}[theorem]{Proposition}
\newtheorem{result}{Result}

\newcommand{\classname}[1]{\ensuremath{\mathsf{#1}}\xspace} 
\newcommand{\BQP}{\classname{BQP}}
\newcommand{\BPP}{\classname{BPP}}
\newcommand{\PP}{\classname{PP}}
\newcommand{\PH}{\classname{PH}}

\newcommand{\loosesym}{\uparrow} % marker for loose oracle access
\newcommand{\LooseUP}{\classname{PromiseUP}_{\loosesym}}
\newcommand{\LooseUSigma}[1]{\classname{PromiseU\Sigma}_{#1,\loosesym}}
\newcommand{\LooseUH}{\classname{PromiseUH}_\loosesym}

\newcommand{\PromiseBQP}{\classname{PromiseBQP}}
\newcommand{\PromiseBPP}{\classname{PromiseBPP}}
\newcommand{\LooseBPP}{\classname{PromiseBPP}_{\loosesym}}
\newcommand{\BPdot}{\ensuremath{\mathsf{BP}\cdot}}
\newcommand{\QMA}{\classname{QMA}}
\newcommand{\QCMA}{\classname{QCMA}}
\newcommand{\QCPH}{\classname{QCPH}}
\newcommand{\QCSigma}[1]{\classname{QC\Sigma_{#1}}}
\newcommand{\comp}{\mathrm{Comp}} %Completion
\newcommand{\GapP}{\classname{GapP}}
\newcommand{\SPP}{\classname{SPP}}
\newcommand{\SBP}{\classname{SBP}}
\newcommand{\bqpqpoly}{\classname{PromiseBQP}_{\mathrm{/qpoly}}}
\newcommand{\AWPP}{\classname{AWPP}}
\newcommand{\PromiseAWPP}{\classname{PromiseAWPP}}
\newcommand{\PromiseAPP}{\classname{PromiseAPP}}
\newcommand{\PSPACE}{\classname{PSPACE}}
\newcommand{\Pclass}{\classname{P}}
\newcommand{\PromiseP}{\classname{PromiseP}}
\newcommand{\PPoly}{\classname{PromiseP}_{\mathrm{/poly}}}
\newcommand{\PostBQPstar}{\classname{PromisePostBQP^*}}
\newcommand{\PromiseYQPstar}{\classname{PromiseYQP^*}}
\newcommand{\YQPstar}{\classname{YQP^*}}
\newcommand{\CH}{\classname{CH}}
\newcommand{\CHi}[1]{\classname{C_{#1}P}}
\newcommand{\PostBQP}{\classname{PostBQP}}
\newcommand{\NP}{\classname{NP}}
\newcommand{\coNP}{\classname{coNP}}
\newcommand{\APP}{\classname{APP}}

\newcommand{\yes}{\mathrm{yes}}
\newcommand{\no}{\mathrm{no}}
\newcommand{\qthr}{\ensuremath{\mathrm{QThr}}\xspace}
\newcommand{\thr}{\ensuremath{\mathrm{Thr}}\xspace}
\newcommand{\qacc}{\ensuremath{\mathrm{QAcc}}\xspace}
\newcommand{\qcirc}{\ensuremath{\mathrm{QCirc}}\xspace}
\newcommand{\tqbf}{\ensuremath{\mathrm{TQBF}}\xspace}
\newcommand{\calD}{\ensuremath{\mathcal{D}}\xspace}
\newcommand{\simyes}{\calD_{n,\yes}}
\newcommand{\simno}{\calD_{n,\no}}
\newcommand{\addr}{\ensuremath{\mathrm{addr}}\xspace}
\newcommand{\wt}{\ensuremath{\mathrm{wt}}\xspace}

\newcommand{\calC}{\ensuremath{\mathcal{C}}}
\newcommand{\calQ}{\ensuremath{\mathcal{Q}}}
\newcommand{\calO}{\ensuremath{\mathcal{O}}}
\newcommand{\calA}{\ensuremath{\mathcal{A}}}
\newcommand{\calB}{\ensuremath{\mathcal{B}}}
\newcommand{\calT}{\ensuremath{\mathcal{T}}}
\newcommand{\calM}{\ensuremath{\mathcal{M}}}
\DeclareMathOperator{\E}{\mathbb{E}}
\newcommand{\poly}{\ensuremath{\mathrm{poly}}}
\newcommand{\dom}{\ensuremath{\mathrm{Dom}}}
\DeclareMathOperator{\Id}{\mathbbm{1}}
\DeclarePairedDelimiter\floor{\lfloor}{\rfloor}
\DeclarePairedDelimiter\ceil{\lceil}{\rceil}

\newmdenv[
  topline=false,
  bottomline=false,
  rightline=false,
  linewidth=3pt,
  linecolor=black!20,
  innerleftmargin=10pt,
  innerrightmargin=0pt,
  innertopmargin=2pt,
  innerbottommargin=2pt,
  skipabove=6pt,
  skipbelow=0pt,
]{algbox}

\title{Promises should be taken seriously:\\ On relativization with promise problems}
\author{David Miloschewsky\thanks{Department of Computer Science, Stony Brook University, USA.  \\ Email: \texttt{\{dmiloschewsk, supartha\}@cs.stonybrook.edu}.} \and Supartha Podder\footnotemark[1] \and  Dorian Rudolph\thanks{Department of Computer Science and Institute for Photonic Quantum Systems (PhoQS),
Paderborn University, Germany. Email: \texttt{dorian.rudolph@upb.de}.}}
\date{}

\begin{document}

\maketitle

\begin{abstract}
    Relativization is concerned with comparing computational models with black-box access to an oracle. For promise problems, black-box access is not canonical due to inputs outside of the promise being unconstrained. We study two semantics for such access. Under \emph{robust} queries, a machine must correctly answer regardless of the completion of the problem,, while \emph{loose} access requires that the internal choices of a machine do not change based on off-promise queries. We use both notions to study the consequences of the choice of model.

    Our first result separates the language and promise settings. Namely, we construct an oracle $O$ such that,
    \begin{align*}
        \mathsf{P}^O = \mathsf{BQP}^O = \mathsf{AWPP}^O \text{, but } \mathsf{PromiseBQP}^O \not\subseteq \mathsf{PromiseP}^O_{\mathsf{/poly}}.
    \end{align*}
    In particular, $\mathsf{BPP}^O = \mathsf{BQP}^O$, but $\mathsf{PromiseBQP}^O \neq \mathsf{PromiseBPP}^O$, showing that results for languages need not transfer to promises. We show this by embedding an instance of Simon's problem into an oracle which is Cohen-generic relative to $\mathsf{PSPACE}$ oracle.
    Next, we use loose queries to strengthen the upper bound on the Quantum-Classical Polynomial Hierarchy from $\mathsf{P}^{\mathsf{PP}^{\mathsf{PP}}}$ to the following,
    \begin{align*}
        \mathsf{QCPH} \subseteq \mathsf{BP\cdot PP} \subseteq \mathsf{PromiseBPP}^{\mathsf{PP}}.
    \end{align*}
    The same proof also shows $\mathsf{PP}^\mathsf{PromiseBQP} = \mathsf{PP}$. Additionally, we show that $\mathsf{PromiseBQP}$, even when given quantum advice, is self-low under robust queries.

    Finally, we exhibit an obstruction to transferring language-level counting results to promise classes. Although $\mathsf{AWPP}$ and $\mathsf{APP}$ are low for $\mathsf{PP}$, a corresponding promise analogue would collapse the counting hierarchy as $\mathsf{GapP} \subseteq \mathsf{FP}^{\mathsf{PromiseAWPP}}$. This motivates the introduction of $\mathsf{PromisePostBQP^*}$, which restricts $\mathsf{PostBQP}$ to input-indepencent postselection. By showing that it is low for \PP, we obtain $\mathsf{PP}^{\mathsf{PromiseYQP^*}} = \mathsf{PP}$. Hence our results show that for semantic complexity classes, the treatment of off-promise queries is essential.
\end{abstract}

\section{Introduction}

The study of complexity classes is usually formulated in terms of languages, sets which represent total Boolean functions. Relativization traditionally equips machines with a language oracle, so that every query has an answer. This convention underlies both the classical and standard quantum oracle models~\cite{BGS75,BV97}. Promise problems generalize languages by specifying two disjoint sets of yes and no instances whose union is called the promise, but say nothing about the remaining off-promise inputs. Even, Selman, and Yacobi introduced promise problems and immediately recognized that the same relationships which hold for languages need not hold for promise problems~\cite{ESY84}. 

In particular, they show that the existence of an $\NP$-hard promise problem $\Pi \in \classname{PromiseNP}\cap \classname{PromisecoNP}$ does not necessarily imply that $\NP = \coNP$, as a reduction may query $\Pi$ outside of its promise. On the other hand, if the reduction is forbidden from querying off-promise, the standard argument goes through. However, this means that the reduction must guarantee that every query lies inside a promise, even if membership may be difficult to decide. To complicate matters further, when considering the quantum setting, an algorithm may place promise and off-promise queries in the same superposition~\cite{BV97,NM01}. Thus, one must carefully decide the exact conditions under which promise problems may be queried.

As discussed in~\cite{Gol06}, promise problems naturally capture the power of semantic complexity classes. For example, a \PromiseBQP machine is able to distinguish inputs whose acceptance probability is at least $\tfrac{2}{3}$ from those with acceptance probability at most $\tfrac{1}{3}$, but is allowed to answer arbitrarily on inputs in the gap. Let $\Pi = (\Pi_\yes, \Pi_\no)$ be such a promise problem. We say that a language $L$ is a completion of $\Pi$ if it contains $\Pi_\yes$ and is disjoint from $\Pi_\no$. One may be tempted to argue that for any promise problem, we may construct a completion that preserves the computational power of the problem. However, the following relativized example shows that need not be true.

\begin{observation}\label{obs:no_valid_completions}
    There exists an oracle $O$ such that, for every $\PromiseBQP^O$-complete promise problem $\Pi$ and every completion $A$ of $\Pi$, $A \notin \BQP^O$.
\end{observation}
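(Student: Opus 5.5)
The plan is to reduce the observation to the first main result announced in the abstract: an oracle $O$ with $\Pclass^O = \BQP^O$ (indeed $=\AWPP^O$) but $\PromiseBQP^O \not\subseteq \PPoly^O$. We take this same $O$ and argue by contradiction. Suppose some $\PromiseBQP^O$-complete promise problem $\Pi$ has a completion $A \in \BQP^O$. Since $\BQP^O = \Pclass^O$, we get $A \in \Pclass^O$. We will show this puts every promise problem of $\PromiseBQP^O$ into $\PromiseP^O$, which contradicts the separation, because $\PromiseP^O \subseteq \PPoly^O$.

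The key step is to use completeness correctly. We take completeness in the standard Karp sense for promise classes. For every $\Gamma \in \PromiseBQP^O$ there is a polynomial-time map $f$ with $f(\Gamma_\yes) \subseteq \Pi_\yes$ and $f(\Gamma_\no) \subseteq \Pi_\no$. Then $x \mapsto [f(x) \in A]$ solves $\Gamma$ on its promise, because $A$ contains $\Pi_\yes$ and avoids $\Pi_\no$. Since $f$ is polynomial-time and $A \in \Pclass^O$, this decision procedure is a $\Pclass^O$ machine. Hence $\Gamma \in \PromiseP^O$. The reduction never needs $A$'s answers off the promise of $\Pi$: every $f(x)$ with $x$ in $\Gamma$'s promise already lies in $\Pi$'s promise. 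So no issue about off-promise queries arises.

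The main obstacle is the choice of reducibility in the definition of completeness. If $\PromiseBQP^O$-completeness allowed Cook (oracle) reductions, we would also need the reduction to behave correctly when its queries to $\Pi$ are answered by an arbitrary completion. This is the robust-query semantics the paper introduces. Under it the argument still works: robustness means the reduction is correct for every completion, in particular for $A$. Replacing oracle calls to $A$ by the $\Pclass^O$ decider then gives $\Pclass^{\Pclass^O} = \Pclass^O$. Substituting a $\BQP^O$ decider directly would also be fine, since $\BQP$ is self-low relative to oracles. I would state that completeness is meant under Karp or robust reductions and handle both cases in a sentence. For non-robust Cook reductions the observation could fail vacuously or be false, so I would note that this notion is excluded.

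Finally, for self-containment: if the main theorem is proved only later, I would cite it forward. The observation then follows immediately, since a $\BQP^O$ completion of a complete problem would collapse $\PromiseBQP^O$ into $\PromiseP^O \subseteq \PPoly^O$.
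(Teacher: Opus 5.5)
Your proof is correct, but it takes a different route from the paper's. The paper uses the Fortnow--Rogers oracle, relative to which $\BQP^O$ has no complete language. For every $L \in \BQP^O$, the Karp reduction from $(L,\overline{L})$ to $\Pi$ lands inside the promise of $\Pi$, so a completion $A \in \BQP^O$ would itself be a $\BQP^O$-complete language. You instead take the oracle of \cref{thm:promise_language_difference} and derive the contradiction from $\BQP^O = \Pclass^O$ together with $\PromiseBQP^O \not\subseteq \PPoly^O$. The key step is the same in both arguments: a Karp reduction keeps promise inputs inside the promise of $\Pi$, so any completion answers them correctly.

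The paper's route is short and depends only on an external, older result, which suits an observation placed in the introduction as motivation. Yours forward-references the paper's much heavier main construction (Cohen genericity plus the direct-product bound). That is legitimate, since the proof of that theorem never uses the observation.

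In exchange, your version is more informative. The phenomenon persists for an oracle where $\BQP^O = \Pclass^O$, which certainly has complete languages, so it is not an artifact of $\BQP^O$ lacking them. Moreover, the same argument shows that no completion of a $\PromiseBQP^O$-complete problem is even decidable in $\Pclass^O$ with polynomial advice, which the Fortnow--Rogers oracle does not give. Your digression on Cook reductions can be dropped: the paper, like the standard notion, uses Karp reductions, and your Karp-case argument already suffices.
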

\begin{proof}
    Fortnow and Rogers show that there exists an oracle $O$ with respect to which $\BQP^O$ does not have a complete language~\cite{FR99}. Let $\Pi$ be a $\PromiseBQP^O$-complete promise problem (such as evaluating a quantum circuit with access to $O$). Consider some arbitrary language $L\in \BQP^O$. Then $(L, \{0,1\}^* \setminus L)\in \PromiseBQP^O$, so completeness of $\Pi$ gives a polynomial-time map $f$ with access to $O$ from $L$ to $\Pi$ whose output always lies in $\Pi_\yes \cup \Pi_\no$. Choose any completion $A$ of $\Pi$. Because $A$ agrees with $\Pi$ on the promise, $x\in L \iff f(x)\in A$. If $A\in \BQP^O$, this would make $A$ a $\BQP^O$-complete language, contradicting the choice of oracle $O$.
\end{proof}

Consequently, choosing a completion and assuming it lies in the corresponding language class is not a sound argument. Hence, when considering queries to promise problems, one cannot simply fix a completion without specifying how off-promise queries are treated.

\begin{figure}
    \centering
    \begin{tikzpicture}[
        x=1cm,
        y=1cm,
        font=\small,
        panel/.style={draw=black!38, rounded corners=2pt, line width=.55pt},
        oracle/.style={minimum height=.58cm, inner sep=0pt, font=\scriptsize},
        machine/.style={
            draw=black!55,
            rounded corners=2pt,
            minimum width=1.05cm,
            minimum height=.62cm,
            fill=black!2,
            font=\footnotesize
        },
        query/.style={
            -{Latex[length=1.8mm,width=1.1mm]},
            line width=.65pt
        },
        good/.style={
            draw=ForestGreen!65!black,
            fill=ForestGreen!15,
            line width=.45pt
        },
        bad/.style={
            draw=BrickRed!55!black,
            fill=BrickRed!9,
            line width=.45pt
        },
        celltext/.style={font=\scriptsize},
        headline/.style={font=\bfseries, anchor=north west}
    ]

        % More compact panel frames.
        \foreach \x in {0,4.97,9.94}
            \draw[panel] (\x,0) rectangle ++(4.72,3.75);

        % ------------------------------------------------------------
        % Smart
        % ------------------------------------------------------------
        \begin{scope}[shift={(0,0)}]
            \node[headline] at (.22,3.53) {Smart};

            \node[
                oracle,
                draw=ForestGreen!65!black,
                fill=ForestGreen!14,
                minimum width=1.20cm
            ] (syes) at (.92,2.50) {$\Pi_{\yes}$};

            \node[
                oracle,
                draw=black!45,
                dashed,
                fill=black!5,
                minimum width=1.62cm
            ] (soff) at (2.36,2.50) {off-promise};

            \node[
                oracle,
                draw=BrickRed!65!black,
                fill=BrickRed!11,
                minimum width=1.20cm
            ] (sno) at (3.80,2.50) {$\Pi_{\no}$};

            \node[machine] (sm) at (2.36,.82) {$M$};

            \draw[query]
                (sm.north west) to[out=135,in=-90] (syes.south);

            \draw[query]
                (sm.north east) to[out=45,in=-90] (sno.south);

            \draw[query, dashed, black!35]
                (sm.north) -- (soff.south);

            \draw[BrickRed!85!black, line width=1.15pt]
                ($(sm.north)!0.56!(soff.south)+(-.18,-.18)$) --
                ($(sm.north)!0.56!(soff.south)+(.18,.18)$);

            \draw[BrickRed!85!black, line width=1.15pt]
                ($(sm.north)!0.56!(soff.south)+(-.18,.18)$) --
                ($(sm.north)!0.56!(soff.south)+(.18,-.18)$);
        \end{scope}

        % ------------------------------------------------------------
        % Robust access
        % ------------------------------------------------------------
        \begin{scope}[shift={(4.97,0)}]
            \node[headline] at (.22,3.53) {Robust};

            \node[font=\scriptsize, text=black!65]
                at (2.44,3.02) {completion};

            \node[font=\scriptsize, rotate=90, text=black!65]
                at (.42,1.74) {random string};

            \foreach \j/\lab in {0/A_1,1/A_2,2/A_3}
                \node[font=\scriptsize]
                    at (1.72+.72*\j,2.77) {$\lab$};

            \foreach \i/\lab in {0/r_1,1/r_2,2/r_3}
                \node[font=\scriptsize, anchor=east]
                    at (1.25,2.30-.56*\i) {$\lab$};

            % Each completion has a 2/3 fraction of successful strings,
            % but the successful strings depend on the completion.
            \foreach \i/\j in {
                0/0,0/1,
                1/0,1/2,
                2/1,2/2
            } {
                \draw[good]
                    (1.37+.72*\j,2.02-.56*\i)
                    rectangle ++(.70,.56);

                \node[celltext, text=ForestGreen!45!black]
                    at (1.72+.72*\j,2.30-.56*\i)
                    {$\checkmark$};
            }

            \foreach \i/\j in {0/2,1/1,2/0} {
                \draw[bad]
                    (1.37+.72*\j,2.02-.56*\i)
                    rectangle ++(.70,.56);

                \node[celltext, text=BrickRed!70!black]
                    at (1.72+.72*\j,2.30-.56*\i)
                    {$\times$};
            }

            \node[font=\scriptsize] at (2.36,.36)
                {$\forall A\;
                  \Pr_r[M^A\text{ succeeds}]
                  \geq \tfrac{2}{3}$};
        \end{scope}

        % ------------------------------------------------------------
        % Loose access
        % ------------------------------------------------------------
        \begin{scope}[shift={(9.94,0)}]
            \node[headline] at (.22,3.53) {Loose};

            \node[font=\scriptsize, text=black!65]
                at (2.44,3.02) {completion};

            \node[font=\scriptsize, rotate=90, text=black!65]
                at (.42,1.74) {random string};

            \foreach \j/\lab in {0/A_1,1/A_2,2/A_3}
                \node[font=\scriptsize]
                    at (1.72+.72*\j,2.77) {$\lab$};

            \foreach \i/\lab in {0/r_1,1/r_2,2/r_3}
                \node[font=\scriptsize, anchor=east]
                    at (1.25,2.30-.56*\i) {$\lab$};

            % One common 2/3 fraction succeeds for every completion.
            \draw[
                ForestGreen!55!black,
                fill=ForestGreen!4,
                rounded corners=1pt,
                line width=.75pt
            ]
                (1.33,1.43) rectangle ++(2.22,1.15);

            \foreach \i in {0,1} {
                \foreach \j in {0,1,2} {
                    \draw[good]
                        (1.37+.72*\j,2.02-.56*\i)
                        rectangle ++(.70,.56);

                    \node[celltext, text=ForestGreen!45!black]
                        at (1.72+.72*\j,2.30-.56*\i)
                        {$\checkmark$};
                }
            }

            \foreach \j in {0,1,2} {
                \draw[bad]
                    (1.37+.72*\j,.90)
                    rectangle ++(.70,.56);

                \node[celltext, text=BrickRed!70!black]
                    at (1.72+.72*\j,1.18)
                    {$\times$};
            }

            \node[
                font=\scriptsize,
                text=ForestGreen!45!black,
                anchor=west
            ] at (3.64,2.02) {$2/3$};

            \node[font=\scriptsize] at (2.36,.36)
                {$\Pr_r[
                    \forall A:\,
                    M^A\text{ succeeds}
                  ]\geq\tfrac23$};
        \end{scope}

\end{tikzpicture}
    \caption{Three different types of oracle access to promise problems.}\label{fig:visualize_query_types}
\end{figure}

There are several alternative ways to define oracle access to promise problems. \emph{Smart} oracle access means that the machine may not query outside of the oracle's promise~\cite{GS88}, meaning that the machine is \say{smart} enough to avoid meaningless queries. However, this can be a strong requirement. First, for queries in superposition, this would forbid placing any amplitude on off-promise queries, even when the total weight on these queries is negligible and cannot affect the final result. Second, a classical reduction may not have an efficient way to recognize whether a query lies inside the promise. For promise problems defined by acceptance thresholds, a query close to the threshold may require resolving a fine-grained acceptance probability comparison. These issues motivate allowing for off-promise queries while requiring correctness for any completion.

Under \emph{robust} oracle access, we say that a single machine solves the problem regardless of the completion of the oracle promise problem.\footnote{This type of access was called uniform in~\cite{ESY84}. Its probabilistic analogue was considered in~\cite{BF99}.} The machine's behavior may depend on the completion and not all computational routes may end in the same result. \emph{Loose} access strengthens this requirement by switching the order of quantifiers. Once we fix some internal choice of the machine, such as a random string or a witness, the same choice must work for every completion~\cite{CR08,HS15}. The relevant internal choice depends on the model, so for each machine one must exactly specify what this means.

For example, consider a $\BPP^\Pi$ machine $M$ where $\Pi$ is a promise problem oracle and $x$ some yes-input. If $\Pi$ is given under loose access, we require that for at least $\tfrac{2}{3}$ of the random strings $r$, for any completion $A$ of $\Pi$, we have $M^A(x,r) = 1$. On the other hand, under robust access we require simply that for each completion, at least $\tfrac{2}{3}$ of the random strings accept, even if the intersection of the accepting random strings $r$ across all completions is empty. A visualization of the models may be found in \cref{fig:visualize_query_types}.

\subsection{Our results}

We study how various relativization results behave when one uses promise problems instead of languages. First, we show that relativizing results for languages need not transfer to promise problems.

\begin{result}[\cref{thm:promise_language_difference}]\label{res:oracle_separation}
    There exists an oracle $O$ such that,
    \begin{align*}
        \Pclass^O = \BQP^O = \AWPP^O \text{, but } \PromiseBQP^O \not\subseteq \PPoly^O.
    \end{align*}
\end{result}
As the standard result $\PromiseBPP \subseteq \PPoly$~\cite{Adl78} relativizes, this answers a question posed by Nisan~\cite{Aar10} asking whether there exists an oracle under which $\BPP= \BQP$ but $\PromiseBQP \neq \PromiseBPP$. This is interesting when compared to the result that $\classname{FBPP} = \classname{FBQP} \iff \classname{SampP} = \classname{SampBQP}$~\cite{Aar14} as well as the phenomenon in quantum query complexity where no total Boolean function can attain a superpolynomial quantum query complexity speedup over classical algorithms, but promise functions can~\cite{BBC+01,ABK16,BDC+20, PYY25,HKMP26}.

The oracle $O$ is constructed by combining a \PSPACE-complete language with a Cohen-generic set, a combination which was shown in~\cite{FFKL03} to collapse the language classes in~\cref{res:oracle_separation}. Specifically, we encode independent instances of Simon's problem~\cite{Sim97} into the Cohen-generic set and argue using the direct-product bound of~\cite{Dru12} that machines with advice cannot solve all instances simultaneously.

Next, we use the loose oracle access to improve the upper bound on \QCPH, the Quantum-Classical Polynomial Hierarchy. Toda's theorem shows that $\PH \subseteq \Pclass^\PP$~\cite{Tod91}. However, the same upper bound for the hierarchical generalization of \QCMA has been elusive. While various properties such as a collapse theorem, error-reduction, or level-wise oracle separations have been developed~\cite{AGKR24,ABD25}, the best known upper bound $\Pclass^{\PP^\PP}$ has not changed since the class was introduced in~\cite{GSS+22}.

\begin{result}[\cref{thm:qcph_bpp_pp}]\label{res:qcph}
    We improve the upper bound on \QCPH to
    \begin{align*}
        \QCPH \subseteq \BPdot\PP \subseteq \PromiseBPP^\PP.
    \end{align*}
\end{result}
Here, $\BPdot\PP$ denotes the class of promise problems with a randomized polynomial-time many-one reduction to \PP. The primary difficulty in improving the bound is the fact that \QCPH is a promise class whose quantified strings are evaluated using \PromiseBQP computation. As was shown in~\cref{obs:no_valid_completions}, we cannot simply replace the \PromiseBQP computation by an arbitrary completion assumed to be in \BQP. Thus, we instead adapt Toda's theorem to work directly with promise problems.

The proof is split into four parts. First, following the lineage of Valiant-Vazirani~\cite{VV86} and its quantum extension~\cite{ABOBS22}, we show that \QCPH reduces to randomized computation with loose access to the unambiguous hierarchy, which itself has loose access to \PromiseBQP, extending the work of~\cite{HS15}. Note that loose access to the unambiguous hierarchy means that the unique accepting witness is the same witness regardless of the completion. Second, we give an \SPP characterization of the loose unambiguous hierarchy. Third, we compress the random seed of the loose-access computation into the input of a single query, which turns the result into a randomized many-one reduction. 

Last, we remove the \PromiseBQP oracle by sampling an amplified \GapP representation and using the closure properties of \GapP to show that its expectation is in \PP. As an intermediate result, we also show that \PromiseBQP is low for \PP, paralleling the analogous result for languages~\cite{FR99}. We remark that the first three parts do not use any property of quantum computation. Instead, we prove them for the hierarchy obtained by alternately maximizing and minimizing an arbitrary function $f:\{0,1\}^*\to [0,1]$, given loose access to a gapped threshold oracle for $f$, and obtain the bound on \QCPH as a corollary.

Continuing with queries to \PromiseBQP, we notice that the proof of self-lowness of \BQP~\cite{BBBV97} does not immediately extend to \PromiseBQP. The issue is that any simulation of \PromiseBQP may have acceptance probability extremely close to $\tfrac{1}{2}$ on off-promise queries, even after boosting. We overcome this by randomizing the threshold used to simulate each query.
\begin{result}[\cref{thm:promise_bqp_selflow,cor:promise_bqpqpoly_selflow}]
    Under robust queries,
    \begin{align*}
        &\PromiseBQP^\PromiseBQP = \PromiseBQP\text{, and} &\bqpqpoly^{\bqpqpoly} = \bqpqpoly.
    \end{align*}
\end{result}
The proof technique is essentially a quantization of the approach for \PromiseBPP~\cite{Gol11}. We partition the off-promise completeness-soundness interval into $\poly(n)$ sub-intervals and show that if we randomly assign a sub-interval to each distinct query, then unless the selected sub-interval contains the query's acceptance probability, we can determine whether the probability lies above or below it while agreeing with the promise queries. As there is at most one bad sub-interval, the expected query weight on it is small. A hybrid argument then shows that the total simulation error is $1/\poly(n)$.

Next, we show another instance where queries to promise problems behave differently from languages. Note that the relationship $\BQP \subseteq \AWPP \subseteq \APP$~\cite{FR99,Fen03} also holds for the corresponding promise classes.
\begin{result}[\cref{lem:gapp_in_fp_promiseawpp,lem:ch_collapse}]\label{res:gapp_collapse}
    Under robust queries,
    \begin{align*}
        \GapP \subseteq \classname{FP}^{\PromiseAWPP}.
    \end{align*}
    Therefore,
    \begin{align*}
        \PromiseAWPP \subseteq \bqpqpoly \implies \CH = \YQPstar.
    \end{align*}
\end{result}
Note that $\AWPP$ is a natural counting upper bound for \BQP, and placing $\PromiseAWPP$ in $ \PromiseBQP$ would imply the collapse in \cref{res:gapp_collapse}, providing us evidence that the two promise classes differ. Furthermore, this improves an analogous result with $\classname{PromiseSBQP}$ in~\cite{Kup15}. More importantly for our purposes, Fenner showed that $\PP^{\APP} = \PP$~\cite{Fen03}, whereas no analogous theorem is known for $\classname{PromiseAPP}$. Since $\PromiseAWPP \subseteq \classname{PromiseAPP}$ and $\GapP \subseteq \classname{FP}^{\PromiseAWPP}$, such a result would collapse the counting hierarchy $\CH$.

The proof in~\cite{Yir25} that $\YQPstar$ is low for \PP relies on the containment $\YQPstar \subseteq \APP$. While this route does not work for \PromiseYQPstar, we nevertheless prove the same lowness conclusion. To do so, we describe an intermediate class \PostBQPstar, which is essentially \PostBQP where the postselected state depends on the input length, but not the input itself. Note that it lies in $\classname{PromiseAPP}$ due to the characterization of~\cite{MN16}. Adapting the proof-technique that $\PromiseBQP$ is low for $\PP$, we show that $\PP^\PostBQPstar = \PP$ and thus $\PP^{\PromiseYQPstar} = \PP$.

\subsection{Discussion}

The results we describe above support the simple idea that promise classes need to be considered carefully and separately. While many results transfer, others do not. Our results suggest that the treatment of off-promise inputs should be made explicit whenever semantic classes are used as oracles.

There are several questions which we leave open. First, can \cref{res:qcph} be improved to $\Pclass^\PP$ in order to match Toda's theorem? Second, for which other classes can we show an analogue of \cref{res:oracle_separation}, or are there classes for which this is impossible? Third, can loose oracle access to other classes be used to prove new results?

\paragraph{AI Disclosure.} We used Chat-GPT 5.6 Sol and Fable 5.1 to assist with exploring and assessing research directions, identifying potential errors, simplifying approaches and assist in producing \cref{fig:visualize_query_types}. The authors are solely responsible for the mathematical content and final presentation.

\section{Preliminaries}\label{sec:prelims}

For arbitrary positive integers $n,m$ such that $n \leq m$, we use $[n]$ to denote $\{1,\dots,n\}$ and $[n,m]$ for $\{n,n+1,\dots,m\}$. For a set $\calA$, we use $\Id_{\calA}(x) = 1$ if $x\in \calA$ and $0$ otherwise.

A promise problem $\Pi = (\Pi_\yes, \Pi_\no)$ is a pair such that $\Pi_\yes \cap \Pi_\no = \emptyset$ and $\Pi_\yes\cup \Pi_\no \subseteq \{0,1\}^*$. In general, we call the inputs in $\Pi_\yes\cup \Pi_\no$ the promise. A completion $L$ of $\Pi$ is a language such that $\Pi_\yes \subseteq L$ and $\Pi_\no \cap L = \emptyset$. The set of all such languages is denoted by $\comp(\Pi)$. As mentioned above, there are multiple ways to define queries to promise problems. The standard way is the one below.

\begin{definition}[Robust promise-queries]\label{def:robust_promise_queries}
    For a promise problems $\Pi$, a complexity class $C$, we say that the problem $\Gamma \in C^{\Pi}$ under robust promise-queries if,
    \begin{align*}
        \exists \text{an oracle machine $M$ of type $C$ } \forall A\in \comp(\Pi):\; M^A \text{ solves } \Gamma
    \end{align*}
\end{definition}
Access to a quantum machine to a completion $A$ is done using the XOR oracle $\ket{x,y} \rightarrow \ket{x, y\oplus \Id_A(x)}$. Unless indicated otherwise, the oracle access is robust. For complexity classes $C,D$, we write $C^D = \cup_{\Pi \in D} C^\Pi$.

\subsection{Complexity Zoo}

In this paper, we use a variety of complexity classes. Let us define the key classes discussed. For a comprehensive list, see~\cite{Zoo}. When considering a \emph{promise} version of a complexity class, we will use $\mathrm{Promise}$ at the start of its label, except for \QCPH as this is its conventional name. We are assuming that all quantum circuits are using Hadamard and Toffoli gates.

\begin{definition}
    Let $c,s: \mathbb{N} \to [0,1]$ be polynomial-time computable functions such that $c(n)-s(n) \geq \tfrac{1}{\poly(n)}$. A promise problem $\Pi$ is in $\PromiseBQP_{c,s}$ if there exists a uniform polynomial-time family of polynomial-sized quantum circuits $\{V_{\abs{x}}\}$ such that,
    \begin{align*}
        x\in \Pi_\yes &\implies \Pr[V_n(x) = 1] \geq c(\abs{x})\\
        x\in \Pi_\no &\implies \Pr[V_n(x) = 1] \leq s(\abs{x}).
    \end{align*}
    Under standard error-reduction,
    \begin{align*}
        \PromiseBQP \coloneq \bigcup_{c-s \geq \frac{1}{\poly(\abs{x})}} \PromiseBQP_{c,s}   = \PromiseBQP_{\frac{2}{3},\frac{1}{3}}.
    \end{align*}
\end{definition}

A standard $\PromiseBQP$-complete promise problem is quantum circuit evaluation, which we denote $\qcirc$. In \qcirc, the input is a description of a quantum circuit and the output is deciding whether it returns $1$ with probability at least $\tfrac{2}{3}$ or at most $\tfrac{1}{3}$. Without loss of generality we assume that every $\PromiseBQP$ oracle is represented using $\qcirc$. Next, we define the complexity class \QCPH.

\begin{definition}[Quantum-Classical Polynomial Hierarchy]\label{def:qcph} Set $\QCSigma{0} = \PromiseBQP$. For $i\geq 1$, a promise problem $\Pi$ is in $\QCSigma{i}$ if there exists a polynomial $p$ and a uniform \PromiseBQP verifier $V$ such that, with each $w_j \in \{0,1\}^{p(\abs{x})}$, 
    \begin{align*}
        x\in \Pi_\yes &\implies \exists w_1 \forall w_2 \dots Q_i w_i \text{ s.t. } \Pr[V(x,w_1,\dots, w_i) = 1] \geq \frac{2}{3},\\
        x\in \Pi_\no &\implies \forall w_1 \exists w_2 \dots \overline{Q}_i w_i \text{ s.t. } \Pr[V(x,w_1,\dots, w_i) = 1] \leq \frac{1}{3},
    \end{align*}
    where $Q_i$ is $\exists$ for odd $i$ and $\forall$ for even $i$, and $\overline{Q}_i$ is the complementary quantifier to $Q_i$. Set
    \begin{align*}
        \QCPH = \bigcup_{i\geq 0} \QCSigma{i}. 
    \end{align*}
\end{definition}

\begin{definition}[$\GapP, \PP$ and \AWPP]\label{def:gapP_pp}
    A function $g:\{0,1\}^* \rightarrow \mathbb{Z}$ is in $\GapP$ if there exists a deterministic polynomial-time branch verifier $M$ whose outputs are in $\{-1,0,1\}$ and a polynomial $p$ such that for all inputs $x$,
    \begin{align*}
        g(x) = \abs{\{w\in \{0,1\}^{p(\abs{x})}: M(x,w) = 1 \}} - \abs{\{w\in \{0,1\}^{p(\abs{x})}: M(x,w) = -1 \}}.
    \end{align*}
    Equivalently, $\GapP$ consists of differences of two $\classname{\# P}$ functions~\cite{FFK94}. A language $L$ is in $\PP$ if there exists $g\in \GapP$ such that,
    \begin{align*}
        x\in L \iff g(x) > 0.
    \end{align*}
    The counting hierarchy is defined as $\CH = \cup_i \CHi{i}$ where $\CHi{0} = \Pclass$ and $\CHi{i+1} = {\PP}^{\CHi{i}}$. 
    A language $L$ is in $\AWPP$ if there exists a $g\in \GapP$ and $q\in \poly$ such that,
    \begin{align*}
        x\in L &\implies \frac{2}{3} \leq \frac{g(x)}{2^{q(\abs{x})}} \leq 1,\\
        x\notin L &\implies 0 \leq \frac{g(x)}{2^{q(\abs{x})}} \leq \frac{1}{3}.
    \end{align*}
    A language $L$ is in $\APP$ if for every $r\in\poly$, there exist $g,d\in \GapP$ with $d(1^n) > 0$ such that,
    \begin{align*}
        x\in L &\implies 1 - 2^{-r(\abs{x})} \leq \frac{g(x)}{d(1^{\abs{x}})} \leq 1,\\
        x\notin L &\implies 0 \leq \frac{g(x)}{d(1^{\abs{x}})} \leq 2^{-r(\abs{x})}.
    \end{align*}
\end{definition}

\PromiseAWPP and \PromiseAPP are defined as \AWPP and \APP, except that the conditions are only imposed on $\Pi_\yes$ and $\Pi_\no$.
Note that access to an oracle $O$ is defined by allowing the non-deterministic machine to query $O$. The following closure properties will be useful.

\begin{lemma}[Closure properties of $\GapP$; In~\cite{FFK94}]\label{lem:closure_properties_gapp}
    For any $f,g\in \GapP$,
    \begin{itemize}
        \item $f+g$, $f-g$ and $f\cdot g$ are in $\GapP$.
        \item For a polynomial $p$ and $h(x,y) \in \GapP$,
        \begin{align*}
            \sum_{y\in \{0,1\}^{p(\abs{x})}} h(x,y) \in \GapP,\\
            \prod_{y\in [p(\abs{x})]} h(x,y) \in \GapP.
        \end{align*}
    \end{itemize}
    Furthermore, all statements above hold with respect to a fixed oracle $A$.
\end{lemma}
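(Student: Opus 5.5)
The statement is the closure lemma for $\GapP$ from~\cite{FFK94}. I would prove each item by building a branch verifier directly, and check at every step that the construction uses the oracle $A$ only as a black box.

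\emph{Sum and difference.} Let $M_f$ and $M_g$ be branch verifiers with witness lengths $p_f$ and $p_g$, and let $p=\max(p_f,p_g)$. First I pad each verifier to witness length $p$: a padded witness $w$ gives output $M_f(x,w')$ if its trailing bits are all zero (where $w'$ is its prefix) and $0$ otherwise. This leaves the gap unchanged. For $f+g$ I then use witnesses $bw$ with $b\in\{0,1\}$ and run $M_f$ or $M_g$ according to $b$. For $-g$ I negate the output of $M_g$. Combining the two gives $f-g$.

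\emph{Product.} The verifier for $f\cdot g$ takes a witness $(w_1,w_2)$ and outputs $M_f(x,w_1)\cdot M_g(x,w_2)\in\{-1,0,1\}$. Summing over all pairs, the count of outputs equal to $1$ minus the count equal to $-1$ is $\sum_{w_1,w_2}M_f(x,w_1)M_g(x,w_2)=f(x)g(x)$, because the gap is exactly $\sum_w M(x,w)$.

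\emph{Exponential sum.} For $\sum_{y} h(x,y)$ the verifier takes a witness $(y,w)$ and outputs $M_h((x,y),w)$. Since $|y|=p(|x|)$, the witness length of $M_h$ on input $(x,y)$ is a polynomial in $|x|$, so the new verifier still runs in polynomial time.

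\emph{Polynomial product.} For $\prod_{y\in[p(|x|)]}h(x,y)$ the verifier takes a witness $(w_1,\dots,w_{p})$ and outputs $\prod_{i} M_h((x,i),w_i)$. This is a product of $p(|x|)$ values in $\{-1,0,1\}$, so it is computable in polynomial time. Expanding $\prod_i\sum_{w_i}M_h((x,i),w_i)$ shows that the total gap is the desired product.

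\emph{Relativization.} All of these constructions invoke the underlying verifiers as subroutines on single branches. So if each $M$ is an oracle machine with access to $A$, the combined machine also only needs access to $A$, and every identity above holds pointwise for the fixed $A$.

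The only real subtlety, which I expect to be the main obstacle, is keeping witness lengths uniform. It comes up in the padding step and in the product step, where the witness lengths for different $i$ must all be bounded by the same polynomial. In both places I handle it with zero-padding that sends invalid branches to output $0$, which never changes a gap.
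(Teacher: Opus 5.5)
Your constructions are correct. The paper gives no proof of its own here and only cites \cite{FFK94}, whose argument is the same direct one you give: branching on a selector bit for sums and exponential sums, negating outputs for differences, and running the verifiers in sequence while multiplying their $\{-1,0,1\}$ outputs for products. Your remark that the combined machine is fixed before $A$ is exactly the uniform form the paper later relies on in \cref{lem:exact_spp_repr}. The only cosmetic fix is to pad to witness length $p_f+p_g$ rather than $\max(p_f,p_g)$, so that the witness length is literally a polynomial as \cref{def:gapP_pp} requires.
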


It has been shown that $\GapP$ can compute problems in $\PromiseBQP$.

\begin{lemma}[In~\cite{FR99}]\label{lem:bqp_gapP_representation}
    Let $\Pi \in \PromiseBQP$. There exists some nondecreasing polynomial $p$ and a function $H(x,1^k) \in \GapP$ such that $0\leq H(x,1^k) \leq 2^{p(\abs{x}+k)}$ and,
    \begin{align*}
        x\in \Pi_{\yes} &\implies \frac{H(x,1^k)}{2^{p(\abs{x}+k)}} \geq 1-2^{-k},\\
        x\in \Pi_{\no} &\implies \frac{H(x,1^k)}{2^{p(\abs{x}+k)}} \leq 2^{-k}.
    \end{align*}
\end{lemma}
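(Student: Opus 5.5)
The plan is to follow the Fortnow--Rogers route: write the acceptance probability of an amplified circuit as a normalized GapP function. Fix $\Pi \in \PromiseBQP$, realized by the uniform family $\{V_n\}$ over Hadamard and Toffoli gates with completeness $\tfrac23$ and soundness $\tfrac13$.

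\textbf{Step 1 (amplification).} For a parameter $k$, build the circuit $V'_{x,k}$. It runs $m = O(k)$ independent copies of $V_{\abs{x}}(x)$ and computes the majority of the output bits reversibly with Toffoli gates. By a Chernoff bound, its acceptance probability is at least $1-2^{-k}$ on $\Pi_\yes$ and at most $2^{-k}$ on $\Pi_\no$. The size of $V'_{x,k}$ is a polynomial in $\abs{x}+k$. Padding with dummy Hadamard gates, we may assume the number of Hadamard gates is exactly $h = h(\abs{x}+k)$ for a nondecreasing polynomial $h$, and that the circuit is computable in time polynomial in $\abs{x}+k$ from $(x,1^k)$.

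\textbf{Step 2 (path sum).} Every Hadamard contributes a factor $\pm 1/\sqrt2$, and Toffoli gates are permutations. So the amplitude $\langle y|V'|0\rangle$ equals $2^{-h/2}\sum_{\text{paths}} (-1)^{\phi(\text{path})}$, where a path is a string in $\{0,1\}^h$ and $\phi$ is polynomial-time computable from $(x,1^k,\text{path},y)$. Hence $\alpha(x,k,y) \coloneq 2^{h/2}\langle y|V'|0\rangle$ is a GapP function of $(x,1^k,y)$. The branch verifier recomputes the classical trajectory and outputs $\pm1$, or $0$ if the path does not end in $y$.

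The acceptance probability is $\sum_{y:\,y_{\mathrm{out}}=1} \abs{\langle y|V'|0\rangle}^2$. All amplitudes are real because the gate set is real. Therefore
\begin{align*}
    H(x,1^k) \coloneq \sum_{y:\, y_{\mathrm{out}} = 1} \alpha(x,k,y)^2 = 2^{h}\Pr[V'_{x,k} \text{ accepts}].
\end{align*}
This is in GapP by \cref{lem:closure_properties_gapp}: products give $\alpha^2$, and exponential sums over $y$ of polynomial length are allowed. The restriction $y_{\mathrm{out}}=1$ is handled by multiplying by an indicator, which is itself polynomial-time and hence GapP.

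\textbf{Step 3 (normalization).} Set $p = h$. Then $H(x,1^k)/2^{p(\abs{x}+k)}$ equals the acceptance probability. This gives $0 \le H \le 2^{p}$ and the two promise bounds from Step 1.

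The main obstacle is ensuring that $h$ depends only on $\abs{x}+k$, rather than on $x$ or on $(\abs{x},k)$ separately. I would handle this by padding both the Hadamard count and the number of qubits up to a fixed nondecreasing polynomial bound in $\abs{x}+k$. Each padding Hadamard is applied twice to a fresh ancilla. This keeps the circuit unitary-equivalent and adds exactly the desired number of path bits. The normalization is unaffected, since each added $H$ gate contributes its factor $1/\sqrt2$ to the amplitude and its path bit to the sum, so the counting stays consistent.
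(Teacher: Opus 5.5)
Your proposal is correct and is essentially the standard Fortnow--Rogers argument that the paper invokes by citation rather than reproving: amplify by majority, then write the acceptance probability of a Hadamard--Toffoli circuit as a squared path-sum \GapP function divided by $2^{h}$. One small fix is needed. Padding with $HH$ pairs adds only an even number of path bits, so to reach an exact target count, use a single Hadamard on a fresh ancilla that is never read; as your normalization check shows, $H/2^{p}$ still equals the acceptance probability.
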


Notice that \cref{lem:bqp_gapP_representation} immediately implies that $\PromiseBQP \subseteq \PromiseAWPP$, which relativizes. Next, let us define the advice classes we will discuss in the manuscript.

\begin{definition}[Advice classes]
    Let $\calC$ be one of $\PromiseP, \PromiseBPP$ or $\PromiseBQP$. The promise problem $\Pi$ is in $\calC_{\mathrm{/poly}}$ if there exist some $p\in \poly$, a function $h: \mathbb{N} \to \{0,1\}^*$ and a machine $M$ of type $\calC$ such that $\abs{h(n)} \leq p(n)$ and for all $x\in \Pi_\yes\cup \Pi_\no$, $M$ solves $\Pi$ on input $(x,h(\abs{x}))$.

    The class $\bqpqpoly$ is defined analogously, except $h$ outputs a quantum state on $p(n)$ qubits.
\end{definition}

The following result on advice will be useful.

\begin{theorem}[Theorem 1.1 in~\cite{Dru12}]\label{thm:distribution_solving}
    Let $f$ be a possibly partial Boolean function and $\mu$ a distribution supported on its domain such that any $T$-query randomized algorithm has success probability of at most $1-\epsilon$. For $k\geq 1$ and any randomized algorithm $M$ making at most $\epsilon T k$ queries,
    \begin{align*}
        \Pr[\text{$M$ solves $k$ independent instances of $f$}] \leq (2^{\epsilon}(1-\epsilon))^k.
    \end{align*}
\end{theorem}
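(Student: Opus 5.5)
Since \cref{thm:distribution_solving} is quoted from~\cite{Dru12}, we only sketch how one could prove it. The idea is to replace the hard cap of $T$ queries per instance by a soft, multiplicative penalty. Then a single-instance bound can be multiplied across the $k$ instances even though $M$ interleaves its queries adaptively.

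\textbf{Reduction to deterministic algorithms.} By Yao's principle (averaging over the internal randomness), it suffices to fix a deterministic decision tree $M$ of depth at most $\epsilon T k$. We run $M$ on $(x_1,\dots,x_k)\sim\mu^{\otimes k}$ and write $q_i$ for the (random) number of queries $M$ makes into instance $i$, so that $\sum_i q_i \le \epsilon T k$.

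\textbf{Step 1: a single-instance penalised bound.} We aim to prove
\begin{align*}
    \E_{x\sim\mu}\!\left[\Id_{\mathrm{succ}}\cdot 2^{-\epsilon q/T}\right] \le 1-\epsilon
\end{align*}
for every randomized algorithm $R$ for $f$ that may use an unbounded but random number $q$ of queries. To show it, we split according to whether $R$ is stopped before its $T$-th query.
\begin{itemize}
    \item Stopping $R$ after $T$ queries and guessing gives a $T$-query algorithm, so by hypothesis it succeeds with probability at most $1-\epsilon$.
    \item Runs with $q\ge T$ are discounted by a factor $2^{-\epsilon}$.
\end{itemize}
A convexity argument is needed to show that trading extra queries for success probability never pays off beyond the factor $2^{\epsilon q/T}$. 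This is the step whose exact constants produce $2^{\epsilon}$. If the simple split does not close, we would first try a randomized truncation: stop at a uniformly random time, which linearises the penalty. This should yield the stated constants (possibly after adjusting the penalty base).

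\textbf{Step 2: tensorisation by embedding.} We prove by induction on $k$ that
\begin{align*}
    \E\!\left[\Id_{\text{all }k\text{ solved}}\cdot 2^{-\epsilon\sum_i q_i/T}\right]\le (1-\epsilon)^k .
\end{align*}
Condition on the whole run restricted to instances $2,\dots,k$. The algorithm can then be viewed as a randomized single-instance algorithm for $x_1$: it samples the other inputs itself from $\mu^{\otimes(k-1)}$, which are independent of $x_1$, and pays the penalty only for queries to $x_1$. Step 1 then bounds the factor coming from instance $1$.

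The subtle point, and the main obstacle, is adaptivity. The run on instances $2,\dots,k$ depends on the answers from $x_1$, so one cannot naively condition first. The fix is to expose the instances one at a time as a martingale: apply Step 1 to instance $1$ with the others simulated internally, then recurse on the residual algorithm, whose queries to $x_2,\dots,x_k$ are now weighted by the already-bounded factor.

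\textbf{Conclusion.} Since $\sum_i q_i\le \epsilon Tk$ deterministically, we have $2^{-\epsilon\sum_i q_i/T}\ge 2^{-\epsilon^2 k}$. Combining this with Step 2 gives a bound of the form $\Pr[\text{all solved}]\le 2^{\epsilon^2 k}(1-\epsilon)^k$. This is at most $(2^{\epsilon}(1-\epsilon))^k$ because $\epsilon^2\le \epsilon$ for $\epsilon\le 1$, which matches the statement.
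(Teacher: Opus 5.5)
\textbf{Step~1 is false, and the fix is a change of penalty base.} The paper takes this theorem from Drucker without proof, so your sketch has to stand on its own. It does not: Step~1 fails for the weight $2^{-\epsilon q/T}$, and the bound $2^{\epsilon^2 k}(1-\epsilon)^k$ you derive from it is also false.

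Take $f$ to be the parity of $T+1$ bits, with $T\ge 2$ and $\mu$ uniform. Every $T$-query algorithm succeeds with probability exactly $\tfrac12$, so $\epsilon=\tfrac12$. Reading all $T+1$ bits always succeeds, and its weighted value is $2^{-(T+1)/(2T)}\ge 2^{-3/4}>\tfrac12=1-\epsilon$. This contradicts Step~1.

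The final bound fails on the same example. On $k$ instances with budget $Tk/2$, solve $\lfloor Tk/(2T+2)\rfloor$ instances exactly and guess the rest. This succeeds with probability at least $2^{-2k/3-1}$, which beats your $2^{\epsilon^2 k}(1-\epsilon)^k=2^{-3k/4}$ once $k>12$.

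The underlying problem is that your discount for exceeding $T$ queries is only about $2^{-\epsilon}$, and $2^{-\epsilon}>1-\epsilon$. So no convexity or randomised-truncation argument can close Step~1. A warning sign was that your chain produced something strictly stronger than the cited theorem, which you then had to weaken via $\epsilon^2\le\epsilon$.

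The repair is to change the base, not the argument. Use the weight $2^{-q/T}$, a factor $\tfrac12$ per $T$ queries, which is what a blind guess costs for Boolean $f$. Then your simple split works as written. Let $R_T$ run $R$ and, just before a $(T+1)$-st query, output the more likely value of $f$ given the transcript. Whether $q>T$ is decided by the first $T$ answers and the coins, so this guess is right with conditional probability at least $\tfrac12$. Hence
\[
\E\bigl[\Id_{\mathrm{succ}}\,2^{-q/T}\bigr]\le\Pr[\mathrm{succ},\,q\le T]+\tfrac12\Pr[q>T]\le\Pr[R_T\text{ succeeds}]\le 1-\epsilon .
\]

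Step~2 is then a true statement, but your martingale wording hides the actual argument. Here is one way to make it precise for a deterministic tree.
\begin{itemize}
\item Given the transcript, the instances remain independent. Let $\pi_i$ be the posterior of $x_i$.
\item Define $V(\pi)=\sup_R\E_{x\sim\pi}[\Id_{\mathrm{succ}}2^{-q/T}]$ and $\Phi=\prod_i 2^{-q_i/T}V(\pi_i)$.
\item A query to $x_i$ cannot increase $\E\Phi$, since $V$ is a supremum over algorithms that may begin with that query.
\item At a leaf, $2^{-\sum_i q_i/T}\Pr[\text{all correct}\mid\text{leaf}]\le\Phi$, because $V(\pi_i)$ dominates the zero-query algorithm that outputs the leaf's answer.
\item The root value is $V(\mu)^k\le(1-\epsilon)^k$ by the corrected Step~1.
\end{itemize}
Finally, $\sum_i q_i\le\epsilon Tk$ gives $2^{-\sum_i q_i/T}\ge 2^{-\epsilon k}$. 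This is exactly where the factor $2^{\epsilon}$ in the theorem comes from.
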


Additionally, we will use the following promise class.

\begin{definition}[\PromiseYQPstar]\label{def:yqpstar}
    A promise problem $\Gamma$ is in \PromiseYQPstar if there exist two polynomial-time uniform quantum circuit families $\{A_n\}$ and $\{B_n\}$ such that,
    \begin{itemize}
        \item The \emph{advice} circuit $A_n$ inputs some state $\rho_0$ and produces some candidate advice state, outputs bit $b_{\mathrm{adv}} \in \{0,1\}$ and leaves a residual state $\rho_1$, conditioned on the observed value $b_{\mathrm{adv}}$. There exists some $\rho_0$ such that $\Pr[b_{\mathrm{adv}} = 1] \geq \tfrac{9}{10}$.  
        \item The \emph{evaluation} circuit $B_n$ takes in the input $x\in \{0,1\}^n$ and $\rho_1$, and outputs some bit $b_{\mathrm{out}}$. For any $\rho_0$ such that $\Pr[b_{\mathrm{adv}}=1] \geq \tfrac{1}{10}$ we have,
        \begin{align*}
            x\in \Gamma_\yes &\implies \Pr[b_{\mathrm{out}}=1| b_{\mathrm{adv}}=1] \geq \frac{9}{10},\\
            x\in \Gamma_\no &\implies \Pr[b_{\mathrm{out}}=1| b_{\mathrm{adv}}=1] \leq \frac{1}{10}.
        \end{align*}
    \end{itemize}
\end{definition}

We emphasize that $A_n$ in the definition above is independent of $x$, it only depends on the length $n$.

\begin{lemma}[Theorem 2.5 and Lemma 2.6 in~\cite{Yir25} based on~\cite{MW05}]\label{lem:yqpstar_postselection}
    Let $C$ be a polynomial-size quantum circuit and $r\in \poly(n)$. There is an amplified verifier $C^\prime$ which records bits $y,z$ such that for every eigenstate $\ket{\psi}$ of the acceptance operator $C$,
    \begin{align*}
        \Pr[C\ket{\psi} \text{ accepts}] \leq \frac{1}{10} &\implies \Pr[C^\prime\ket{\psi} \text{ accepts}] \leq 2^{r(n)},\\
        \Pr[C\ket{\psi} \text{ accepts}] \geq \frac{9}{10} &\implies \Pr[C^\prime\ket{\psi} \text{ accepts and } y=z=1] \geq (1-2^{r(n)})\frac{9}{20}.
    \end{align*}
\end{lemma}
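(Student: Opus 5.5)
The statement is the Marriott--Watrous in-place amplification adapted so that the circuit also flags a specific \enquote{good} branch. I would prove it by reducing everything to a two-dimensional invariant subspace attached to the eigenstate $\ket{\psi}$. Throughout I read the bounds as $2^{-r(n)}$ rather than $2^{r(n)}$, which is clearly what is intended.

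\textbf{Setup.} Let $\Pi_{\mathrm{in}}$ be the projector onto states whose ancillas are initialised to $\ket{0}$, and let $\Pi_{\mathrm{acc}}$ be the projector onto the accepting output qubit. The acceptance operator is $Q=\Pi_{\mathrm{in}}C^\dagger\Pi_{\mathrm{acc}}C\Pi_{\mathrm{in}}$. Take $\ket{\psi}$ (with ancillas zeroed) to be an eigenvector of $Q$ with eigenvalue $p=\Pr[C\ket{\psi}\text{ accepts}]$.

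By Jordan's lemma applied to the two projections $\Pi_{\mathrm{in}}$ and $C^\dagger\Pi_{\mathrm{acc}}C$, the vectors $\ket{\psi}$ and $C^\dagger\Pi_{\mathrm{acc}}C\ket{\psi}$ span a subspace of dimension at most two. This subspace is invariant under both projections, and the two projections sit inside it at angle $\theta$ with $\cos^2\theta=p$.

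\textbf{Step 1: amplification.} The circuit $C'$ alternates two measurements for $N=O(r(n))$ rounds:
\begin{itemize}
  \item apply $C$ and measure $\{\Pi_{\mathrm{acc}},\Id-\Pi_{\mathrm{acc}}\}$;
  \item apply $C^\dagger$ and measure $\{\Pi_{\mathrm{in}},\Id-\Pi_{\mathrm{in}}\}$.
\end{itemize}
For each round it records whether the two consecutive outcomes agree. Inside the two-dimensional block, each such agreement indicator is an independent Bernoulli variable with parameter $p$, exactly as in~\cite{MW05}.

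$C'$ accepts iff the fraction of agreements exceeds $\tfrac12$. A Chernoff bound then gives $\Pr[C'\ket{\psi}\text{ accepts}]\le 2^{-r(n)}$ when $p\le\tfrac1{10}$, and $\ge 1-2^{-r(n)}$ when $p\ge\tfrac9{10}$. This yields the first implication directly.

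\textbf{Step 2: the flags $y$ and $z$.} These exist so that the post-measurement state can be reused, which is what \PromiseYQPstar needs. I would define:
\begin{itemize}
  \item $y$ as the outcome of the very first measurement, namely $C$ accepting on $\ket{\psi}$;
  \item $z$ as the indicator that the final $\Pi_{\mathrm{in}}$-measurement lands back in the initial subspace, with the number of rounds chosen so that the correct parity makes $z=1$ correspond to returning to the span of $\ket{\psi}$.
\end{itemize}
Conditioned on $y=1$, each subsequent measurement in the block succeeds with probability $p$ or $1-p$ depending on the parity of the step. One then lower-bounds $\Pr[\text{accept}\wedge y=z=1]$ as follows. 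Use $\Pr[y=1]=p\ge\tfrac9{10}$. For the return to the initial subspace, use a final randomised choice of an extra half-round, or else the fact that a uniformly random stopping parity returns with probability at least $\tfrac12$. This gives a factor of at least $\tfrac12$, and the Chernoff bound supplies the factor $(1-2^{-r(n)})$. Together these give the target $(1-2^{-r(n)})\tfrac{9}{20}$.

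\textbf{Main obstacle.} The difficult part is this joint accounting. The acceptance event is a majority over the whole transcript, while $y$ and $z$ are fixed individual outcomes, so one must make sure that conditioning on them does not spoil the independence used in the Chernoff bound. I would handle this by noting that, inside the block, the transcript is a Markov chain whose transitions depend only on $p$. Conditioning on the first and last outcomes then only fixes the endpoints, and at most two of the $N$ indicators are affected, which can be absorbed into the choice of $N$.
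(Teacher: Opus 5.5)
The paper does not prove this lemma; it imports it from [Yir25], which builds on Marriott--Watrous in-place amplification. Your route (Jordan blocks, i.i.d.\ Bernoulli$(p)$ agreement indicators, and a flag for returning to the input subspace) is that route. Step~1 is fine and gives the first implication.

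Note that, as literally stated, the lemma places no constraint on $y,z$, so Step~1 alone already implies it. What is actually used later, in the proof of \cref{prop:yqp_start_in_postbqp_star}, is that accepting with $y=z=1$ leaves the register in $\ket{\psi}$. Your choice of $z$ as the final $\Pi_{\mathrm{in}}$-outcome does provide this, because $\Pi_{\mathrm{in}}$ has rank one on the block.

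Step~2 does not go through as written, and the step that fails is the one in your ``main obstacle'' paragraph. Write $y_0=1,y_1,\dots,y_{2N}$ for the outcomes and $X_i=[y_i=y_{i-1}]$, which are i.i.d.\ Bernoulli$(p)$. Then $y=X_1$, but $z=y_{2N}$ equals $1$ exactly when the number of disagreements $\sum_i(1-X_i)$ is even.

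So conditioning on the last outcome is a global parity constraint on all $2N$ indicators. It does not merely ``fix an endpoint'' or affect at most two of them. Moreover, acceptance and $z$ are both functions of the same count $\sum_i X_i$. Multiplying $\Pr[y=1]$, a factor $\tfrac12$, and $(1-2^{-r})$ as if these events were independent is therefore unjustified.

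The randomised extra half-round is also unnecessary, and the parity remark is confused. A $\Pi_{\mathrm{in}}$-outcome of $1$ always puts the block state in $\mathrm{span}\{\ket{\psi}\}$, whatever the number of rounds; parity only governs how likely that outcome is.

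The repair is short and avoids conditioning altogether. From the i.i.d.\ structure,
\[
\Pr[y=z=1]=p\cdot\frac{1+(2p-1)^{2N-1}}{2}\ \ge\ \frac{p}{2}\ \ge\ \frac{9}{20}\qquad\bigl(p\ge\tfrac{9}{10}\bigr).
\]
A union bound then gives
\[
\Pr[\text{accept}\wedge y=z=1]\ \ge\ \Pr[y=z=1]-\Pr[\neg\text{accept}]\ \ge\ \tfrac{9}{20}-2^{-r'}.
\]
Choose $N$ so that the Chernoff tail is $2^{-r'}\le\tfrac{9}{20}2^{-r}$, for example $r'=r+2$. This yields $(1-2^{-r})\tfrac{9}{20}$ and leaves the first implication intact.
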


\subsection{Loose oracle access}

Let us formally define \emph{loose} oracle access. Unless mentioned, we assume that $\Pi, \Theta$ and $\Gamma$ represent arbitrary promise problems.

\begin{definition}[Loose-access \BPP] A promise problem $\Pi$ is in $\LooseBPP^\Theta$ if there exists a polynomial $r$ and a probabilistic polynomial-time oracle machine $M$ such that,
    \begin{align*}
        x\in \Pi_{\yes} &\implies \abs{\left\{ s \in \{0,1\}^{r(\abs{x})}: \forall B \in \comp(\Theta),\;M^B(x,s) = 1 \right\} } \geq \frac{2}{3} 2^{r(\abs{x})}\\
        x\in \Pi_{\no} &\implies \abs{\left\{ s \in \{0,1\}^{r(\abs{x})}: \forall B \in \comp(\Theta),\;M^B(x,s) = 0 \right\} } \geq \frac{2}{3} 2^{r(\abs{x})}
    \end{align*}
\end{definition}

\begin{definition}[Randomized reduction]\label{def:bp_op} A promise problem $\Pi$ is in $\BPdot\Theta$ if there exists a polynomial $r$ and a polynomial-time computable function $R$ such that,
    \begin{align*}
        x\in \Pi_{\yes} &\implies \abs{\left\{ s \in \{0,1\}^{r(\abs{x})}: R(x,s) \in \Theta_\yes \right\} } \geq \frac{2}{3} 2^{r(\abs{x})}\\
        x\in \Pi_{\no} &\implies \abs{\left\{ s \in \{0,1\}^{r(\abs{x})}: R(x,s) \in \Theta_\no \right\} } \geq \frac{2}{3} 2^{r(\abs{x})}
    \end{align*}
    For a promise class $\calC$, $\BPdot\calC \coloneq \bigcup_{\Theta\in\calC} \BPdot\Theta$.
\end{definition}

In other words, $\Pi\in \BPdot\Theta$ if and only if $\Pi$ has a randomized polynomial-time many-one reduction to $\Theta$. As $R$ may be viewed as an oracle machine making a single query, $\BPdot\Theta \subseteq \LooseBPP^\Theta$.

\begin{definition}[Loose-access unambiguous polynomial time]\label{def:loose_access_up}
    A promise problem $\Pi$ is in $\LooseUP^\Theta$ if there exists some polynomial $r$ and a polynomial-time deterministic oracle machine $M$ such that,
    \begin{itemize}
        \item If $x\in \Pi_{\yes}$, there exists a unique witness $w^\prime \in \{0,1\}^{r(\abs{x})}$ such that for every $B\in \comp(\Theta)$, $M^B(x,w^\prime)=1$ and $M^B(x,w) = 0$ for every $w\in \{0,1\}^{r(\abs{x})} \setminus \{w^\prime\}$.
        \item If $x\in \Pi_{\no}$, then for all $w\in \{0,1\}^{r(\abs{x})}$ and all $B\in \comp(\Theta)$, $M^B(x,w)=0$.
    \end{itemize}
\end{definition}

We define the corresponding unambiguous hierarchy as follows.

\begin{definition}\label{def:uh}For a promise class $\calC$, the loose-access unambiguous polynomial hierarchy $\LooseUH^\calC$ is defined as follows,
    \begin{align*}
        \LooseUSigma{0}^{\calC} &= \calC\\
        \LooseUSigma{k+1}^{\calC} &= \bigcup_{\Gamma \in \LooseUSigma{k}^\calC} \LooseUP^{\Gamma}\\
        \LooseUH^\calC &= \bigcup_{k} \LooseUSigma{k}^{\calC}
    \end{align*}
\end{definition}

The following result will be useful.

\begin{lemma}[Lemmas 2 and 3 in~\cite{HS15}]\label{lem:loose_up_bpp}
    For any promise problem $\Theta$,
    \begin{align*}
        \LooseBPP^{\LooseBPP^\Theta} & = \LooseBPP^\Theta,\\
        \LooseUP^{\LooseBPP^\Theta} &\subseteq \LooseBPP^{\LooseUP^\Theta}.
    \end{align*}
\end{lemma}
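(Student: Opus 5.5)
The statement to prove is the pair of inclusions in \cref{lem:loose_up_bpp}, attributed to~\cite{HS15}. I would prove the two parts separately, in both cases working only with the random strings $s$ that are ``good for every completion'', since that is what loose access guarantees.

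\textbf{First identity.} The inclusion $\LooseBPP^\Theta \subseteq \LooseBPP^{\LooseBPP^\Theta}$ is immediate. For the reverse, let $M$ be an outer $\LooseBPP$ machine querying a promise problem $\Gamma \in \LooseBPP^\Theta$, which is decided by an inner machine $N$.

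1. Amplify both machines by majority vote over independent seeds. A seed is good for every completion as soon as a majority of its sub-seeds are good for every completion. Hence loose error reduction is just a Chernoff bound on the fraction of universally good sub-seeds, and I can make $M$'s error at most $1/6$ and $N$'s error $2^{-q(n)}$, where $q$ exceeds $\log$ of $M$'s running time by a margin.
2. Simulate $M$ on seed $s$, and answer its $i$-th query $y_i$ by running $N(y_i, t_i)$ with fresh seeds $t_i$, querying some completion $B \in \comp(\Theta)$.
3. Fix $s$ good for every completion $A$ of $\Gamma$. By a union bound, with probability at least $1 - \poly\cdot 2^{-q}$ over $(t_i)$, every query $y_i$ that lies in the promise of $\Gamma$ is answered correctly for all $B$ simultaneously. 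Here we use that each $t_i$ is universally good.
4. For an off-promise query $y_i$, the answer $N^B(y_i, t_i)$ may depend on $B$. However, for each fixed $B$, the resulting answer map is \emph{some} completion $A_B$ of $\Gamma$. Since $s$ is good for every completion, $M^{A_B}(x,s)$ is correct.

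One subtlety is adaptivity: the query $y_i$ depends on earlier answers and hence on $B$. To handle it, I take the union bound over all possible queries made along paths consistent with some completion. I would handle this by choosing the seeds $t_i$ by query index and arguing per fixed $B$ that the path follows $M^{A_B}$. The ``for all $B$'' condition then only needs the promise queries of every such path to be universally correct. To get this, I would make the inner error small enough to union bound over the at most $2^{\text{time}}$ possible queries of length $\le$ time, which is possible because I can amplify to $2^{-q}$ with $q$ exceeding the running time.

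\textbf{Second inclusion.} Let $\Pi \in \LooseUP^\Gamma$ via a machine $M$ with witness length $r$, where $\Gamma \in \LooseBPP^\Theta$ via a machine $N$. Consider the $\LooseBPP^{\LooseUP^\Theta}$ machine that, given $x$ and a seed, picks one seed $t$ for $N$ to be shared by all witnesses. Its $\LooseUP^\Theta$ oracle checks, on input $(x,t)$, for witnesses $w$ with $M^{N^{\cdot}(\cdot,t)}(x,w)=1$.

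1. Amplify $N$ so that a single shared $t$ is, with probability $\ge 2/3$, universally correct on all promise queries of $\Gamma$ made by $M(x,w)$ for every $w$. This is a union bound over $2^{r}\cdot\poly$ queries, so the required error is $2^{-r-\poly}$.
2. For such a good $t$ and any $B$, the simulated answers again form a completion of $\Gamma$. Loose $\LooseUP$ correctness of $M$ then gives a unique accepting witness, identical across all $B$, when $x \in \Pi_\yes$, and no accepting witness when $x \in \Pi_\no$.
3. The query $(x,t)$ therefore lies in the promise of the derived $\LooseUP^\Theta$ problem, defined on good pairs, with answer equal to the correct answer for $x$. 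The outer machine outputs that answer.

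\textbf{Main obstacle.} The delicate step is the union bound over exponentially many, adaptively chosen queries, together with ensuring that off-promise answers, which vary with $B$, assemble into a genuine completion. The second point is where loose quantification is essential.
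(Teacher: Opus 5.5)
The paper does not prove this lemma; it imports it from \cite{HS15}, so there is no in-paper argument to compare against. Your approach is the natural one and works: fix an outer seed that is good for every completion of $\Gamma$, answer queries with an amplified inner machine, and observe that for each $B$ the simulated answers form \emph{some} completion. Two steps need repair.

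\textbf{First identity.} You use a fresh seed $t_i$ for each query index and then claim that, for each fixed $B$, the answers form a completion $A_B$ of $\Gamma$. This fails if $M$ asks the same off-promise string $y$ at two indices $i\neq j$. Then $N^B(y,t_i)$ and $N^B(y,t_j)$ may disagree, and the simulated run is not the run of $M^{A}$ for any single language $A$, so the goodness of $s$ says nothing about it. Either assume without loss of generality that $M$ caches and never repeats a query, or use one shared seed for all queries, as you do in the second part.

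\textbf{Second inclusion.} Your union bound over ``$2^{r}\cdot\poly$ queries'' ignores the adaptivity you identified yourself. The queries of $M(x,w)$ depend on $B$ through the answers to earlier off-promise queries. So even for a single $w$, the set of queries arising over all $B$ can be exponentially large in the running time $T$ of $M$. The fix is the one you used in the first part:
\begin{itemize}
\item Let $\tilde N$ be $N$ amplified to error $2^{-T-3}$.
\item Call $t$ good if $\tilde N^B(y,t)=\Gamma(y)$ for all $B\in\comp(\Theta)$ and all $y\in\Gamma_\yes\cup\Gamma_\no$ with $\abs{y}\le T$.
\item A union bound over the fewer than $2^{T+1}$ such strings gives $\Pr_t[t\text{ good}]\ge 3/4$, independent of $r$.
\end{itemize}
For good $t$ and any $B$, define $A_B$ as $y\mapsto\tilde N^B(y,t)$ on strings of length at most $T$, extended by $\Gamma_\yes$ on longer strings. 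This $A_B$ is a genuine completion of $\Gamma$. Hence the derived verifier on $((x,t),w)$ outputs $M^{A_B}(x,w)$, which by the loose $\LooseUP^{\Gamma}$ condition on $M$ does not depend on $B$. This is exactly what is needed for the derived problem on good pairs to lie in $\LooseUP^{\Theta}$, and your outer machine is then a one-query randomized reduction to it. With these two repairs, both parts go through as you describe.
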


\subsection{Cohen genericity}

For two sets of strings $\calA,\calB \subseteq \{0,1\}^*$, their join is defined as,
\begin{align*}
    \calA \oplus \calB \coloneq \{0x: x\in \calA\} \cup \{1x: x\in \calB\}.
\end{align*}
A \emph{Cohen condition} is a partial function $\sigma: \{0,1\}^* \rightharpoonup \{0,1\}$ with a finite domain. For two conditions $\sigma, \tau$ and a set $\calA$, we say,
\begin{align*}
    \sigma \preceq \tau &\iff \dom(\sigma) \subseteq \dom(\tau) \text{ and } \forall x\in \dom(\sigma), \tau(x)=\sigma(x),\\
    \sigma \prec \calA &\iff \forall x\in \dom(\sigma), \Id_{\calA}(x) = \sigma(x).
\end{align*}
A set $D$ of Cohen conditions is \emph{dense} and \emph{upwards closed} respectively when,
\begin{align*}
    \forall \sigma\, \exists \tau \in D&: \sigma \preceq \tau,\\
    \sigma \in D \text{ and } \sigma \preceq \tau &\implies \tau \in D.
\end{align*}

\begin{definition}[Arithmetic definability]
    Fix some encoding of finite Cohen conditions by natural numbers denoted $c$. A set $D$ of Cohen conditions is arithmetically definable relative to an oracle $B$ if there exists a first-order formula $\lambda(z)$ augmented with $B$ such that,
    \begin{align*}
        \sigma \in D \iff (\mathbb{N}, +, \times, 0,1, B) \vDash \lambda(c(\sigma)).
    \end{align*}
\end{definition}

\begin{definition}[Cohen genericity]
    For some $B\subseteq \{0,1\}^*$, the set $G\subseteq \{0,1\}^*$ is Cohen generic relative to $B$ if for all dense sets of Cohen conditions $D$,
    \begin{align*}
       \text{if $D$ is arithmetically definable relative to $B$} &\implies \exists \sigma \in D \text{ s.t. } \sigma \prec G
    \end{align*}
\end{definition}

\begin{lemma}\label{lem:enumeration_of_dense_sets}
    For some $B\subseteq \{0,1\}^*$, let $D_0, D_1,\dots,$ be an enumeration of all dense sets of Cohen conditions which are arithmetically definable relative to $B$. Then there exists an increasing sequence of finite Cohen conditions $\sigma_0 \preceq \sigma_1 \preceq \dots $ and a set $G\subseteq \{0,1\}^*$ such that for all $i\in \mathbb{N}$, $\sigma_{i+1} \in D_i$ and $\sigma_i \prec G$.
\end{lemma}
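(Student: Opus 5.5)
The plan is to run a countable forcing construction along the given enumeration: starting from the empty condition, extend step by step using density, and then let $G$ be the union of the resulting chain.

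First, set $\sigma_0 = \emptyset$, the empty partial function. Inductively, given a finite condition $\sigma_i$, I will use that $D_i$ is dense: by definition there exists $\tau \in D_i$ with $\sigma_i \preceq \tau$, and I set $\sigma_{i+1} := \tau$. If one wants the sequence to be explicit rather than an appeal to dependent choice, fix an encoding $c$ of conditions by natural numbers and take the $\tau$ with least code. This choice is legitimate because the set of such $\tau$ is nonempty by density. The sequence is then increasing by construction and satisfies $\sigma_{i+1} \in D_i$ for all $i$.

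Next, define $G$ through the union of the chain. Let $\sigma_\infty = \bigcup_i \sigma_i$, the partial function whose domain is $\bigcup_i \dom(\sigma_i)$. It is well defined because the $\sigma_i$ are nested: if $x \in \dom(\sigma_i) \cap \dom(\sigma_j)$ with $i \le j$, then $\sigma_j(x) = \sigma_i(x)$ since $\sigma_i \preceq \sigma_j$. Put $G = \{x : \sigma_\infty(x) = 1\}$. Strings outside $\dom(\sigma_\infty)$ are placed outside $G$, and any fixed default would do.

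Finally, check $\sigma_i \prec G$. For $x \in \dom(\sigma_i)$ we have $\sigma_\infty(x) = \sigma_i(x)$, so $\Id_G(x) = 1$ exactly when $\sigma_i(x) = 1$, which means $\Id_G(x) = \sigma_i(x)$.

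There is no real obstacle here. The only subtle point is consistency of the union, which follows from $\preceq$ being extension of partial functions. It is also worth noting, for later use, that since every $D_i$ contains some $\sigma_{i+1} \prec G$, the set $G$ is in fact Cohen generic relative to $B$. This uses that there are only countably many arithmetical formulas, so the enumeration $D_0, D_1, \dots$ exists in the first place.
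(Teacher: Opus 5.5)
Your proof is correct and takes the same route as the paper: start from $\sigma_0=\emptyset$, use density of $D_i$ to pick $\sigma_{i+1}\in D_i$ extending $\sigma_i$, and let $G$ be the set of strings assigned $1$ by some $\sigma_i$. Your extra remarks on consistency of the union, choosing the least code to avoid dependent choice, and genericity of $G$ are all valid details that the paper leaves implicit.
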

\begin{proof}
    Set $\sigma_0 = \emptyset$. Consider some $\sigma_i$. Using the density of $D_i$, choose some condition $\sigma_{i+1} \in D_i$ such that $\sigma_i \preceq \sigma_{i+1}$. This gives us an increasing sequence of finite conditions meeting every set in the enumeration. Define the set $G$ as,
    \begin{align*}
        G \coloneq \{x\in \{0,1\}^*: \exists i \text{ s.t. } x\in \dom(\sigma_i) \text{ and } \sigma_i(x)=1 \}.
    \end{align*}
    Notice that this gives us that $\sigma_i \prec G$ and $\sigma_{i+1} \in D_i$.
\end{proof}

% This is not the exact statement of 6.18, but it directly implies it.
\begin{theorem}[Consequence of Theorem 6.18 in \cite{FFKL03}]\label{thm:cohen_genericity_pspace}
    Let $B$ be a \PSPACE-complete oracle. Then if $G$ is Cohen generic relative to $B$,
    \begin{align*}
        \Pclass^{B\oplus G} = \BPP^{B\oplus G} = \BQP^{B\oplus G} = \AWPP^{B \oplus G}.
    \end{align*}
\end{theorem}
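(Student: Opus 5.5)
The plan is to follow the genericity-plus-query-complexity argument of~\cite{FFKL03}. It suffices to show $\AWPP^{B\oplus G}\subseteq \Pclass^{B\oplus G}$, since the chain $\Pclass\subseteq\BPP\subseteq\BQP\subseteq\AWPP$ relativizes (the last inclusion via the relativized \cref{lem:bqp_gapP_representation}). So fix a language $L\in\AWPP^{B\oplus G}$, witnessed by an oracle \GapP function $g$ and a polynomial $q$. For an input $x$ and an arbitrary oracle $X$, the ratio $g^{B\oplus X}(x)/2^{q(\abs{x})}$ is a multilinear polynomial $P_x$ in the bits of $X$ on the at most $2^{\poly(\abs{x})}$ strings of polynomial length. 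The important fact is that $P_x$ has degree at most $\poly(\abs{x})$, because each computation path makes only polynomially many queries.

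The first step is a forcing argument. Call $\sigma$ \emph{good} if every $X\succ\sigma$ makes $P_x(X)$ satisfy the \AWPP gap condition, that is, $P_x(X)\in[0,\tfrac13]\cup[\tfrac23,1]$ for every $x$. Call $\sigma$ \emph{bad} if for some $x$ it already fixes all bits on which $P_x$ depends, and under that assignment the gap condition fails. Let $D$ be the set of conditions that are good or bad.
\begin{itemize}
\item $D$ is dense. If $\sigma$ is not good, then some total $X\succ\sigma$ violates the condition at some $x$, and restricting $X$ to the polynomially bounded query region of $x$ yields a bad extension of $\sigma$.
\item $D$ is arithmetically definable relative to $B$, since $g$ is computable relative to $B$.
\end{itemize}
Because $G$ is Cohen generic relative to $B$, some $\sigma\in D$ satisfies $\sigma\prec G$. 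This $\sigma$ cannot be bad, since $L$ is in $\AWPP^{B\oplus G}$ and so $G$ itself satisfies the promise. Hence $\sigma$ is good: relative to every extension of the finite condition $\sigma$, every $P_x$ is bounded in $[0,1]$ and gapped.

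The second step turns this into a deterministic algorithm. For each $x$, restrict $P_x$ to $\sigma$ and round it to a threshold at $\tfrac12$. This gives a total Boolean function $f_x$ on the free oracle bits, and $P_x$ approximates $f_x$ with degree $d=\poly(\abs{x})$. By the polynomial relation between deterministic query complexity and approximate degree (Nisan--Szegedy and Beals et al.: block sensitivity is $O(d^2)$, and $D(f)\le \mathrm{bs}(f)^3$ or similar), $f_x$ has a deterministic decision tree of depth $\poly(\abs{x})$. Concretely, the $\Pclass^{B\oplus G}$ machine, with $\sigma$ hardwired as finite advice-free data, runs the standard certificate-based procedure for $\poly(\abs{x})$ rounds:
\begin{itemize}
\item use $B$ to find a $1$-certificate (an assignment on polynomially many positions forcing $f_x=1$) consistent with the answers so far;
\item query $G$ on its positions and record the answers.
\end{itemize}
Each certificate search is a \PSPACE computation, because evaluating $g$ under a partial assignment and checking that all completions are forced is in \PSPACE. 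The machine can therefore ask $B$ for the certificate bit by bit. Since every $1$-certificate intersects every $0$-certificate, after polynomially many rounds $f_x$ is determined on $G$.

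I expect the main obstacle to be making the second step rigorous at exponential input size. The query-complexity bound must hold with only polynomial dependence on $d$, independent of the $2^{\poly(\abs{x})}$ variables. The certificate search must also be carried out with only polynomial-length certificates inside \PSPACE; this needs the certificate-complexity bound $C(f)\le \poly(\mathrm{bs}(f))$ and a careful check that "consistent $1$-certificate" is decidable in polynomial space. I would first try the Beals et al.\ argument verbatim, which only uses these combinatorial bounds.
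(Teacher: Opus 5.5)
The paper gives no argument of its own here. It cites Theorem 6.18 of \cite{FFKL03} and combines it with the relativizing chain $\Pclass\subseteq\BPP\subseteq\BQP\subseteq\AWPP$, which is exactly your opening reduction. Your forcing step is correct: the set of conditions that either force the \AWPP gap for every $x$ or already fix a violation is dense and arithmetical relative to $B$, and the condition met by $G$ cannot be of the second kind. The appeal to Nisan--Szegedy and Beals et al.\ is also sound, since those bounds do not depend on the $2^{\poly(\abs{x})}$ variables.

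The gap is in the implementation step, and it is the one you defer. You say that testing whether a partial assignment $\rho$ forces $f_x$ is in \PSPACE ``because evaluating $g$ under a partial assignment \dots\ is in \PSPACE''. But a partial assignment does not determine $g^{B\oplus X}(x)$. Moreover, ``all completions'' ranges over assignments to exponentially many oracle strings, which a polynomial-space machine cannot quantify over. Without this step you have a shallow decision tree but no polynomial-time way to find its queries.

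The missing idea is locality of certificates. We have $C(f_x)\le \mathrm{bs}(f_x)^2 = O(d^4)$, an explicit bound the machine can compute. So $f_x$ is constant on the subcube of completions of $\sigma\cup\rho$ if and only if it is constant on the points of that subcube within Hamming distance $C(f_x)$ of one fixed point $z_0$ (say, all remaining bits $0$). To see this, suppose $f_x(y)\neq f_x(z_0)$ for some $y$ in the subcube. Overwrite $z_0$ with a certificate of $y$ of size at most $C(f_x)$. The result stays in the subcube, because that certificate agrees with $\rho$ wherever they overlap. It also differs from $z_0$ in at most $C(f_x)$ bits and has value $f_x(y)$.

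Each such point is described by polynomially many (string, bit) pairs. For such a succinct oracle $z$, evaluating $g^{B\oplus z}(x)$ is an exponential sum over polynomial-time paths whose $B$-queries are answered in \PSPACE. Hence both ``$\rho$ is a $1$-certificate'' and ``$f_x$ is constant on the current subcube'' are universal quantifications over polynomial-size objects of a \PSPACE predicate, so they are in \PSPACE. With these tests the certificate can be extracted from $B$ prefix by prefix. The Beals et al.\ loop, which runs the constancy test at the start of each round, then halts within $C(f_x)+1$ rounds.
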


\section{Quantum-Classical Toda's theorem}\label{sec:qcph}

In the following section, let us show that $\QCPH \subseteq \PromiseBPP^\PP$. Most of the argument does not use quantum computation, so we fix an arbitrary function $f:\{0,1\}^*\to [0,1]$ throughout this section.

\begin{problem}[Quantified Threshold Problem]\label{def:quantified_threshold} Fix some $i\in \mathbb{N}_0$, a quantifier string $Q \in \{\exists, \forall \}^i$ and a polynomial-time computable verifier $V$ mapping $(x,w_1,\dots,w_i)$ to a string. Let $n=\abs{x}$, $l\in \poly(n)$ and assume that every witness $w_j \in \{0,1\}^{l(n)}$ for $j\in [i]$. For $j\leq i$, let
    \begin{align*}
        v_j(V, Q,x,w_1,\dots w_j) &\coloneq \begin{cases}
            f(V(x,w_1,\dots, w_i)) &\text{if $i=j$}\\
            \max_{w_{j+1}} v_{j+1}(V,Q,x,w_1,\dots, w_j, w_{j+1}) &\text{if $j <i$ and $Q_{j+1} = \exists$}\\
            \min_{w_{j+1}} v_{j+1}(V,Q,x,w_1,\dots, w_j, w_{j+1}) &\text{if $j <i$ and $Q_{j+1} = \forall$}
        \end{cases}\\
        v(V,Q,x) &\coloneq v_0(V,Q,x).
    \end{align*}
    The promise problem $\qthr(V,Q)$ takes inputs $(x,\alpha, \beta, 1^m)$, where $\alpha,\beta$ are rationals represented in binary, $0\leq \beta < \alpha \leq 1$, $m\geq 1$ and $\alpha-\beta \geq \tfrac{1}{m}$. It is defined as,
    \begin{align*}
        (x,\alpha, \beta, 1^m) \in \qthr(V,Q)_\yes &\iff v(V,Q,x) \geq \alpha\\
        (x,\alpha, \beta, 1^m) \in \qthr(V,Q)_\no &\iff v(V,Q,x) \leq \beta.
    \end{align*}
    When we need to make the dependence on $f$ explicit, we write $\qthr_f(V,Q)$ and $v^f_j$.
\end{problem}

When $i=0$ and the identity map $\Id$, we use $\thr(f) \coloneq \qthr_f(\Id,\emptyset)$ to denote the gapped threshold problem of $f$, which asks whether $f(z) \geq \alpha$ or $f(z) \leq \beta$. Note that $\qthr(V,\emptyset)$ reduces to $\thr(f)$ by replacing $x$ with $V(x)$.

Let $\qacc$ denote the function that maps the description of a quantum circuit to the probability it outputs $1$. The problem $\thr(\qacc)$ is polynomial-time equivalent to $\qcirc$, meaning that $\thr(\qacc) \in \PromiseBQP$. We immediately get the following characterization of \QCPH.

\begin{observation}\label{obs:qcph_redefinition} Let $f=\qacc$. Fix some $i\geq 1$ and let $Q = (\exists, \forall, \exists \dots Q_i) \in \{\exists, \forall\}^i$.
    A promise problem $\Pi$ is in $\QCSigma{i}$ if and only if there exists a verifier $V$ such that,
    \begin{align*}
        x\in \Pi_\yes \implies (x,\tfrac{2}{3}, \tfrac{1}{3}, 111)\in \qthr(V,Q)_\yes\\
        x\in \Pi_\no \implies (x,\tfrac{2}{3}, \tfrac{1}{3}, 111)\in \qthr(V,Q)_\no.
    \end{align*}
    When $i=0$, we assume that $Q= \emptyset$.
\end{observation}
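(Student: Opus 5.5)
The plan is to prove both directions by unfolding \cref{def:qcph} and \cref{def:quantified_threshold} with $f=\qacc$ and the gap parameters $\alpha=\tfrac23$, $\beta=\tfrac13$, $m=3$, which satisfy $\alpha-\beta=\tfrac13\geq \tfrac1m$. The key point is that a uniform \PromiseBQP verifier in \cref{def:qcph} is the same data as a polynomial-time map $V$ sending $(x,w_1,\dots,w_i)$ to the description of a quantum circuit. Under this identification $\Pr[V(x,w_1,\dots,w_i)=1]=\qacc(V(x,w_1,\dots,w_i))=v_i(V,Q,x,w_1,\dots,w_i)$. Conversely, any polynomial-time $V$ outputting circuit descriptions gives such a verifier: run the classical map, then execute the output circuit. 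Strings that do not encode circuits can be mapped to a fixed rejecting circuit, so $\qacc$ is total.

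Next I would show by backward induction on $j$, from $j=i$ down to $0$, that $v_j(V,Q,x,w_1,\dots,w_j)\geq\alpha$ holds if and only if the quantified statement $Q_{j+1}w_{j+1}\cdots Q_iw_i:\ \Pr[V(x,w)=1]\geq\alpha$ holds. The base case $j=i$ is the identity above. For the inductive step, witnesses range over the finite set $\{0,1\}^{l(n)}$, so the max and min are attained. Hence $\max_{w}v_{j+1}\geq\alpha$ iff $\exists w\, v_{j+1}\geq\alpha$, and $\min_w v_{j+1}\geq\alpha$ iff $\forall w\, v_{j+1}\geq\alpha$. Dually, $\max_w v_{j+1}\leq\beta$ iff $\forall w\, v_{j+1}\leq\beta$, and $\min_w v_{j+1}\leq\beta$ iff $\exists w\, v_{j+1}\leq\beta$. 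This matches the complemented quantifier string $\overline{Q}$ in the no-condition.

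Applying the induction at $j=0$ gives the two equivalences $x\in\Pi_\yes$-condition $\iff v(V,Q,x)\geq\tfrac23$ and $x\in\Pi_\no$-condition $\iff v(V,Q,x)\leq\tfrac13$. This yields both directions of the observation at once. The case $i=0$ is simply $\thr(\qacc)$ composed with $V$, which is equivalent to $\qcirc$ as noted above. A minor point is that the witness length $p$ in \cref{def:qcph} and $l$ in the problem must be matched, which is done by taking $l=p$.

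I expect no real obstacle. The only care needed is at the syntactic level: the identification of ``uniform \PromiseBQP verifier'' with a classical map to circuit descriptions, and checking that the quantifiers flip correctly for the no-instances via the finite-domain attainment of max/min.
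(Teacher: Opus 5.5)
Your proposal is correct and follows the paper's approach; the paper gives no proof, treating the observation as immediate from the definitions. Your unfolding spells that out: you identify the uniform verifier with a polynomial-time map to circuit descriptions, so that $\Pr[V(x,w_1,\dots,w_i)=1]=\mathrm{QAcc}(V(x,w_1,\dots,w_i))$, and your backward induction shows that $\max/\min$ over the finite witness sets, compared against $\tfrac{2}{3}$ and $\tfrac{1}{3}$, reproduces exactly the quantifier strings $Q$ and $\overline{Q}$.
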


\subsection{Promise-version of the first half of Toda's theorem}

First, let us show a promise-version of the first half of Toda's theorem. We will use the following standard helper statement.

\begin{lemma}\label{lem:nested_sets}
    Let $\emptyset \subset S \subseteq T \subseteq \{0,1\}^n$ such that $\abs{T}\leq 2 \abs{S}$. Choose a uniformly random $l\in [0,n+1]$ and a pairwise-independent hash function $h: \{0,1\}^n \to \{0,1\}^l$. Then,
    \begin{align*}
        \Pr_{l,h}[ \abs{T \cap h^{-1}(0^l)} = 1 \text{ and } T\cap h^{-1}(0^l)\subseteq S ] \geq \frac{1}{16(n+2)}
    \end{align*}
\end{lemma}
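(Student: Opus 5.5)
Our plan is to fix a single good level $l$ and analyze the random hash $h$ at that level only, then pay a factor $1/(n+2)$ for guessing $l$. Let $k\ge 1$ be such that $2^{k-1}\le \abs{T} < 2^{k}$. Since $1\le\abs{T}\le 2^n$, we have $k\in[1,n+1]$, so we take $l=k+1\in[2,n+2]$. If this falls outside $[0,n+1]$ (only when $\abs{T}=2^n$), we instead use $l=n+1$ or adjust by one, checking that the constants still work. In all cases we want a level with $2^{l}\in[2\abs{T},4\abs{T}]$, so that $\mu\coloneq\abs{T}/2^l\in[1/4,1/2]$. The event $l$ equals this chosen value has probability $1/(n+2)$, so it suffices to show that, for this fixed $l$,
\begin{align*}
\Pr_h\left[\abs{T\cap h^{-1}(0^l)}=1 \text{ and } T\cap h^{-1}(0^l)\subseteq S\right]\ge \frac{1}{16}.
\end{align*}

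The event decomposes as a disjoint union over $s\in S$ of the events $E_s$ that $h(s)=0^l$ and $h(t)\ne 0^l$ for every $t\in T\setminus\{s\}$. For each $s$, pairwise independence gives
\begin{align*}
\Pr[E_s]\ge \Pr[h(s)=0^l]-\sum_{t\in T\setminus\{s\}}\Pr[h(s)=0^l,\,h(t)=0^l] \ge 2^{-l}\bigl(1-(\abs{T}-1)2^{-l}\bigr)\ge 2^{-l}(1-\mu).
\end{align*}
Summing over $s\in S$ and using $\abs{S}\ge\abs{T}/2$, the probability is at least $\frac{\abs{T}}{2}2^{-l}(1-\mu)=\frac{\mu(1-\mu)}{2}$. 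For $\mu\in[1/4,1/2]$ this is at least $\frac{1}{2}\cdot\frac14\cdot\frac34=\frac{3}{32}\ge\frac1{16}$, which gives the stated bound after multiplying by $1/(n+2)$.

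The main obstacle is the boundary bookkeeping: making sure some $l\in[0,n+1]$ gives $\mu$ in a constant window bounded away from $0$ and from $1$. At the edge $\abs{T}=2^n$ we take $l=n+1$, giving $\mu=1/2$, which is fine. For general $\abs{T}$, we choose $l=\lceil\log_2\abs{T}\rceil+1$, which lies in $[1,n+1]$ and gives $\mu\in(1/4,1/2]$. Since $\mu(1-\mu)/2\ge 3/32$ on this window, the constant $1/16$ leaves slack, and that slack absorbs any off-by-one in the choice of $l$. We also rely on a pairwise-independent family $h$ whose values $h(x)$ are uniform on $\{0,1\}^l$; the standard affine family has this property, which justifies $\Pr[h(s)=0^l]=2^{-l}$.
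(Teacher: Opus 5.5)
Your proposal is correct and follows essentially the same argument as the paper: fix a level $l$ with $2\abs{T}\le 2^l\le 4\abs{T}$, write the event as a disjoint union over $s\in S$, lower-bound each piece by $2^{-l}\bigl(1-(\abs{T}-1)2^{-l}\bigr)$ via pairwise independence, sum using $\abs{S}\ge\abs{T}/2$, and pay a factor $\frac{1}{n+2}$ for guessing $l$; your constant $3/32$ is a slightly tighter version of the paper's $1/16$. Of your two choices of level, use $l=\lceil\log_2\abs{T}\rceil+1$, which always lies in $[1,n+1]$ and makes explicit a range check that the paper leaves implicit, whereas the floor-based choice in your first paragraph overshoots to $n+2$ when $\abs{T}=2^n$.
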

\begin{proof}
    Choose some $l^\prime$ such that $2\abs{T} \leq 2^{l^\prime} \leq 4\abs{T}$. Fix some $s\in S$. As $h$ is pairwise-independent,
    \begin{align*}
        \Pr_{h}[h(s) = 0^{l^\prime} \text{ and } \forall t\in T\setminus \{s\}, h(t)\neq 0^{l^\prime} ] &\geq 2^{-l^\prime} \left(1-\frac{\abs{T}-1}{2^{l^\prime}} \right)\\
        &\geq 2^{-l^\prime -1}.
    \end{align*}
    As the events for each $s$ are mutually exclusive, their probabilities add. Hence, the probability over all $s$ is at least $\tfrac{\abs{S}}{2^{l^\prime + 1}} \geq \tfrac{1}{16}$. However, this is conditioned on choosing $l^\prime$. Hence, the bound follows from the fact that $l$ is chosen randomly from $n+2$ values.
\end{proof}

\begin{lemma}\label{lem:threshold_safe_existential}
    Fix some verifier $V$ and a quantified string $Q$ which begins with $\exists$.
    For an input $(x,\alpha,\beta,1^r)$ satisfying the conditions in \cref{def:quantified_threshold}, let $m=m(\abs{x})$ be the length of the first witness.
    Letting $\delta = \tfrac{\alpha - \beta}{m+1}$, for $j\in [m+1]$ let $\alpha_j = \beta + j\delta$ and $\beta_j = \beta + (j-1)\delta$. Define the promise problem $\Theta$ as,
    \begin{align*}
        (x,\alpha,\beta,1^r,w,j) \in \Theta_\yes &\iff v_1(V,Q,x,w) \geq \alpha_j\\
        (x,\alpha,\beta,1^r,w,j) \in \Theta_\no &\iff v_1(V,Q,x,w) \leq \beta_j,
    \end{align*}
    and suppose that $\Theta \in \LooseBPP^C$ where $C$ is a promise class. Then the problem $\Gamma$ defined as,
    \begin{align*}
        (x,\alpha,\beta,1^r)\in \Gamma_\yes &\iff v(V,Q,x) \geq \alpha\\
        (x,\alpha,\beta,1^r)\in \Gamma_{\no} &\iff v(V,Q,x) \leq \beta,
    \end{align*}
    is in $\LooseBPP^{\LooseUP^C}$.
\end{lemma}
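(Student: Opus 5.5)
The plan is a promise-aware Valiant--Vazirani isolation. The threshold ladder $\alpha_j,\beta_j$ ensures that off-promise queries to $\Theta$ never affect which witness is isolated. Afterwards I collapse the oracle tower using \cref{lem:loose_up_bpp}.

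\emph{Step 1 (the intermediate problem).} For $j\in[m+1]$ let $S_j=\{w: v_1(V,Q,x,w)\geq \alpha_j\}$ and $T_j=\{w: v_1(V,Q,x,w)>\beta_j\}$. Then $S_j\subseteq T_j$. The queries $(x,\alpha,\beta,1^r,w,j)$ that lie off the promise of $\Theta$ are exactly those with $w\in T_j\setminus S_j$.

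Define a promise problem $\Lambda$ on inputs $(x,\alpha,\beta,1^r,j,l,h)$, where $h:\{0,1\}^m\to\{0,1\}^l$ is pairwise independent:
\begin{itemize}
\item $\Lambda_\yes$ consists of inputs with $|T_j\cap h^{-1}(0^l)|=1$ and $T_j\cap h^{-1}(0^l)\subseteq S_j$;
\item $\Lambda_\no$ consists of inputs with $T_j\cap h^{-1}(0^l)=\emptyset$.
\end{itemize}
The deterministic machine accepts $(\cdot,w)$ iff $h(w)=0^l$ and the $\Theta$-oracle answers $1$ on $(x,\alpha,\beta,1^r,w,j)$.

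On a $\Lambda_\yes$ input, the unique $w'\in T_j\cap h^{-1}(0^l)$ lies in $S_j$, so its query is in $\Theta_\yes$. Every other $w$ with $h(w)=0^l$ lies outside $T_j$, so its query is in $\Theta_\no$. Hence $w'$ is the same unique witness for every completion of $\Theta$. On a $\Lambda_\no$ input, every query is in $\Theta_\no$. This gives $\Lambda\in\LooseUP^\Theta\subseteq\LooseUP^{\LooseBPP^C}$.

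\emph{Step 2 (the outer randomized reduction).} The outer machine repeats the following polynomially many times: pick uniformly random $j\in[m+1]$, $l\in[0,m+1]$ and a hash $h$, and query $\Lambda$. It outputs the OR of the answers.

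\emph{No case.} If $v(V,Q,x)\leq\beta$, then, since $Q$ starts with $\exists$, every $w$ has $v_1\leq\beta\leq\beta_j$. So every $T_j$ is empty and every query lies in $\Lambda_\no$. Thus every random string rejects for every completion, which is exactly the loose requirement.

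\emph{Yes case.} If $v(V,Q,x)\geq\alpha=\alpha_{m+1}$, then $S_{m+1}\neq\emptyset$. Since $\beta_{j}=\alpha_{j-1}$, we have $T_{j}\subseteq S_{j-1}$, giving the chain
\[
S_{m+1}\subseteq T_{m+1}\subseteq S_m\subseteq\cdots\subseteq S_1\subseteq T_1\subseteq\{0,1\}^m .
\]
If $|T_j|>2|S_j|$ held for all $j$, then $|T_1|>2^{m+1}|S_{m+1}|\geq 2^{m+1}$, a contradiction. So some $j^*$ satisfies $|T_{j^*}|\leq 2|S_{j^*}|$.

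By \cref{lem:nested_sets}, one round produces a $\Lambda_\yes$ query with probability at least $\frac{1}{(m+1)16(m+2)}$. With polynomially many rounds, all but a $1/3$ fraction of random strings contain such a round. For those strings the OR equals $1$ under every completion, because the answers on off-promise rounds are irrelevant to the OR. Hence $\Gamma\in\LooseBPP^{\Lambda}\subseteq\LooseBPP^{\LooseUP^{\LooseBPP^C}}$.

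\emph{Step 3 (collapsing the tower).} By \cref{lem:loose_up_bpp}, $\LooseUP^{\LooseBPP^C}\subseteq\LooseBPP^{\LooseUP^C}$. The first identity of that lemma then absorbs the two $\LooseBPP$ layers, yielding $\Gamma\in\LooseBPP^{\LooseUP^C}$.

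I expect the main subtlety to be Step 1. Looseness requires the same unique witness for all completions. This is exactly why isolation must place $T_j\cap h^{-1}(0^l)$ inside $S_j$, rather than merely isolating one element of $S_j$, and why the $m+1$-step ladder is needed to guarantee a good level $j^*$. A secondary check is that applying \cref{lem:loose_up_bpp} to a class $C$ rather than a single problem is harmless, since one fixes a single $\Theta'\in C$ throughout.
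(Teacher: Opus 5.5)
Your proposal is correct and follows the paper's proof essentially step for step: the same sets $S_j,T_j$ with the halving argument, the same hash-based intermediate problem queried by a repeated randomized machine, and the same collapse via \cref{lem:loose_up_bpp}. Your $\Lambda_\yes$ condition, $|T_j\cap h^{-1}(0^l)|=1$ with $T_j\cap h^{-1}(0^l)\subseteq S_j$, is more precise than the paper's $\Psi_\yes$. The paper's version only asks for a unique hashed witness in $\Theta_\yes$, so read literally it allows off-promise hashed witnesses that break loose uniqueness; your condition is exactly what \cref{lem:nested_sets} delivers.
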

\begin{proof}
    For a fixed promised input $(x,\alpha,\beta,1^r)$ and $j\in[m+1]$, let $S_j = \{w: v_1(V,Q,x,w) \geq \alpha_j\}$, $T_j = \{w: v_1(V,Q,x,w) > \beta_j\}$ and $S_0 = \{w: v_1(V,Q,x,w) \geq \beta \}$. Notice that $S_j \subseteq T_j \subseteq S_{j-1} \dots \subseteq \{0,1\}^m$.

    Assume that $x\in \Gamma_\yes$. As $v(V,Q,x)\geq \alpha$, $S_{m+1}$ is non-empty. There is some $j\in[m+1]$ such that $\abs{T_j} \leq 2\abs{S_j}$ as if this was false, then $\abs{S_0} \geq 2^{m+1} \abs{S_{m+1}} \geq 2^{m+1}$, contradicting $S_0 \subseteq \{0,1\}^m$.

    Let us define the following promise problem $\Psi$. The input is a tuple $(x,\alpha,\beta, 1^r,j,l,h)$ where $l\in [0,m+1]$ and $h$ is a hash-function. An input is in $\Psi_\yes$ if and only if there exists a unique witness $w\in \{0,1\}^m$ such that $h(w)=0^l$ and $(x,\alpha,\beta, 1^r,w,j)\in \Theta_\yes$. On the other hand, the input is in $\Psi_\no$ if $(x,\alpha,\beta,1^r,w,j)\in \Theta_\no$ for all witnesses $w$. A verifier guesses $w$, checks whether $h(w)=0^l$, and accepts exactly when the $\Theta$ oracle answers yes. Therefore,
    \begin{align*}
        \Psi \in \LooseUP^{\Theta} \subseteq \LooseUP^{\LooseBPP^C}.
    \end{align*}
    Consider a randomized oracle machine $M$ which chooses a uniformly random $j\in [m+1]$ and chooses $l,h$ as in \cref{lem:nested_sets} and repeats this process polynomially many times, accepting if any run outputs $1$. Assuming $M$ guesses the index $j$ such that $\abs{T_j} \leq 2\abs{S_j}$, by \cref{lem:nested_sets} $M$ will choose a pair of $(l,h)$ which satisfy its conditions with probability at least $\tfrac{1}{16(m+2)}$. In this event, the query to $\Psi$ would be a valid yes instance. Hence, the probability of success is at least $\tfrac{1}{16(m+2)^2}$ per trial.

    Polynomial repetition raises this to any arbitrary constant less than $1$, let us say $\tfrac{2}{3}$. On the other hand, if $x\in \Gamma_\no$, we have that $v_1(V,Q,x,w)\leq \beta$ for all witnesses $w$. Hence, every query to $\Psi$ is a no instance, meaning that $M$ will always reject. By applying \cref{lem:loose_up_bpp}, we have that,
    \begin{align*}
        \Gamma \in \LooseBPP^{\LooseUP^{\LooseBPP^C}} &\subseteq \LooseBPP^{\LooseBPP^{\LooseUP^C}}\\
        &\subseteq \LooseBPP^{\LooseUP^C}.\qedhere
    \end{align*}
\end{proof}

\begin{lemma}\label{lem:qcph_in_bpp_uh}
    Fix $i\geq 0$, some verifier $V$ and a quantifier $Q \in \{\exists, \forall\}^i$. Then,
    \begin{align*}
        \qthr(V,Q)\in \LooseBPP^{\LooseUSigma{i}^{\thr(f)}}.
    \end{align*}
\end{lemma}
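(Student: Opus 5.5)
We proceed by induction on $i$, proving the statement simultaneously for all verifiers $V$ and all quantifier strings $Q$ of length $i$, and with thresholds $(\alpha,\beta,1^r)$ supplied as part of the input.

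\emph{Base case $i=0$.} Here $\qthr(V,\emptyset)$ reduces to $\thr(f)$ by mapping $(x,\alpha,\beta,1^r)$ to $(V(x),\alpha,\beta,1^r)$. A deterministic machine making this single query gives the same answer for every completion. It therefore places $\qthr(V,\emptyset)$ in $\LooseBPP^{\thr(f)} = \LooseBPP^{\LooseUSigma{0}^{\thr(f)}}$.

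\emph{Inductive step, $Q$ begins with $\exists$.} Let $Q = (\exists, Q')$ with $Q'$ of length $i-1$. Define the verifier $V'$ on input $((x,w),w_2,\dots,w_i)$ to output $V(x,w,w_2,\dots,w_i)$. Then $v(V',Q',(x,w)) = v_1(V,Q,x,w)$. The promise problem $\Theta$ of \cref{lem:threshold_safe_existential} reduces deterministically to $\qthr(V',Q')$ by mapping
\[
(x,\alpha,\beta,1^r,w,j) \mapsto ((x,w),\alpha_j,\beta_j,1^{r'}),
\]
where $r'$ is chosen so that $\alpha_j-\beta_j = \delta = (\alpha-\beta)/(m+1) \geq 1/(r(m+1))$. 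This polynomial blow-up of the gap parameter is harmless. By the induction hypothesis, $\Theta \in \LooseBPP^{\LooseUSigma{i-1}^{\thr(f)}}$. \cref{lem:threshold_safe_existential} with $C = \LooseUSigma{i-1}^{\thr(f)}$ then gives
\[
\qthr(V,Q) \in \LooseBPP^{\LooseUP^{C}} \subseteq \LooseBPP^{\LooseUSigma{i}^{\thr(f)}}.
\]

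\emph{Inductive step, $Q$ begins with $\forall$.} Use complementation. Let $\bar f = 1-f$ and let $\bar Q$ be $Q$ with every quantifier flipped. Then $v^f(V,Q,x) = 1 - v^{\bar f}(V,\bar Q,x)$, so $(x,\alpha,\beta,1^r)$ is a yes (resp.\ no) instance of $\qthr_f(V,Q)$ exactly when $(x,1-\beta,1-\alpha,1^r)$ is a no (resp.\ yes) instance of $\qthr_{\bar f}(V,\bar Q)$. Since $\bar Q$ starts with $\exists$, the previous case applied to $\bar f$ places this in $\LooseBPP^{\LooseUSigma{i}^{\thr(\bar f)}}$, and flipping the output bit of the outer machine preserves loose $\LooseBPP$. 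It remains to see that $\thr(\bar f)$ and $\thr(f)$ are interchangeable as oracles. Querying $\thr(\bar f)$ on $(z,\alpha,\beta,1^r)$ is answered by querying $\thr(f)$ on $(z,1-\beta,1-\alpha,1^r)$ and negating. Applied pointwise, this is a bijection between $\comp(\thr(\bar f))$ and $\comp(\thr(f))$. Hence every machine at every level of $\LooseUSigma{k}^{\thr(\bar f)}$ can be simulated at the same level of $\LooseUSigma{k}^{\thr(f)}$ with unchanged loose guarantees.

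\textbf{Main obstacle.} The delicate step is checking that these oracle substitutions preserve \emph{loose} semantics all the way up the hierarchy, in particular that uniqueness of witnesses holds across all completions. I would handle this with a general observation: if $\Pi \le \Pi'$ via a deterministic query map sending each completion of $\Pi'$ to a completion of $\Pi$, then $\LooseUP^{\Pi} \subseteq \LooseUP^{\Pi'}$ and $\LooseBPP^{\Pi}\subseteq \LooseBPP^{\Pi'}$. The reason is that the quantifier over all completions of $\Pi'$ only ranges over a subset of the completions of $\Pi$. Induction on the level then lifts this to all of $\LooseUSigma{k}$.
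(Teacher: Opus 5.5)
Your proof is correct and follows the paper's argument: induction on $i$, the base case via $x \mapsto V(x)$, the $\exists$ case via \cref{lem:threshold_safe_existential} applied to the level-$(i-1)$ instances, and the $\forall$ case by complementation with $1-f$. You are more careful than the paper in two places: you make the reduction from $\Theta$ to $\qthr(V',Q')$ explicit (gap parameter $r(m+1)$), and you justify swapping the base oracle $\thr(1-f)$ for $\thr(f)$ under loose semantics, which the paper leaves implicit. Note, however, that applying the $\exists$ case to $\bar f$ uses the induction hypothesis for $\bar f$, so the induction must be run simultaneously over all $f$, as the paper does.
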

\begin{proof}
    We induct over $i$, simultaneously for all $f$. When $i=0$, $\qthr(V,\emptyset)$ can be efficiently transformed to an instance of $\thr(f)$ by replacing $x$ with $V(x)$, implying the statement.

    Assume the statement holds for $i-1$ and consider any instance of $\qthr(V,Q)$ with $\abs{Q}=i$ and input $(x,\alpha, \beta, 1^r)$ in the promise. Suppose that the first quantifier in $Q$ is $\exists$. Then we have that $v(V,Q,x) = \max_{w} v_1(V,Q,x,w)$.

    Let $Q^\prime = (Q_2,\dots ,Q_i)$. For every first witness $w$ and threshold index $j$, the question whether $v_1(V,Q,x,w) \geq \alpha_j$ or $v_1(V,Q,x,w) \leq \beta_j$ is an instance of \qthr with $Q^\prime$. By induction, the resulting promise problem is in $\LooseBPP^{\LooseUSigma{i-1}^{\thr(f)}}$.
    By \cref{lem:threshold_safe_existential},
    \begin{align*}
        \qthr(V,Q) \in \LooseBPP^{\LooseUSigma{i}^{\thr(f)}}.
    \end{align*}
    Now, suppose that the first quantifier is $\forall$. Let $\tilde{f} = 1-f$ and $\tilde{Q}$ be obtained by swapping $\exists$ and $\forall$. Notice that by \cref{def:quantified_threshold},
    \begin{align*}
        v^{\tilde{f}}(V,\tilde{Q},x) = 1 - v(V,Q,x).
    \end{align*}
    By replacing $(\alpha,\beta)$ with $(1-\beta, 1-\alpha)$, we can map yes instances of $\qthr(V,Q)$ to no instances of $\qthr_{\tilde{f}}(V, \tilde{Q})$ and vice-versa. Hence they are complements of each other. As $\qthr_{\tilde{f}}(V, \tilde{Q})$ has $\exists$ at the start of $\tilde{Q}$, it is in $\LooseBPP^{\LooseUSigma{i}^{\thr(\tilde{f})}}$.  As $\LooseBPP^C$ is closed under complement for all promise oracles $C$, the same applies to $\qthr(V,Q)$.
\end{proof}

\subsection{Exact arithmetization}

In the following statement, \emph{uniform} means fixing the nondeterministic machine before the oracle completions.

\begin{lemma}[Exact \SPP representation]\label{lem:exact_spp_repr}
    Fix a promise problem $\Pi$. For any $i\geq 0$ and a promise problem $\Theta \in \LooseUSigma{i}^{\Pi}$, there exists a uniform family of functions $g^{A} \in \GapP^{A}$ for all $A\in\comp(\Pi)$ such that,
    \begin{align}
        x\in \Theta_{\yes} &\implies g^{A}(x) = 1\label{eq:spp_implication_yes}\\
        x\in \Theta_{\no} &\implies g^{A}(x) = 0.\label{eq:spp_implication_no}
    \end{align}
\end{lemma}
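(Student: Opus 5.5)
We argue by induction on $i$, simultaneously for all promise problems in $\LooseUSigma{i}^{\Pi}$. The inductive claim is slightly stronger than the statement: for every $A\in\comp(\Pi)$, the function $g^A$ takes values in $\{0,1\}$ on the promise of $\Theta$. Uniformity means that $g^A$ is computed by a single nondeterministic oracle machine $N$ (fixed independently of $A$) run with oracle $A$.

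\emph{Base case $i=0$.} Here $\Theta=\Pi$. Let $N^A(x)$ query $A$ on $x$ and output $1$ if the answer is yes and $0$ otherwise. This uses a single branch, so it is trivially a $\GapP^A$ function. Since $A$ is a completion, $g^A(x)=\Id_A(x)$ equals $1$ on $\Pi_\yes$ and $0$ on $\Pi_\no$.

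\emph{Inductive step.} Let $\Theta\in\LooseUP^{\Gamma}$ with $\Gamma\in\LooseUSigma{i}^{\Pi}$, witnessed by a deterministic machine $M$ and witness length $r$. By the induction hypothesis there is a uniform family $g_\Gamma^A$ that is exactly $0/1$ on the promise of $\Gamma$. The difficulty is that $M$ may query $\Gamma$ off-promise, where $g_\Gamma^A$ can take arbitrary integer values. Loose access is precisely what handles this. Fix an $A$. Every query $q$ on which $g_\Gamma^A(q)\notin\{0,1\}$ lies off the promise of $\Gamma$. Define $B_A$ to be the completion of $\Gamma$ containing $q$ exactly when $g_\Gamma^A(q)=1$. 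By the definition of $\LooseUP$, the behaviour of $M^{B_A}$ is correct: on a yes input there is exactly one $w$ with $M^{B_A}(x,w)=1$, and on a no input there is none. So it suffices to arithmetize the run of $M^{B_A}$ and to make sure that off-promise queries with bad values contribute zero.

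\emph{Arithmetizing the run.} For fixed $(x,w)$, consider all possible transcripts $\tau=(q_1,b_1,\dots,q_t,b_t)$ of polynomial length in which each $q_k$ is the query $M$ actually makes given the earlier answers, and $M$ accepts at the end. Define
\[
g^A(x)=\sum_{w}\sum_{\tau}\prod_{k=1}^{t}\bigl[b_k\, g_\Gamma^A(q_k)+(1-b_k)\bigl(1-g_\Gamma^A(q_k)\bigr)\bigr]\cdot \bigl[\text{all } g_\Gamma^A(q_k)\in\{0,1\}\bigr].
\]
The indicator is not directly a $\GapP$ operation. To implement it, replace each factor by the polynomial $P_b(y)$ of degree $2$ that sends $y=b$ to $1$ and $y=1-b$ to $0$, and multiply by an additional factor that vanishes on bad values. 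For example, one can use the Lagrange interpolation polynomial that is $1$ at $y=b$ and $0$ at $y=1-b$, multiplied by the test $[\,y\in\{0,1\}\,]$.

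\textbf{Main obstacle.} The factor $[\,y\in\{0,1\}\,]$ cannot be expressed polynomially in $y$ for unbounded integer $y$. I would resolve this in one of two ways. The first is to strengthen the induction hypothesis so that $g_\Gamma^A$ takes values in $\{0,1\}$ everywhere, using an explicit clean-up inside the $\GapP$ construction. The second, which I expect to be the right route, is to avoid the indicator altogether. Define $B_A$ by $q\in B_A\iff g_\Gamma^A(q)\neq 0$. Then consider the sum over transcripts whose recorded answers are consistent with $B_A$, weighted by $\prod_k\bigl[b_k\, g_\Gamma^A(q_k)+(1-b_k)\cdot 1\bigr]$ and normalized by the nonzero values.

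If these two routes fail, the fallback is to show that loose access lets one enforce $0/1$ values uniformly at every level, with closure of $\GapP^A$ under polynomial-length products (\cref{lem:closure_properties_gapp}) doing the rest.

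Once $g^A$ is well defined, the conclusion is direct. By loose uniqueness, on a yes input exactly one $(w,\tau)$ pair survives with weight $1$, so $g^A(x)=1$; on a no input none survives, so $g^A(x)=0$. Summation over $w$ and $\tau$, together with the products, stays in $\GapP^A$ by \cref{lem:closure_properties_gapp}, and the construction yields a uniform machine $N$.
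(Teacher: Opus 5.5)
There is a genuine gap: your construction never produces a $\GapP^A$ function. Everything rests on the factor $[\text{all } g_\Gamma^A(q_k)\in\{0,1\}]$. In your second route it rests instead on checking consistency with $B_A$ and normalizing by the nonzero values. None of these operations is available:
\begin{itemize}
\item No polynomial vanishes on all integers other than $0$ and $1$.
\item Route (1) needs a general $\GapP$ clean-up, and none exists: a test such as $[g\neq 0]$ or $[g\in\{0,1\}]$ applied to $\#\mathrm{SAT}$ (or $2\cdot\#\mathrm{SAT}$) would place $\NP$ inside $\SPP$.
\item Route (2) needs both such a zero test and division.
\item The fallback is not an argument.
\end{itemize}
Your last paragraph also pictures exactly one $(w,\tau)$ surviving with weight $1$. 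That cannot happen in a working construction. Off-promise queries receive non-Boolean values, many transcripts carry nonzero weight, and the correct total arises by cancellation.

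The missing idea is that no clean-up is needed: your weighted sum without the indicator already works. For $u=(x,w)$ let $P_u(Z)=\sum_{a\in\{0,1\}^t}\sigma(u,a)\prod_{k\in[t]}\bigl[a_k Z(q_k)+(1-a_k)(1-Z(q_k))\bigr]$, where $q_k=q_k(u,a_{<k})$ and $\sigma(u,a)$ is the output of $M$ under answers $a$. The argument then runs as follows.
\begin{enumerate}
\item After caching, the queries along any single transcript are distinct. So $P_u$ is multilinear in the variables $Z(q)$, and for Boolean $Z$ it equals $M^Z(u)$.
\item Loose access gives, for each $x$ in the promise of $\Theta$ and each $w$, a bit $c_x(w)$ with $M^B(x,w)=c_x(w)$ for every $B\in\comp(\Gamma)$. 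Moreover $\sum_w c_x(w)$ is $1$ on $\Theta_\yes$ and $0$ on $\Theta_\no$.
\item Fix the variables of promise queries of $\Gamma$ to their correct values. Every Boolean assignment to the remaining off-promise variables extends to a completion. So the restricted multilinear polynomial equals $c_x(w)$ on the whole Boolean cube.
\item By uniqueness of multilinear representations, it is identically the constant $c_x(w)$, and so takes that value at every integer point.
\item Substitute $Z(q)=g_\Gamma^A(q)$, which is correct on the promise of $\Gamma$ by induction and arbitrary elsewhere. This yields exactly $c_x(w)$.
\end{enumerate}
Summing over $w$ gives the required $1$ or $0$. Membership in $\GapP^A$, uniformly in $A$, follows from \cref{lem:closure_properties_gapp} with no indicator at all.
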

\begin{proof}
    We show the statement by induction on $i$. In the base case when $i=0$, we let $g^A(x)$ be the indicator function on whether $x\in A$. This is in $\GapP^A$ as the non-deterministic machine may query whether $x\in A$ and satisfies the conditions as the inputs in the promise are fixed in $A$.

    Assume the claim holds for $i$ and let $\Theta \in \LooseUSigma{i+1}^{\Pi}$. By \cref{def:uh}, there exists some $\Gamma \in \LooseUSigma{i}^{\Pi}$ such that $\Theta \in \LooseUP^{\Gamma}$. Furthermore, let $h^A \in \GapP^A$ be the application of the inductive hypothesis, meaning,
    \begin{align}
        z\in \Gamma_\yes \implies h^A(z) = 1,& &z\in \Gamma_\no \implies h^A(z)=0.\label{eq:inductive_hypothesis}
    \end{align}
    Let $M$ be the verifier witnessing $\Theta \in \LooseUP^{\Gamma}$, $r$ some polynomial and $w \in \{0,1\}^{r(\abs{x})}$ its witness. By \cref{def:loose_access_up}, we have that for any $x\in \Theta_{\yes} \cup \Theta_{\no}$ and witness $w$, there exists some $c_x(w) \in \{0,1\}$ such that $M^B(x,w) = c_x(w)$ for all $B\in\comp(\Gamma)$ and moreover,
    \begin{align}
        \sum_{w\in \{0,1\}^{r(\abs{x})}} c_x(w) = \begin{cases}
            1 &\text{if $x\in \Theta_\yes$,}\\
            0 &\text{if $x\in \Theta_\no$.}\label{eq:condition_on_machines}
        \end{cases}
    \end{align}
    Let us abbreviate $u=(x,w)$. Without loss of generality, we may assume that every computation uses exactly $t=t(\abs{u})$ distinct queries, as we may pad and cache queries.

    Let $a\in \{0,1\}^t$ represent a transcript of query answers, $q_i(u, a_{<i})$ denote the $i$th query made based on the previous query outputs $a_{<i} = (a_1,\dots,a_{i-1})$, $\sigma(u,a) \in \{0,1\}$ the output of $M$ under the transcript $a$ and $\calQ_u$ denote the set of all possible query inputs. Lastly, we let $Z:\calQ_u \to \{0,1\}$ be a variable and define the transcript polynomial $P_u(Z)$ as,
    \begin{align}
        P_u(Z) &\coloneq \sum_{a \in \{0,1\}^t} \sigma(u,a)\prod_{i\in [t]} [a_i \cdot Z(q_i(u, a_{<i})) + (1-a_i)(1 - Z(q_i(u, a_{<i})))].\label{eq:def_transcript_polynomial}
    \end{align}
    For any $B\in \comp(\Gamma)$, when $Z$ agrees with $B$ we have that the product in \cref{eq:def_transcript_polynomial} equals $1$ when $a$ is the transcript in $B$ and is $0$ otherwise. Hence $P_u(B) = M^B(u)$. Lastly, as no queries are repeated, $P_u(Z)$ is a multilinear polynomial.
    
    Next we fix all the variables $q\in \Gamma_{\yes}\cup \Gamma_\no$ such that $Z(q) = \Gamma(q)$ and leave the rest free. Every Boolean assignment to the remaining off-promise variables corresponds to a completion $B$, meaning that the resulting multilinear polynomial equals $c_x(w)$ on all inputs. By uniqueness of multilinear representations on $\{0,1\}^n$~\cite{OD14}, the polynomial $P_u(Z)$ after fixing the promise-input variables is exactly represented as the constant $c_x(w)$. Therefore, we may substitute arbitrary integer values for the query result and still get the same output.

   Let us substitute $Z$ with $h^A$, obtaining,
   \begin{align*}
        H^A(u) &\coloneq \sum_{a\in \{0,1\}^t} \sigma(u,a) \prod_{i\in [t]}[a_i h^A(q_i(u, a_{<i}))  + (1-a_i) (1-h^A(q_i(u, a_{<i})))].
   \end{align*}
   By \cref{eq:inductive_hypothesis} and the discussion above on invalid queries, $h^A(q)$ behaves the same way as $Z$ conditioned on the promise. Hence for all $A\in\comp(\Pi)$,
   \begin{align}
     H^A(u) = c_x(w) \label{eq:equivalence_between_machines}.
   \end{align}
   Notice that as $q_j$ and $\sigma(u,a)$ are computable in polynomial time, $h^A(q_j(u,a_{<j})) \in \GapP^A$. By \cref{lem:closure_properties_gapp}, polynomially-long products and exponential sums are in $\GapP^A$ as well, meaning that $H^A(x,w)\in \GapP^A$.

   Finally, let $g^A(x) = \sum_{w} H^A(x,w)$. Again by \cref{lem:closure_properties_gapp}, $g^A \in \GapP^A$. Finally, \cref{eq:condition_on_machines,eq:equivalence_between_machines} imply \cref{eq:spp_implication_yes,eq:spp_implication_no}, completing the proof.
\end{proof}

Next, we compress the random seed of loose-access computation into the input, turning it into a randomized many-one reduction.

\begin{lemma}[Seed compression]\label{lem:seed_compression}
    Let $\Gamma \in \LooseBPP^\Theta$ via a machine $M$ using $r$ random bits, and suppose there is a uniform family of functions $h^A\in \GapP^A$ for all $A\in \comp(\Pi)$ which equals $1$ on $\Theta_\yes$ and $0$ on $\Theta_\no$. Define the promise problem $\Phi$ on pairs $(x,s)$ with $s\in \{0,1\}^{r(\abs{x})}$ as,
    \begin{align*}
        (x,s) \in \Phi_\yes &\iff x\in \Gamma_\yes \text{ and } \forall B\in \comp(\Theta),\; M^B(x,s) = 1,\\
        (x,s) \in \Phi_\no &\iff x\in \Gamma_\no \text{ and } \forall B\in \comp(\Theta),\; M^B(x,s) = 0.
    \end{align*}
    Then $\Gamma \in \BPdot\Phi$, and there is a uniform family of functions $g^A \in \GapP^A$ for all $A\in \comp(\Pi)$ which equals $1$ on $\Phi_\yes$ and $0$ on $\Phi_\no$.
\end{lemma}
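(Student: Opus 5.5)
The plan has two parts: the reduction $\Gamma \in \BPdot\Phi$, and the $\GapP$ representation of $\Phi$. The representation reuses the transcript-polynomial argument from the proof of \cref{lem:exact_spp_repr}.

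\textbf{The reduction.} Take $R(x,s) = (x,s)$ with $s$ uniform in $\{0,1\}^{r(\abs{x})}$. Suppose $x\in\Gamma_\yes$. By the definition of $\LooseBPP^\Theta$, at least $\tfrac{2}{3}2^{r(\abs{x})}$ seeds $s$ satisfy $M^B(x,s)=1$ for every $B\in\comp(\Theta)$. Each such seed gives $(x,s)\in\Phi_\yes$, by the definition of $\Phi$. The no case is symmetric. Hence $\Gamma\in\BPdot\Phi$, and the promise of $\Phi$ is exactly the set of pairs on which the loose guarantee holds.

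\textbf{The representation.} Fix $(x,s)$ in the promise of $\Phi$, and set $u=(x,s)$. By padding and caching, assume $M$ makes exactly $t$ distinct queries. Form the transcript polynomial $P_u(Z)$ as in \cref{eq:def_transcript_polynomial}, using the queries $q_i(u,a_{<i})$ and the output bit $\sigma(u,a)$ of $M$ on transcript $a$. Since $(x,s)$ is promised, there is a constant $c\in\{0,1\}$ with $M^B(x,s)=c$ for every $B\in\comp(\Theta)$; here $c=1$ on $\Phi_\yes$ and $c=0$ on $\Phi_\no$. Now fix every variable $Z(q)$ with $q\in\Theta_\yes\cup\Theta_\no$ to its promised value. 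Each Boolean assignment to the remaining off-promise variables is a completion, so the restricted multilinear polynomial is identically $c$ on the Boolean cube. By uniqueness of multilinear representations it is then the constant polynomial $c$, so it equals $c$ under any substitution of integer values for the free variables.

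Next, substitute $h^A(q_i(u,a_{<i}))$ for $Z(q_i(u,a_{<i}))$, which gives
\begin{align*}
    g^A(x,s) \coloneq \sum_{a\in\{0,1\}^t}\sigma(u,a)\prod_{i\in[t]}\bigl[a_i h^A(q_i(u,a_{<i})) + (1-a_i)(1-h^A(q_i(u,a_{<i})))\bigr].
\end{align*}
On promised queries, $h^A$ agrees with $\Theta$, so the substitution matches the fixed variables. On off-promise queries $h^A$ takes arbitrary integer values, which the argument above tolerates. Therefore $g^A(x,s)=c$ for every $A\in\comp(\Pi)$. Membership $g^A\in\GapP^A$ follows from \cref{lem:closure_properties_gapp}: $q_i$ and $\sigma$ are polynomial-time computable, products of $t$ factors stay in $\GapP^A$, and the sum over $a\in\{0,1\}^t$ is an exponential sum. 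Uniformity holds because the nondeterministic machine defining $g^A$ is built from $M$ and the uniform machine for $h^A$, both independent of $A$.

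\textbf{Main obstacle.} The delicate step is justifying that off-promise substitutions of $h^A$, which may be arbitrary integers rather than bits, do not change the value. This is exactly where the loose guarantee is needed. Robust access alone would make $M^B(x,s)$ depend on $B$ for a fixed $s$, and the restricted polynomial would no longer be constant. One must also check that the queries in the restricted polynomial are well defined independent of $B$. This holds because $P_u$ is a formal sum over all transcripts, so adaptivity is handled inside the polynomial, and distinctness of queries along each path keeps it multilinear.
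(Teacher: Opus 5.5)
Your proposal is correct and follows the paper's proof essentially verbatim. Like the paper, you use the identity reduction $R(x,s)=(x,s)$ to get $\Gamma\in\BPdot\Phi$, and you reuse the transcript-polynomial argument from \cref{lem:exact_spp_repr} with $u=(x,s)$, relying on the loose guarantee to make $M^B(x,s)$ constant over all completions on the promise of $\Phi$; you also spell out details the paper leaves implicit, namely the multilinear-uniqueness step and uniformity.
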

\begin{proof}
    By the definition of $\LooseBPP^\Theta$, the map $R(x,s) = (x,s)$ is a randomized reduction from $\Gamma$ to $\Phi$. For a fixed $s$, $M(x,s)$ is a deterministic polynomial-time oracle machine whose output on $(x,s) \in \Phi_\yes \cup \Phi_\no$ is the same for every $B\in \comp(\Theta)$. Hence the argument in the proof of \cref{lem:exact_spp_repr} from \cref{eq:def_transcript_polynomial} to \cref{eq:equivalence_between_machines} applies verbatim with $u = (x,s)$ in place of $(x,w)$, yielding $g^A(x,s) \coloneq H^A(x,s) \in \GapP^A$ which equals $M^B(x,s)$ for all $B\in \comp(\Theta)$ and $A\in \comp(\Pi)$.
\end{proof}

Combining the results so far, we obtain the following statement for arbitrary $f$.

\begin{theorem}\label{thm:generic_qthr}
    Fix $i\geq 0$, some verifier $V$ and a quantifier $Q\in \{\exists,\forall\}^i$. Then there exists a promise problem $\Phi$ such that $\qthr(V,Q) \in \BPdot\Phi$, and a uniform family of functions $g^A \in \GapP^A$ for all $A\in \comp(\thr(f))$ which equals $1$ on $\Phi_\yes$ and $0$ on $\Phi_\no$.
\end{theorem}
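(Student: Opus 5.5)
The plan is to chain the three preceding results in order: \cref{lem:qcph_in_bpp_uh}, then \cref{lem:exact_spp_repr}, then \cref{lem:seed_compression}.

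\textbf{Step 1.} Apply \cref{lem:qcph_in_bpp_uh} with the given $i$, $V$ and $Q$. This yields
\begin{align*}
    \qthr(V,Q)\in \LooseBPP^{\Theta}
\end{align*}
for some promise problem $\Theta \in \LooseUSigma{i}^{\thr(f)}$. Here we use that a union over oracle problems just means some single $\Theta$ witnesses membership. Fix the machine $M$ and the polynomial number $r$ of random bits witnessing this membership.

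\textbf{Step 2.} Apply \cref{lem:exact_spp_repr} with $\Pi = \thr(f)$ to $\Theta$. This gives a uniform family $h^A \in \GapP^A$, indexed by $A\in\comp(\thr(f))$, with $h^A=1$ on $\Theta_\yes$ and $h^A=0$ on $\Theta_\no$.

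\textbf{Step 3.} These are exactly the hypotheses of \cref{lem:seed_compression}, taking $\Gamma=\qthr(V,Q)$ and $\Pi=\thr(f)$. Define $\Phi$ as in that lemma, on pairs $(x,s)$. The lemma then gives directly both $\qthr(V,Q)\in\BPdot\Phi$ and a uniform family $g^A\in\GapP^A$ for $A\in\comp(\thr(f))$ with $g^A=1$ on $\Phi_\yes$ and $g^A=0$ on $\Phi_\no$. This is the statement.

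The only point needing care, which I expect to be the main (minor) obstacle, is that the uniformity notions match across Steps 2 and 3. The family $h^A$ must come from a single nondeterministic machine, fixed before $A$, which is what \cref{lem:exact_spp_repr} provides. The substitution in \cref{lem:seed_compression} requires that each query made by $M(x,s)$ on a promise input $(x,s)$ either lies in the promise of $\Theta$, where $h^A$ is Boolean and correct for every $A$, or is off-promise. Off-promise queries are harmless because multilinearity makes the transcript polynomial constant after fixing the promise variables.

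Since \cref{lem:seed_compression} already asserts this verbatim, no new argument is needed. I would only note two bookkeeping checks. First, the $i=0$ case goes through the same way: $\Theta$ is then just $\thr(f)$, and the base case of \cref{lem:exact_spp_repr} is the oracle indicator. Second, the quantifier complementation in \cref{lem:qcph_in_bpp_uh} is handled uniformly over $f$, via $\thr(1-f)$, which is the same oracle up to the parameter swap $(\alpha,\beta)\mapsto(1-\beta,1-\alpha)$. So the oracle remains a completion of $\thr(f)$ after a polynomial-time query transformation, and $\GapP^A$ is closed under it.
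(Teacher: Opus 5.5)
Your proposal is correct and follows the paper's proof exactly: it chains \cref{lem:qcph_in_bpp_uh}, then \cref{lem:exact_spp_repr} with $\Pi=\thr(f)$, then \cref{lem:seed_compression}. Your bookkeeping remark also fills a step the paper leaves implicit in the $\forall$ case of \cref{lem:qcph_in_bpp_uh}: $\thr(1-f)$ reduces to $\thr(f)$ via $(\alpha,\beta)\mapsto(1-\beta,1-\alpha)$, provided you also negate the oracle answer.
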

\begin{proof}
    By \cref{lem:qcph_in_bpp_uh}, there exists some $\Theta\in \LooseUSigma{i}^{\thr(f)}$ such that $\qthr(V,Q) \in \LooseBPP^{\Theta}$. By \cref{lem:exact_spp_repr} with $\Pi = \thr(f)$, $\Theta$ has a uniform family $h^A\in \GapP^A$ which equals $1$ on $\Theta_\yes$ and $0$ on $\Theta_\no$. The statement follows from \cref{lem:seed_compression}.
\end{proof}

Next, we show how to use $\PP$ to solve problems with a $\PromiseBQP$ oracle.

\begin{lemma}\label{lem:bqp_completion_pp}
    Let $\Pi \in \PromiseBQP$ and $\Theta$ be a promise problem. Suppose that there is some nondeterministic polynomial-time oracle machine which for all $A\in \comp(\Pi)$ creates some $g^A \in \GapP^A$ such that,
    \begin{align*}
        x \in \Theta_{\yes} &\implies g^A(x) \geq 1\\
        x \in \Theta_{\no} &\implies g^A(x) \leq -1.
    \end{align*}
    Then $\exists C \in \comp(\Theta)$ such that $C\in \PP$.
\end{lemma}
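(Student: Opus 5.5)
The approach is to replace each oracle answer $\Id_A(q)$ inside the nondeterministic computation of $g^A$ by the $\GapP$ representation $H(q,1^k)/2^{p(\abs{q}+k)}$ from \cref{lem:bqp_gapP_representation}. The total error must stay below $1$, and clearing denominators should then yield a single $\GapP$ function $G$ with $\mathrm{sign}(G(x))=\mathrm{sign}(g^A(x))$ on the promise. Off-promise queries need no separate handling. We will show that the resulting real-valued "fractional oracle" $\tilde A(q)=H(q,1^k)/2^{p}$ is close to some completion $A$. The language $C=\{x : G(x)>0\}$ is then in $\PP$ and is a completion of $\Theta$, which proves the lemma.

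\textbf{Step 1 (transcript expansion).} Write $g^A(x)=\sum_{y}\sum_{a} \sigma(x,y,a)\prod_{i\le t}[a_i Z(q_i)+(1-a_i)(1-Z(q_i))]$, exactly as in \cref{eq:def_transcript_polynomial}. Here $y$ ranges over nondeterministic branches, $a\in\{0,1\}^t$ over query transcripts, and $\sigma\in\{-1,0,1\}$. Substituting $Z=\Id_A$ recovers $g^A$. Substituting $Z(q)=\tilde A(q)\in[0,1]$ gives a multilinear expression $\tilde g(x)$. Multiplying by $2^{t\,p}$ (after padding so that all queries have a common length bound, making $p$ a single polynomial) gives $G(x)=2^{tp}\tilde g(x)$, which lies in $\GapP$ by \cref{lem:closure_properties_gapp}, since $2^p-H$ is also in $\GapP$.

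\textbf{Step 2 (error bound; the main obstacle).} We must compare $\tilde g(x)$ with $g^A(x)$ for a suitable completion $A$. Define $A(q)=1$ iff $\tilde A(q)\ge 1/2$. This is a completion, because on promise queries $\tilde A$ is within $2^{-k}$ of the correct bit. The difficulty is that off-promise queries can have $\tilde A(q)$ near $1/2$, so a naive hybrid does not bound the error. The fix is to change the substituted value: instead of $\tilde A$, plug in a rounded version $\hat A(q)=\mathrm{Amp}(\tilde A(q))$. Here $\mathrm{Amp}$ is a low-degree polynomial, realizable in $\GapP$ as, for instance, the majority-of-$\ell$ polynomial $\sum_{j>\ell/2}\binom{\ell}{j}\tilde A^j(1-\tilde A)^{\ell-j}$. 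This does not remove the problem near $1/2$ either.

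So the plan instead keeps exactness on the promise and argues directly for the chosen $A$. For each fixed branch $y$, the per-branch sum over $a$ is a product-distribution expectation $\E_{Z\sim\tilde A}[\sigma(x,y,\text{transcript}(Z))]$. This is a convex combination over random oracles $Z$ drawn independently with $\Pr[Z(q)=1]=\tilde A(q)$. Hence $\tilde g(x)=\E_Z[g^Z(x)]$. On every promise query $Z$ agrees with $\Pi$ except with probability $2^{-k}$ per query. Given that agreement, $Z$ is itself a completion, so $g^Z(x)\ge 1$ on yes-instances and $g^Z(x)\le -1$ on no-instances. By a union bound over at most $t\cdot 2^{\poly}$ queries across all branches, the failure event has probability at most $2^{\poly-k}$, and there $|g^Z|\le 2^{\poly}$. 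Choosing $k$ polynomially large makes the error below $1/2$. Therefore $\tilde g(x)>1/2$ on $\Theta_\yes$ and $\tilde g(x)<-1/2$ on $\Theta_\no$.

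\textbf{Step 3 (conclusion).} Let $G'(x)=2G(x)-2^{tp}$, or equivalently use the threshold $\tilde g>0$ directly, since $2^{tp}$ is a positive scaling. Define $C=\{x: G(x)>0\}\in\PP$. By Step 2, $C\supseteq\Theta_\yes$ and $C\cap\Theta_\no=\emptyset$, so $C\in\comp(\Theta)$. The delicate point is Step 2: we need the uniformity of the machine across completions, so that the same machine is evaluated under every random $Z$ that agrees with $\Pi$ on the promise. We also need the expectation interpretation of the multilinear transcript expansion, which makes off-promise values harmless.
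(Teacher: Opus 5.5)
Your final argument in Step 2 is the paper's proof: read the substituted transcript polynomial as $\E_{\calO\sim\mu}[g^{\calO}(x)]$ for the product distribution with $\Pr[\calO(q)=1]=H(q,1^k)/2^{p(\abs{q}+k)}$ (valid because the queries within a fixed transcript are distinct), union-bound the chance that $\calO$ errs on some promise query in $\calQ_x$, and clear denominators to get $G\in\GapP$ and $C=\{x: G(x)>0\}\in\PP$. The rounding/amplification detour is unnecessary. The common denominator should be obtained as in the paper, by scaling $H(q,1^k)$ by $2^{p(t+k)-p(\abs{q}+k)}$ (using that $p$ is nondecreasing), rather than by padding the queries, which would change the instances of $\Pi$.
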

\begin{proof}
    Fix some input $x$. By \cref{def:gapP_pp}, we may decompose $g$ as,
    \begin{align*}
        g^A(x) = \sum_{y\in \{0,1\}^{t(\abs{x})}} F^A(x,y),
    \end{align*}
    where $F^A(x,y) \in \{-1,0,1\}$ is a deterministic polynomial-time oracle branch verifier. Let $t=t(\abs{x})$. Without loss of generality as multiplying $g^A$ be a positive integer does not change its sign, we may assume that $t$ bounds the number and length of oracle queries made by $F^A(x,y)$. By caching to avoid repeated queries and padding with irrelevant queries, we assume that $t\geq 1$ and exactly $t$ distinct queries are made.

    Using the notation in \cref{lem:exact_spp_repr}, let $a_{<i} = (a_1,\dots,a_{i-1}) \in \{0,1\}^{i-1}$ be a set of query answers, $q_i(x,y,a_{<i})$ the $i$th query based on the previous query results which $F^A(x,y)$ makes and $\sigma(x,y,a)\in \{-1,0,1\}$ the final output using answers $a$.

    We define the set of all queries made as $\calQ_{x} = \{q_i(x,y,a_{<i}): y,a\in \{0,1\}^t, i\in [t]\}$. Immediately, we have that $\abs{\calQ_x} \leq t2^{2t}$. Letting $k=6t$, let $H\in \GapP$ and the nondecreasing polynomial $p\in \poly(\abs{x})$ be as in \cref{lem:bqp_gapP_representation} for $\Pi$. Last, we define the following function,
    \begin{align*}
        P_x(q) &\coloneq \frac{H(q,1^k)}{2^{p(\abs{q}+k)}}.
    \end{align*}
    Let us define the distribution $\mu$ over oracle assignment functions $\calO: \calQ_x \to \{0,1\}$ as the product distribution where, independently for each $q\in \calQ_x$, $\calO(q)=1$ with probability $P_x(q)$. Let $E_x$ denote the event that every $q\in \calQ_x \cap (\Pi_\yes \cup \Pi_\no)$ is answered correctly by $\calO$. Using the union bound,
    \begin{align*}
        \Pr_\mu[\neg E_x] &\leq \abs{\calQ_x} 2^{-6t}\\
        &\leq 2^{-3t} \eqcolon \delta.
    \end{align*}
    When $E_x$ occurs, $\calO$ is a completion of $\Pi$. Hence,
    \begin{align*}
        x \in \Theta_{\yes} &\implies g^{\calO}(x) \geq 1\\
        x \in \Theta_{\no} &\implies g^{\calO}(x) \leq -1.
    \end{align*}
    As $\abs{g^{\calO}(x)} \leq 2^t$ for all functions $\calO$,
    \begin{align*}
        x \in \Theta_{\yes} &\implies \E_\mu[g^{\calO}(x)] \geq 1 - \delta(2^t +1) > 0\\
        x \in \Theta_{\no} &\implies \E_\mu[g^{\calO}(x)] \leq -1 + \delta(2^t + 1) < 0.
    \end{align*}
    Notice that we may write $g^\calO$ as follows,
    \begin{align*}
        g^{\calO}(x) = \sum_{y,a\in \{0,1\}^t} \sigma(x,y,a) \prod_{i\in [t]} [a_i \calO(q_i(x,y,a_{<i})) + (1-a_i)(1-\calO(q_i(x,y,a_{<i})))],
    \end{align*}
    as the product only equals $1$ if $a$ agrees with $\calO$ and $0$ otherwise and hence the sum over all $a$ represents the result of a single $F^{\calO}(x,y)$. Notice that in a fixed transcript the queries made are distinct. Therefore, the sampled answers based on $\calO$ are independent. Using the linearity of expectation,
    \begin{align*}
        G(x) &\coloneq 2^{p(t+k)t} \E_{\mu}[g^{\calO}(x)]\\
        &= 2^{p(t+k)t} \sum_{y,a\in \{0,1\}^t} \sigma(x,y,a) \prod_{i\in [t]} [a_i P_x(q_i(x,y,a_{<i})) + (1-a_i)(1-P_x(q_i(x,y,a_{<i})))]\\
        &= \sum_{y,a\in \{0,1\}^t} \sigma(x,y,a) \prod_{i\in [t]} [a_i \hat{H}_x(q_i(x,y,a_{<i})) + (1-a_i)(2^{p(t+k)} - \hat{H}_x(q_i(x,y,a_{<i})))],
    \end{align*}
    where we let $\hat{H}_x(q) = 2^{p(t+k) - p(\abs{q}+k)}H(q,1^k)$. As $\abs{q}\leq t$ and $p$ is nondecreasing, the exponent is nonnegative, and $P_x(q) = \tfrac{\hat{H}_x(q)}{2^{p(t+k)}}$. Both $\hat{H}_x(q)$ and $2^{p(t+k)} - \hat{H}_x(q)$ are in $\GapP$.
    Every factor is a \GapP function of $x,y,a$, meaning that by \cref{lem:closure_properties_gapp} $G\in \GapP$. Therefore, we have that $C_{\Theta} \coloneq \{z: G(z) > 0\}$ is in $\PP$. As $\Theta_\yes \subseteq C_{\Theta}$ and $\Theta_{\no} \cap C_{\Theta} = \emptyset$, it is a valid completion of $\Theta$.
\end{proof}

Notice the immediate result from \cref{lem:bqp_completion_pp}.

\begin{corollary} Under robust promise-queries,
    \begin{align*}
        \PP^{\PromiseBQP} = \PP.
    \end{align*}
\end{corollary}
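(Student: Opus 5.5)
The inclusion $\PP \subseteq \PP^{\PromiseBQP}$ is trivial, since a $\PP$ machine may simply ignore its oracle. The plan is to obtain the reverse inclusion directly from \cref{lem:bqp_completion_pp}. Let $\Gamma \in \PP^{\Pi}$ under robust promise-queries for some $\Pi \in \PromiseBQP$. By \cref{def:robust_promise_queries} there is a single nondeterministic polynomial-time oracle machine $N$ such that $\Gamma$ is decided correctly by $N^A$ for every completion $A \in \comp(\Pi)$. I will argue that this machine yields a uniform family $g^A \in \GapP^A$ satisfying the hypotheses of \cref{lem:bqp_completion_pp} with $\Theta = \Gamma$ (viewing $\Gamma$ either as a language or as a promise problem in $\PP^\Pi$). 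The lemma then provides a completion $C \in \PP$ of $\Gamma$, which suffices.

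The key step is converting the $\PP$ acceptance condition into the strict sign condition $g \geq 1$ versus $g \leq -1$. The standard definition only gives some $f^A \in \GapP^A$ with $x \in \Gamma_\yes \iff f^A(x) > 0$, which is asymmetric. I will set $g^A(x) = 2f^A(x) - 1$, which lies in $\GapP^A$ by \cref{lem:closure_properties_gapp}, where the constant $1$ is realized by a trivial branch.
\begin{itemize}
    \item If $f^A(x) \geq 1$, then $g^A(x) \geq 1$.
    \item If $f^A(x) \leq 0$, then $g^A(x) \leq -1$.
\end{itemize}
The obligation here is uniformity. The branch verifier for $g^A$ must be one fixed oracle machine built from $N$, independent of $A$, and only the oracle answers may vary with $A$. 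This holds because $f^A$ is defined as the gap of the fixed machine $N$ run with oracle $A$. The same is true of the closure constructions in \cref{lem:closure_properties_gapp}, since they relativize to a fixed oracle.

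Applying \cref{lem:bqp_completion_pp} then gives a completion $C \in \PP$ with $\Gamma_\yes \subseteq C$ and $\Gamma_\no \cap C = \emptyset$. If $\Gamma$ is a language, then $C = \Gamma$ and $\Gamma \in \PP$. If $\Gamma$ is a promise problem, then $C$ witnesses that $\Gamma$ lies in the promise version of $\PP$.

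I expect the main obstacle to be only the bookkeeping in the sign step and the uniformity check. The genuinely difficult content, namely replacing off-promise oracle answers by sampled $\GapP$ probabilities, is already handled inside \cref{lem:bqp_completion_pp}.
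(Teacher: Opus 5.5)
Your proposal is correct and takes the same route as the paper, which treats this corollary as an immediate consequence of \cref{lem:bqp_completion_pp}. Your shift $g^A = 2f^A - 1$, used to turn the $\PP$ acceptance condition into the strict $\geq 1$ versus $\leq -1$ gap while keeping the branch verifier independent of the completion, is the same conversion the paper applies in the proof of \cref{thm:qcph_bpp_pp}.
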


Finally, we may combine all the results above together.

\begin{corollary}[Quantum-Classical Toda's Theorem]\label{thm:qcph_bpp_pp}
    \begin{align*}
        \QCPH \subseteq \BPdot\PP \subseteq \PromiseBPP^\PP.
    \end{align*}
\end{corollary}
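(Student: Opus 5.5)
To prove the corollary, I would instantiate the generic machinery with $f=\qacc$. Take any $\Pi\in\QCPH$, say $\Pi\in\QCSigma{i}$. By \cref{obs:qcph_redefinition} there is a verifier $V$ and an alternating quantifier string $Q$ such that $x\mapsto(x,\tfrac23,\tfrac13,111)$ maps $\Pi_\yes$ into $\qthr(V,Q)_\yes$ and $\Pi_\no$ into $\qthr(V,Q)_\no$. Since this is a deterministic many-one reduction, it is enough to place $\qthr(V,Q)$ in $\BPdot\PP$.

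By \cref{thm:generic_qthr} there is a promise problem $\Phi$ such that $\qthr(V,Q)\in\BPdot\Phi$. The same theorem gives a uniform family $g^A\in\GapP^A$, over all $A\in\comp(\thr(\qacc))$, with $g^A=1$ on $\Phi_\yes$ and $g^A=0$ on $\Phi_\no$. The next step is to pass to a sign condition: set $\tilde g^A\coloneq 2g^A-1$, which lies in $\GapP^A$ by \cref{lem:closure_properties_gapp} (the constant $1$ is trivially in $\GapP$) and is still uniform, since the nondeterministic machine is fixed before the completion. Then $\tilde g^A\ge1$ on $\Phi_\yes$ and $\tilde g^A\le -1$ on $\Phi_\no$.

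Since $\thr(\qacc)\in\PromiseBQP$, \cref{lem:bqp_completion_pp} applies with $\Pi=\thr(\qacc)$ and $\Theta=\Phi$, and yields a completion $C\in\comp(\Phi)$ with $C\in\PP$. Now take the randomized reduction $R$ witnessing $\qthr(V,Q)\in\BPdot\Phi$. On a yes input, at least $\tfrac23$ of the seeds $s$ give $R(x,s)\in\Phi_\yes\subseteq C$. On a no input, at least $\tfrac23$ of the seeds give $R(x,s)\in\Phi_\no$, which is disjoint from $C$. So $R$ is also a randomized reduction to the language $C$, viewed as the promise problem $(C,\overline C)$, and $\qthr(V,Q)\in\BPdot\PP$. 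Composing with the deterministic reduction from $\Pi$ gives $\QCPH\subseteq\BPdot\PP$.

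For the second inclusion, $\BPdot\Theta\subseteq\LooseBPP^\Theta\subseteq\PromiseBPP^\Theta$, as noted after \cref{def:bp_op}: the machine makes a single query and accepts on its answer. When the oracle is a \PP language, the robust and loose notions coincide with ordinary oracle access, so $\BPdot\PP\subseteq\PromiseBPP^\PP$.

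The only delicate point is checking that \cref{lem:bqp_completion_pp} really applies. Its hypothesis asks for a single nondeterministic oracle machine that works for every completion of the $\PromiseBQP$ problem, and this is exactly the uniformity that \cref{thm:generic_qthr} supplies. I also need the shift $2g^A-1$ to keep that uniformity, which it does because the same construction is applied to every $A$.
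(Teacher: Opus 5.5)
Your proposal is correct and follows the paper's proof essentially step for step. You instantiate $f=\qacc$, take $\Phi$ and the uniform $\GapP^A$ family from \cref{thm:generic_qthr}, shift to $2g^A-1$, apply \cref{lem:bqp_completion_pp} to get a \PP completion of $\Phi$, and note that the randomized reduction to $\Phi$ is also a reduction to that completion. Your explicit composition with the deterministic map from \cref{obs:qcph_redefinition}, and your remark that loose and robust access coincide for a language oracle, just spell out details the paper leaves implicit.
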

\begin{proof}
    Let $f = \qacc$ and $\Pi \coloneq \thr(\qacc) \in \PromiseBQP$. Consider some promise problem $\Gamma \in \QCPH$. By \cref{obs:qcph_redefinition}, we can consider some verifier $V$ and quantifier $Q\in \{\exists,\forall\}^i$. By \cref{thm:generic_qthr}, there exists some promise problem $\Phi$ such that $\Gamma \in \BPdot\Phi$ and a uniform family $h^A\in \GapP^A$ for $A\in \comp(\Pi)$ which equals $1$ on $\Phi_\yes$ and $0$ on $\Phi_\no$. Therefore $g^A = 2h^A - 1$ satisfies the conditions of \cref{lem:bqp_completion_pp}, so $\Phi$ has a completion $C\in \PP$. As the randomized reduction from $\Gamma$ to $\Phi$ maps yes instances into $\Phi_\yes \subseteq C$ and no instances into $\Phi_\no \subseteq \overline{C}$ with probability at least $\tfrac{2}{3}$, it is also a randomized reduction from $\Gamma$ to $C$, meaning $\Gamma \in \BPdot\PP$. The second inclusion was noted after \cref{def:bp_op}.
\end{proof}

\section{Promise problems and languages in relativized worlds}\label{sec:promise_separation}

In this section, we show that there exists a promise problem separating randomized and quantum machines in the relativized world where all of their languages are equal. To do so, we will encode instances of Simon's problem into a Cohen generic oracle.

For the remainder of the section, we will let $\calT$ represent the \tqbf language. 

\subsection{Simon's problem}

\begin{problem}\label{prob:simons}
    For $n\in \mathbb{N}$, let $\simyes$ denote the set of permutations over $\{0,1\}^n$ and $\simno$ denote the set of functions for which there exists some $s\in \{0,1\}^n \setminus \{0^n\}$ such that,
    \begin{align*}
        \forall x,y\in \{0,1\}^n, f(x) = f(y) \iff y\in \{x,x\oplus s\}.
    \end{align*}
    Simon's problem is to decide, given the promise that $f\in \simyes \cup \simno$, which set contains $f$.

    We use $\mu_n$ to denote the distribution where with probability $\tfrac{1}{2}$ we choose a uniformly random $f\in \simyes$ and otherwise a uniformly random $f\in \simno$.
\end{problem}

We will use the following textbook result.

\begin{theorem}[In~\cite{Sim97,BH97}]\label{thm:simons_result}
    The following hold about \cref{prob:simons}:
    \begin{enumerate}
        \item There exists an exact $\poly(n)$-query quantum algorithm solving \cref{prob:simons}.
        \item Every randomized query algorithm $R$ making at most $2^{n/4}$ queries satisfies,
        \begin{align*}
            \Pr_{f \sim \mu_n}[R^f\text{ answers correctly}] \leq \frac{1}{2} + 2^{-n/2}.
        \end{align*}
    \end{enumerate}
\end{theorem}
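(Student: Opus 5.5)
Both parts of \cref{thm:simons_result} are classical: part 1 is Simon's algorithm and part 2 is the standard birthday-type lower bound. I would prove them separately.

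For the exact quantum algorithm, I would use the Brassard--H{\o}yer variant~\cite{BH97} of Simon's algorithm. One query to $f$ on a uniform superposition, followed by Hadamards on the input register, yields a uniformly random $y$ with $y\cdot s=0$ when $f\in\simno$, and a uniformly random $y\in\{0,1\}^n$ when $f\in\simyes$. Plain Simon sampling has some failure probability. To make it exact, I would follow Brassard--H{\o}yer and maintain a subspace $Y$ of already-found orthogonal vectors. At each of at most $n$ rounds, I would apply amplitude amplification with exactly computed rotation angles to force the next sample to lie outside $Y$; this is possible because the success amplitude is known exactly in both cases, since it depends only on $\dim Y$. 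After $n-1$ rounds in the no case, the vectors span $s^\perp$, so $s$ is determined and can be checked with $O(1)$ classical queries: test whether $f(0)=f(s)$. In the yes case, the vectors span a space of dimension $n$, or the candidate $s$ fails the check. Either outcome decides the instance with certainty using $\poly(n)$ queries.

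For the lower bound, I would first reduce to deterministic algorithms by Yao's principle, since it suffices to bound every deterministic $T$-query algorithm against $\mu_n$ with $T=2^{n/4}$. I would then construct a coupling. Draw a uniformly random function and run the algorithm adaptively, answering each new query with a fresh uniformly random value not yet seen. Under $\simyes$ (a random permutation) the answers are exactly this process. Under $\simno$, sample $s$ uniformly at random and a random 2-to-1 map. As long as no two queried points $x,x'$ satisfy $x\oplus x'=s$, the answers are also fresh distinct uniform values, so the two transcripts have identical distributions. Hence the algorithm's success exceeds $1/2$ by at most half the probability of a collision event, that is, some pair of queries differing by $s$.

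The main step is bounding this collision event. Conditioned on the transcript so far containing no collision, $s$ is uniform over the nonzero strings not yet excluded. Each query excludes at most as many values as there are previous queries, so a union bound gives a collision probability of at most $\binom{T}{2}/(2^n-1-\binom{T}{2})$. With $T=2^{n/4}$ this is $O(2^{n/2}/2^n)=O(2^{-n/2})$, and so the advantage is at most $2^{-n/2}$ after checking constants; for tiny $n$ the trivial bound suffices. The delicate part is the adaptive conditioning, namely that $s$ remains uniform over the unexcluded set given the history. I would handle it by sampling $s$ lazily at the end, which is justified because the no-case answers are independent of $s$ until a collision occurs.
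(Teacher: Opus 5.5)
The paper does not prove this theorem; it imports it from \cite{Sim97,BH97}, so your proposal is being judged as a reconstruction of the standard arguments. Your Part 2 is correct. The identical-until-collision coupling with lazily sampled $s$ bounds the advantage by $\tfrac12\binom{T}{2}/(2^n-1-\binom{T}{2})$. For $T\le 2^{n/4}$ and $n\ge 2$ this is at most $2^{-n/2}$, and for $n\le 2$ the claim is trivial. You can even skip the posterior-on-$s$ step: in the ideal process the queries are independent of $s$, so a union bound over pairs gives $\binom{T}{2}/(2^n-1)$ directly.

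Part 1 has a genuine gap in the exactness argument. You claim the success amplitude of the event $y\notin Y$ depends only on $\dim Y$, and this is false. With $k=\dim Y$, the probability is $1-2^{k-n+1}$ when $f\in\simno$, since samples are uniform on $s^\perp$. It is $1-2^{k-n}$ when $f\in\simyes$, since samples are uniform on $\{0,1\}^n$. The algorithm does not know which case it is in, so no fixed choice of amplification angles is exact in both cases, and your round-by-round exactness claim fails as stated. Relatedly, after $n-1$ rounds the yes case cannot reach dimension $n$ as you write. The repair is easy and uses your own final check. Tune round $k$ to the no-case value $1-2^{k-n+1}$, which is positive for $k\le n-2$. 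In the no case the amplified state stays supported on $y\in s^\perp$, and each round exactly adds a new independent vector. So after $n-1$ rounds $Y=s^\perp$, and the unique nonzero $s\perp Y$ satisfies $f(0)=f(s)$. Output \emph{no} if and only if you end with $n-1$ independent vectors and this test passes. In the yes case the test can never pass because $f$ is injective, whatever the mistuned amplification produced. Alternatively, redefine the good event so that its probability does not depend on the unknown case; the Brassard--H{\o}yer construction must deal with this same issue.
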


For the rest of the section, let $n\in \mathbb{N}$. Given inputs $x,u\in \{0,1\}^n$ and $j\in [n]$, we let
\begin{align*}
    \addr(x,u,j) = 1^n 0\, x\, u\, \mathrm{bin}(j),
\end{align*}
where $\mathrm{bin}$ is the binary representation of $j$ under a fixed length of $\ceil{\log(n+1)}$. In this problem, $x$ represents an instance of \cref{prob:simons}, $u$ represents an input to the instance and $j$ indexes an output bit. Notice that each instance of the problem $x$ has a unique encoding, which is always $\poly(n)$.

The set of all instances of size $n$ is denoted by $\calB_n \coloneq \{\addr(x,u,j): x,u\in \{0,1\}^n, j\in [n]\}$. For some arbitrary $G\subseteq \{0,1\}^*$, we define the problem instance $f_x^G: \{0,1\}^n \to \{0,1\}^n$ as 
\begin{align*}
    (f_x^G(u))_j = \Id_G(\addr(x,u,j)).
\end{align*}
Therefore, at length $n$, each $f_x^G$ describes the instance of \cref{prob:simons} in $G$. For a Cohen condition $\sigma$, if $\calB_n \subseteq \dom(\sigma)$, we define $f_x^{\sigma}$ analogously.

Let $A = \calT \oplus G$. We define the Simon's Promise Problem $\Gamma^A$ as,
\begin{align}
    x\in \Gamma^A_\yes &\iff f_x^G \in \simyes,\label{eq:simon_problem_yes}\\
    x\in \Gamma^A_\no &\iff f_x^G \in \simno.\label{eq:simon_problem_no}
\end{align}

We have that this problem is always in $\PromiseBQP^A$.

\begin{observation}
    For any $G \subseteq \{0,1\}^*$,
    \begin{align*}
        \Gamma^{\calT\oplus G} \in \PromiseBQP^{\calT \oplus G}.
    \end{align*}
\end{observation}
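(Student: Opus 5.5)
The plan is to give an explicit polynomial-time quantum oracle algorithm that runs Simon's algorithm on the instance $f_x^G$ encoded in the $G$-half of the oracle $A=\calT\oplus G$. Part~1 of \cref{thm:simons_result} then settles correctness: it is an exact algorithm on promised inputs. The $\calT$-half of the oracle is not needed at all.

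First, I implement a quantum query to $f_x^G$ using the XOR oracle for $A$. On input $x$ with $\abs{x}=n$, I need the unitary $\ket{u,y}\mapsto\ket{u,y\oplus f_x^G(u)}$. For each $j\in[n]$, I coherently compute the string $1\,\addr(x,u,j)$ into an ancilla register; this is a polynomial-time reversible map of $u$, since $x$ and $j$ are fixed classical data. I then query $A$ with target qubit $y_j$. Because $\Id_A(1z)=\Id_G(z)$, this XORs exactly $(f_x^G(u))_j$ into $y_j$. Finally, I uncompute the address register. Doing this for all $n$ bits implements one query to $f_x^G$ with $n$ queries to $A$ and $\poly(n)$ additional gates. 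All addresses have length $\poly(n)$, as noted after the definition of $\addr$.

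Second, I run the exact $\poly(n)$-query Simon algorithm from \cref{thm:simons_result} on this simulated oracle. If $x\in\Gamma^A_\yes\cup\Gamma^A_\no$, then $f_x^G\in\simyes\cup\simno$, so the algorithm decides membership correctly with certainty. In particular, the error is at most $\tfrac13$. If one prefers the standard bounded-error version of Simon's algorithm, I would instead do the following:
\begin{itemize}
\item collect $O(n)$ samples $y$ with $y\cdot s=0$;
\item solve the linear system classically for a candidate $s\neq 0^n$;
\item check with two further queries whether $f(0^n)=f(s)$, answering ``no'' if so and ``yes'' otherwise.
\end{itemize}
On a permutation this answers ``yes'' with certainty. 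On a $\simno$ instance the samples span $s^\perp$ with probability $1-2^{-\Omega(n)}$, so the check succeeds. The total cost is $\poly(n)$ gates and queries, and the circuit family is uniform in $n$.

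The only step needing care is that the machine is a fixed uniform oracle circuit family. Its description depends on $n$ and on $x$, but not on $G$, so it is a legitimate $\PromiseBQP^A$ machine. Off-promise inputs impose no requirement. I do not expect a genuine obstacle. The mildest subtlety is clean uncomputation of the address register so that interference in Simon's algorithm is not disturbed, and the reversible compute–query–uncompute pattern handles this.
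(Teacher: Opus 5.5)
Your proposal is correct and follows the paper's proof. Both simulate each query to $(f_x^G(u))_j$ by querying $1\,\addr(x,u,j)$ in the $G$-half of $A$, then run Simon's algorithm from \cref{thm:simons_result}, which is correct on every promised input. Your compute--query--uncompute details and your bounded-error fallback (a sensible touch given the paper's Hadamard--Toffoli gate set) only make explicit what the paper leaves implicit.
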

\begin{proof}
    Given an input $x\in \{0,1\}^*$, we may run the \BQP machine $M$ in \cref{thm:simons_result}. Notice that any query $(f_x^{G}(u))_j$ may be made by querying $1 \, \addr(x,u,j)$, which may be prepared in $O(n)$ steps. Assuming that $x\in \Gamma_\yes^{\calT \oplus G} \cup \Gamma_\no^{\calT \oplus G}$, the machine is always correct.
\end{proof}

\subsection{Density requirements}

Let $\calM_i = (M_i, t_i, a_i)$ be an enumeration over all deterministic machines $M_i$, a polynomial bound on their running time $t_i$ and a polynomial bound on the advice size $a_i$. Without loss of generality we may assume that after $t_i(\abs{x})$ steps, $M_i(x, a)$ outputs $0$ unless it has already halted.

Letting $b\in \{0,1\}$, for an arbitrary Cohen condition $\sigma$ we use $M_i^{\calT \oplus \sigma}(x,a)= b$ to denote the result of the oracle machine with advice $a\in \{0,1\}^{a_i(\abs{x})}$ when all queries $0\, w$ return $\calT(w)$ and $1\, w$ are in $\dom(\sigma)$ and return $\sigma(w)$. Therefore, the result $b$ is preserved under any extension of $\sigma$.

For some fixed enumeration $\calM_i$, we let $E_i$ be the set of Cohen conditions $\sigma$ for which there exists some $n\geq 12$ such that:
\begin{enumerate}
    \item $\calB_n \subseteq \dom(\sigma)$,
    \item for all $x\in \{0,1\}^n$, $f_x^\sigma \in \simyes \cup \simno$, and
    \item for all $a \in \{0,1\}^{a_i(n)}$, there exists some $x\in \{0,1\}^n$ and $b\in \{0,1\}$ such that 
    \begin{align*}
        M_i^{\calT \oplus \sigma}(x,a) = b \text{ and } b\neq \Id_{\simyes}(f_x^{\sigma}).
    \end{align*}
\end{enumerate}

\begin{lemma}\label{lem:ei_arithmetically_definable}
    For every $i\in \mathbb{N}$, $E_i$ is upward closed and arithmetically definable relative to $\calT$.
\end{lemma}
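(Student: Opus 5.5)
Upward closure and arithmetic definability are checked separately. For both, the key observation is that membership of $\sigma$ in $E_i$ depends only on finitely many values of $\sigma$ and of $\calT$, and that each of the three conditions is witnessed by an $n\geq 12$ for which the relevant part of $\sigma$ is already fixed.

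\textbf{Upward closure.} Suppose $\sigma\in E_i$ with witness $n$, and let $\sigma\preceq\tau$. We check the three conditions for $\tau$ with the same $n$.
Condition 1 holds because $\calB_n\subseteq\dom(\sigma)\subseteq\dom(\tau)$.
Condition 2 holds because $\tau$ agrees with $\sigma$ on $\calB_n$. Hence $f_x^\tau=f_x^\sigma$ for all $x\in\{0,1\}^n$, so each $f_x^\tau$ still lies in $\simyes\cup\simno$.
Condition 3 uses the convention fixed before the definition of $E_i$: $M_i^{\calT\oplus\sigma}(x,a)=b$ means every query $1w$ made during the run lies in $\dom(\sigma)$. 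Since $\tau$ extends $\sigma$, it answers those queries identically, and the $\calT$-queries are unchanged. So the run under $\tau$ is literally the same computation, giving $M_i^{\calT\oplus\tau}(x,a)=b$. Also $\Id_{\simyes}(f_x^\tau)=\Id_{\simyes}(f_x^\sigma)$. Thus the same pair $(x,b)$ witnesses condition 3 for every advice string $a$, and $\tau\in E_i$.

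\textbf{Arithmetic definability.} We write a first-order formula in the language of arithmetic with a predicate for $\calT$, with free variable the code $c(\sigma)$. It states: there exists $n\geq 12$ such that
\begin{enumerate}
\item for every string $y$ of the form $\addr(x,u,j)$ with $|x|=|u|=n$ and $j\in[n]$, the code $c(\sigma)$ lists $y$ in its domain;
\item for every $x\in\{0,1\}^n$, the finite function $f_x^\sigma$, whose truth table is recoverable from $c(\sigma)$ by a bounded search, is either a bijection or a $2$-to-$1$ function with a nonzero period $s$. Both properties are bounded quantifier statements over $\{0,1\}^n$, hence $\Delta_0$ in $c(\sigma)$ and $n$;
\item for every $a\in\{0,1\}^{a_i(n)}$ there exist $x\in\{0,1\}^n$, $b\in\{0,1\}$, and a finite computation history $h$ of $M_i$ on $(x,a)$ of length at most $t_i(n)$, such that $h$ is a valid history in which
\begin{itemize}
\item each query $0w$ is answered by $\calT(w)$, using the predicate;
\item each query $1w$ has $w\in\dom(\sigma)$ and is answered by $\sigma(w)$;
\end{itemize}
the output of $h$ is $b$, and $b\neq\Id_{\simyes}(f_x^\sigma)$.
\end{enumerate}
Validity of a coded computation history is primitive recursive in the code and the oracle answers. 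The enumeration index $i$ is a fixed parameter, so $M_i$, $t_i$, $a_i$ are fixed and the formula is legitimate. Altogether $E_i$ is even $\Sigma_1$ relative to $\calT$, which is more than enough.

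\textbf{Main obstacle.} The only subtle point is the partial-oracle semantics. We must define "$M_i^{\calT\oplus\sigma}(x,a)=b$" so that it requires every $G$-query to fall inside $\dom(\sigma)$. If the machine queries outside $\dom(\sigma)$, the statement is false rather than undefined. This convention is exactly what makes condition 3 monotone under extension, and it also keeps the formula finitary. With that convention, both parts are routine codings.
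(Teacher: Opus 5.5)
Your proposal is correct and follows essentially the same route as the paper. Upward closure is shown by reusing the same witness $n$, together with the convention that $M_i^{\calT\oplus\sigma}(x,a)=b$ requires every $G$-query to lie in $\dom(\sigma)$, and definability by an existential quantifier over $n$ followed by finite checks using queries to $\calT$ and $\sigma$. You just spell out the coding (computation histories, and the observation that $E_i$ is even $\Sigma_1$ relative to $\calT$) more explicitly than the paper does.
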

\begin{proof}
    Fix some $i$. Let us first prove upward closure. Let $\sigma \in E_i$ with $n$ and suppose that $\sigma \preceq \tau$ where $\tau$ is some Cohen condition. As $\calB_n \subseteq \dom(\sigma)$, the extension $\tau$ agrees with $\sigma$. Therefore, for all $x\in \{0,1\}^n$, $f_x^\tau = f_x^\sigma$, so the first two conditions in the definition of $E_i$ hold.

    For the last condition of upward closure, fix some advice string $a$. Therefore, there are $x$ and $b$ such that $M^{\calT \oplus \sigma}_i(x,a)= b$ and $b\neq \Id_{\simyes}(f_x^{\sigma})$. As $M$ only makes queries in $\dom(\sigma)$, its answers are not changed by $\tau$ and so $M^{\calT \oplus \tau}_i(x,a)$ also answers incorrectly. Hence $\tau \in E_i$.

    It remains to prove definability. Encode a finite Cohen condition by listing the finite pairs $(w,\sigma(w))$. Membership can be checked by checking whether there exists an $n$ which satisfies the three conditions above. Given a fixed $n$ and $\sigma$, each condition can be checked by finite searches using queries to $\calT$ and $\sigma$. Hence the formula is arithmetical relative to $\calT$.
\end{proof}

\begin{lemma}\label{lem:ei_dense}
    For every $i\in \mathbb{N}$, $E_i$ is dense.
\end{lemma}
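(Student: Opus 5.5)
Fix $i$ and an arbitrary Cohen condition $\sigma$; we must find $\tau \succeq \sigma$ in $E_i$. Since $\dom(\sigma)$ is finite, first choose $n \geq 12$ large enough that $\dom(\sigma)\cap \calB_n = \emptyset$, that no string of length at least $\min_{w\in\calB_n}\abs{w}$ lies in $\dom(\sigma)$, and that $n$ is large compared with the polynomials $t_i$ and $a_i$. We will use the probabilistic method: extend $\sigma$ by first filling $\calB_n$ with $2^n$ independent instances $f_x \sim \mu_n$ ($x\in\{0,1\}^n$), which already gives conditions 1 and 2. Then we show that for every advice $a$ there is an $x$ on which $M_i$ errs. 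The difficulty is that $M_i$ may also query strings outside $\dom(\sigma)\cup\calB_n$. To handle this, we fix all remaining strings of length at most $t_i(n)$ (other than $\calB_n$) to $0$. This gives a finite $\tau$ that determines every query of $M_i$ on inputs of length $n$. Hence $M_i^{\calT\oplus\tau}(x,a)$ is well defined and depends only on the $f_x$'s, since $\calT$ is a fixed language.

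Next, for a fixed advice string $a$, we view the map $(f_x)_{x} \mapsto (M_i^{\calT\oplus\tau}(x,a))_{x}$ as a deterministic query algorithm. Take $k = 2^n$ instances, or a subset of $k$ of them. Computing all $k$ outputs requires at most $k\, t_i(n)$ queries in total, because each run makes at most $t_i(n)$ queries; queries to $\calT$ are free, as they are fixed. By \cref{thm:simons_result}, any algorithm using $T = 2^{n/4}$ queries succeeds on $\mu_n$ with probability at most $\tfrac12 + 2^{-n/2} \le 1-\epsilon$ with $\epsilon = \tfrac14$ (valid for $n\ge 12$). Since $t_i(n) \le \epsilon T = 2^{n/4}/4$ for large $n$, \cref{thm:distribution_solving} shows that the probability that $M_i(\cdot,a)$ solves all $k$ independent instances correctly is at most $(2^{1/4}\cdot \tfrac34)^k$. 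The base is $c = 2^{1/4}\cdot 0.75 \approx 0.89 < 1$.

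Then we take a union bound over the $2^{a_i(n)}$ advice strings. The probability that some advice solves all $k=2^n$ instances is at most $2^{a_i(n)} c^{2^n} < 1$ for large $n$. So there is a fixed assignment of the $f_x$ such that every advice string errs on some $x$; this is exactly condition 3. The resulting $\tau$ extends $\sigma$ and lies in $E_i$, which proves density.

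The main obstacle is the application of \cref{thm:distribution_solving}. Its guarantee is stated for algorithms solving $k$ independent instances of a single function, while $M_i$ is run separately on each $x$, and its queries may cross between instances $f_{x'}$ for $x'\ne x$. I would handle this by treating the concatenated run over all $x$ as one randomized (in fact deterministic) $k$-instance algorithm with total budget $k\,t_i(n)\le \epsilon T k$. This is legitimate because cross-instance queries only count toward the total budget. A second point to check is that the off-$\calB_n$ padding does not depend on the $f_x$; this holds because it is fixed to $0$ before the sampling.
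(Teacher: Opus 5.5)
Your proposal is correct and follows the paper's proof. It pads every off-block string of length at most $t_i(n)$ with $0$, treats the $2^n$ runs of $M_i(\cdot,a)$ as one $2^n$-instance algorithm with total budget $2^n t_i(n)$, combines Simon's lower bound with Drucker's direct-product theorem, and union-bounds over the $2^{a_i(n)}$ advice strings. The only difference is your choice of $\epsilon=\tfrac14$ (base $2^{1/4}\cdot\tfrac34<1$) in place of the paper's $\epsilon=\tfrac{29}{60}$ (base $2^{29/60}\cdot\tfrac{31}{60}$), and either choice works for $n\ge 12$.
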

\begin{proof}
    Fix some arbitrary $i$ and Cohen condition $\tau$. Let $l = \max\{ \abs{w}: w\in \dom(\tau)\} $ when $\tau \neq \emptyset$ and $l=-1$ otherwise. Furthermore, we define the constants $T_n \coloneq \floor{2^{n/4}}$ and $\rho \coloneq 2^{29/60}\tfrac{31}{60}$.
    Choose some $n\geq 12$ such that,
    \begin{align}
        \abs{\addr(x,u,j)} &> l\label{eq:minimum_length_query}\\
        t_i(n) &< \frac{29}{60}T_n\\
        2^{a_i(n)} \rho^{2^n} &< 1.
    \end{align}
    We know that some $n$ exists as both $t_i$ and $a_i$ are in $\poly(n)$ and $\rho < 1$. The bound in \cref{eq:minimum_length_query} means that $\calB_n \cap \dom(\tau) = \emptyset$.

    Extend $\tau$ to a finite condition $\tau^\prime$ by assigning $0$ to every query $M_i$ can make that are outside of $\calB_n$ while preserving all values already set by $\tau$. This fixes queries $w$ whose size is bounded by $\abs{w}\leq t_i(n)$, meaning that the fixture is indeed finite.

    Fix some advice string $a\in \{0,1\}^{a_i(n)}$. Let $\mathbf{f} = (f_x)_{x\in \{0,1\}^n}$ be a tuple of instances of \cref{prob:simons} and write its representation into $\tau^\prime$. Let $R_a(\mathbf{f})$ be the sequence of results of $M_i(x,a)$ over all $x\in \{0,1\}^n$. As all queries $M_i$ could make are fixed, the outputs are fixed as well. Notice that over all inputs the machines make, their total query count is bounded by,
    \begin{align*}
        2^n t_i(n) \leq \frac{29}{60} T_n 2^n.
    \end{align*}
    Suppose that we sample $f_x\sim \mu_n$ independently for all $x$. As $n\geq 12$, any randomized algorithm making at most $T_n$ queries succeeds with probability at most $\tfrac{31}{60}$. Applying \cref{thm:distribution_solving} with $\epsilon = \tfrac{29}{60}$ and $k=2^n$,
    \begin{align*}
        \Pr_{f_x\sim \mu_n}[R_a(\mathbf{f}) = (\Id_{\simyes}(f_x))_{x\in \{0,1\}^n}] \leq \rho^{2^n}
    \end{align*}
    As there are $2^{a_i(n)}$ advice strings, union-bounding gives us,
    \begin{align*}
        \Pr_{f_x\sim \mu_n}[\exists a \in \{0,1\}^{a_i(n)}: R_a(\mathbf{f}) = (\Id_{\simyes}(f_x))_{x\in \{0,1\}^n}] < 2^{a_i(n)}\rho^{2^n} < 1.
    \end{align*}
    As $\mu_n$ is entirely supported on $\simyes \cup \simno$, there must exist some tuple $\mathbf{f}^*$ for which there does not exist an advice string $a$ which solves all $2^n$ instances correctly. Therefore, if we extend $\tau^\prime$ by $\mathbf{f}^*$ to obtain $\sigma$, we find that $\sigma$ is a valid Cohen condition such that $\sigma \in E_i$. As $\tau$ was arbitrary, $E_i$ is dense.
\end{proof}

We prove the main result of the section.

\begin{theorem}\label{thm:promise_language_difference}
    There exists an oracle $A$ such that,
    \begin{align*}
        \Pclass^A = \BPP^A = \BQP^A = \AWPP^A,\\
        \PromiseBQP^A \not\subseteq \PPoly^A, \\
        \PromiseBPP^A \subseteq \PPoly^A.
    \end{align*}
\end{theorem}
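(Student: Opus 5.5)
We take $A = \calT \oplus G$, where $G$ is Cohen generic relative to $\calT$ (which exists by \cref{lem:enumeration_of_dense_sets}). Since $\calT$ is \PSPACE-complete, \cref{thm:cohen_genericity_pspace} immediately gives the first line of the statement: $\Pclass^A = \BPP^A = \BQP^A = \AWPP^A$. The third line, $\PromiseBPP^A \subseteq \PPoly^A$, is Adleman's argument. We amplify the success probability so that the error on each promised input of length $n$ is below $2^{-n-1}$. A union bound over the at most $2^n$ promised inputs of that length then yields a single random string that is correct on all of them, and we hard-wire it as advice. Every step here treats the oracle as a black box, so the argument holds relative to any $A$.

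For the separation we use the Simon promise problem $\Gamma^A$ defined in \cref{eq:simon_problem_yes,eq:simon_problem_no}. By the observation following those equations, $\Gamma^A\in\PromiseBQP^A$ for every $G$. Suppose for contradiction that $\Gamma^A\in\PPoly^A$. Then some triple $\calM_i=(M_i,t_i,a_i)$ in the enumeration, together with an advice function $h$, satisfies the following: for every $n$ and every $x\in\{0,1\}^n$ with $f_x^G\in\simyes\cup\simno$, $M_i^A(x,h(n)) = \Id_{\simyes}(f_x^G)$.

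By \cref{lem:ei_dense} the set $E_i$ is dense, and by \cref{lem:ei_arithmetically_definable} it is arithmetically definable relative to $\calT$. Genericity therefore gives some $\sigma\in E_i$ with $\sigma\prec G$, witnessed by some $n\ge 12$. Conditions 1 and 2 of $E_i$ say that $\calB_n\subseteq\dom(\sigma)$ and that every $f_x^\sigma$ lies in $\simyes\cup\simno$. Because $\sigma\prec G$, we have $f_x^G=f_x^\sigma$, so every $x\in\{0,1\}^n$ lies in the promise of $\Gamma^A$. Now apply condition 3 with $a=h(n)$. It yields some $x$ with $M_i^{\calT\oplus\sigma}(x,h(n))=b\neq\Id_{\simyes}(f_x^\sigma)$.

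This notation only makes sense when all oracle queries of the computation fall in $\dom(\sigma)$ on the $1$-side. In that case the computation agrees with $M_i^{A}(x,h(n))$, since $\sigma\prec G$. So $M_i^A(x,h(n))$ is wrong on a promised input, which is the desired contradiction.

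The main obstacle is this last consistency point: showing that the computation $M_i^{\calT\oplus\sigma}$ on these inputs really is fully determined by $\sigma$, so that it coincides with the computation relative to $A$. We handle it by appealing to the construction in \cref{lem:ei_dense}, where every query $M_i$ can make at length $n$ is placed in $\dom(\sigma)$. We also use the convention that membership in $E_i$ requires the defined value $b$ to exist. The remaining step is routine: since every machine is enumerated with its polynomial time and advice bounds, some index $i$ captures any putative $\PPoly^A$ algorithm. Together these give $\Gamma^A\notin\PPoly^A$, and hence $\PromiseBQP^A\not\subseteq\PPoly^A$.
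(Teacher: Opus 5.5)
Your proposal is correct and follows the paper's proof. You take $A=\calT\oplus G$ with $G$ generic relative to $\calT$, obtain the collapse from \cref{thm:cohen_genericity_pspace}, use the relativized Adleman argument, and refute each $\calM_i$ via a condition in $E_i$ that $G$ meets (you invoke genericity directly, where the paper uses $\sigma_{s_i+1}$ and upward closure, which is immaterial). One small correction: the consistency point is settled by the definition of $M_i^{\calT\oplus\sigma}(x,a)=b$ alone, since that definition requires every $1$-side query to lie in $\dom(\sigma)$; it is not settled by the construction in \cref{lem:ei_dense}, because genericity gives you an arbitrary $\sigma\in E_i$ rather than the one built there.
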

\begin{proof}
    As the proof that $\BPP \subseteq \classname{P}_{\mathrm{/poly}}$~\cite{Adl78,BG81} relativizes and applies to promise problems, it suffices to construct some promise problem in $\PromiseBQP^A$ that is not in $\PPoly^A$.

    Let us build the oracle $G$. Fix some enumeration $\calD = D_0, D_1,\dots,$ of all dense sets which are arithmetically definable relative to $\calT$. Apply \cref{lem:enumeration_of_dense_sets} on $\calD$ in order to obtain the increasing sequence $(\sigma_i)$ and the oracle $G$. Let $A = \calT \oplus G$. As $G$ is Cohen generic relative to $\calT$, \cref{thm:cohen_genericity_pspace} gives us,
    \begin{align*}
        \Pclass^A = \BPP^A = \BQP^A = \AWPP^A.
    \end{align*}
    Let $\Gamma^A$ be the promise problem from \cref{eq:simon_problem_yes,eq:simon_problem_no}, meaning that $\Gamma^A \in \PromiseBQP^A$.

    Let us show that $\Gamma^A \notin \PPoly^A$. Consider the enumeration of machines $\calM_i$. By \cref{lem:ei_arithmetically_definable,lem:ei_dense}, every $E_i$ is dense and arithmetically definable relative to $\calT$. Hence, for each $i\in \mathbb{N}$, there exists some index $s_i$ such that $D_{s_i} = E_i$. By \cref{lem:enumeration_of_dense_sets}, $\sigma_{s_i+1} \in E_i$. Furthermore, as $E_i$ is upward closed, every $\sigma_s$ for $s\geq s_{i}+1$ remains in $E_i$.

    Let $n_i$ be the input size which witnesses that $\sigma_{s_i+1} \in E_i$. Hence $\sigma_{s_i+1}$ fixes the block $\calB_{n_i}$ and $\sigma_{s_i+1} \prec G$. Therefore, for all $x\in \{0,1\}^{n_i}$, $f_x^{\sigma_{s_i+1}} = f_x^G$. Note that the queries the machine $M_i$ makes outside of $f_x^G$ do not affect the output either, as they are fixed by $\sigma_{s_i+1}$ as well.

    Therefore, for every advice string $a\in \{0,1\}^{a_i(n_i)}$, there exists some input $x\in \{0,1\}^{n_i}$ on which $M_i^A$ answers incorrectly. As this applies to all $\calM_i$, there cannot be a $\calM_i$ which solves the problem. Hence,
    \begin{align*}
        &\Gamma^A \notin \PPoly^A.\qedhere
    \end{align*}
\end{proof}

% \PromiseBQP and friends
\section{\BQP and its variants under promises}\label{sec:bqp_variants}

\subsection{Self-lowness}

We establish a promise-version of the result that \BQP is self-low.

\begin{theorem}\label{thm:promise_bqp_selflow}
    Under robust promise-queries,
    \begin{align*}
        \PromiseBQP^\PromiseBQP = \PromiseBQP.
    \end{align*}
\end{theorem}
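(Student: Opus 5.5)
The inclusion $\PromiseBQP \subseteq \PromiseBQP^\PromiseBQP$ is trivial, so the work is in the other direction. Fix $\Gamma \in \PromiseBQP^\Pi$ with $\Pi = \qcirc$, witnessed by an oracle circuit family $M$ making at most $T = \poly(n)$ queries. Under robust queries, $M^A$ accepts yes-inputs with probability at least $2/3$ and no-inputs with probability at most $1/3$ for every $A \in \comp(\Pi)$. The plan is to build a plain $\PromiseBQP$ circuit that replaces each oracle call by a coherent, amplified simulation of the queried circuit followed by uncomputation, as in the proof of self-lowness of $\BQP$~\cite{BBBV97}.

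The obstacle is that for an off-promise query $q$, with acceptance probability $p(q) \in (1/3, 2/3)$, amplification does not give a near-deterministic answer. The simulated bit is then far from a fixed basis state, so uncomputation leaves garbage entangled with the query register and destroys interference. No fixed threshold removes this, since $p(q)$ may sit arbitrarily close to it. Following the approach for $\PromiseBPP$~\cite{Gol11}, I would randomize the threshold. Split $[1/3, 2/3]$ into $L = \poly(n)$ sub-intervals $I_1, \dots, I_L$ of width $1/(3L)$. Pick a random index $\ell$ and simulate each query by the following test: output $1$ iff $p(q)$ exceeds the midpoint $\theta_\ell$ of $I_\ell$. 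Implement this by estimating $p(q)$ to additive precision $1/(6L)$ with failure probability $2^{-\poly(n)}$, using many parallel runs and a majority-style count, then comparing the estimate with $\theta_\ell$ and uncomputing.

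Whenever $p(q) \notin I_\ell$, this procedure maps $\ket{q,y}$ to within $2^{-\poly}$ of $\ket{q, y \oplus b(q)}$ for a deterministic bit $b(q)$. Moreover $b$ agrees with $\Pi$ on promise queries, since those have $p \ge 2/3 > \theta_\ell$ or $p \le 1/3 < \theta_\ell$. Hence, on the good queries, the simulated oracle is $2^{-\poly}$-close to the XOR oracle of the language $A_\ell = \{q : p(q) > \theta_\ell\}$, which is a completion of $\Pi$. Because the condition holds for \emph{every} completion, $M^{A_\ell}$ is correct with probability at least $2/3$ whatever $\ell$ turns out to be.

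It remains to control the bad queries, those $q$ with $p(q) \in I_\ell$, and this is the main step. Use the hybrid argument of BBBV. Let $\ket{\phi_t}$ be the state of $M^{A_\ell}$ before its $t$-th query, and let $w_t(q)$ be the squared amplitude on query $q$, so $\sum_q w_t(q) = 1$. Replacing the true oracle by the simulation at step $t$ changes the state by at most $2\sqrt{W_t(\ell)} + 2^{-\poly}$ in norm, where $W_t(\ell) = \sum_{q : p(q) \in I_\ell} w_t(q)$. The intervals are disjoint, so each $q$ lies in at most one of them and $\sum_\ell W_t(\ell) \le 1$, giving $\E_\ell[W_t(\ell)] \le 1/L$. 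By Jensen, $\E_\ell \sqrt{W_t(\ell)} \le 1/\sqrt{L}$. Summing over the $T$ steps, the expected total deviation is at most $2T/\sqrt{L} + 2^{-\poly}$. Choosing $L = 3600\,T^2$ makes this at most $1/30$.

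There is a subtlety: $w_t$ depends on the hybrid oracle $A_\ell$ and hence on $\ell$, so the averaging above does not apply directly. I would handle it in one of two ways. The first is a telescoping hybrid in which every step uses the true-oracle state of $M^{A_\ell}$; the weights then still depend on $\ell$, and bounding $\sum_\ell W_t(\ell)$ needs care. The second, cleaner option is a slightly different random completion: draw the threshold $\theta$ uniformly from a fine grid and argue via a fixed reference completion. If that fails, I would use independent thresholds per query and the argument from \cite{Gol11} that distinct queries need consistent answers only with high probability.

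Finally, the overall algorithm picks $\ell$ at random (a classical random choice, or a uniform superposition measured at the start), runs the simulated $M$, and accepts with probability within $1/30$ of $\E_\ell \Pr[M^{A_\ell}\text{ accepts}]$. This is at least $2/3 - 1/30$ on yes-inputs and at most $1/3 + 1/30$ on no-inputs. The gap is constant, so $\Gamma \in \PromiseBQP$ after standard error reduction.
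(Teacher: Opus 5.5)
There is a genuine gap. It sits in the algorithm itself, not only in the averaging step you flagged: a single threshold index $\ell$ shared by all queries can be learned by $M$ and then targeted.

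Let $M$ first query circuits whose acceptance probabilities are the endpoints $\tfrac13+\tfrac{k}{3L}$ of your sub-intervals. These are at distance at least $\tfrac{1}{6L}$ from every midpoint. So your simulation answers them, up to $2^{-\poly}$, exactly as $A_\ell$ does, namely with $1$ iff $k\ge\ell$. Bisection then reveals $\ell$ after $\lceil\log_2 L\rceil=O(\log n)$ queries; any polynomial choice of $L$ is known to $M$.

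$M$ then runs a few tests. Put a control qubit in $\ket{+}$. On the control-$1$ branch, query twice (compute, then uncompute the answer bit) a circuit $c^*$ whose acceptance probability is within $2^{-\poly}$ of $\theta_\ell$. On the control-$0$ branch, query a trivially rejecting dummy circuit. Accept iff every control is measured as $+$.

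Under every completion each test passes with certainty, so $M$ robustly decides the problem whose yes-set is everything. Under your simulation, the decision bit for $c^*$ is essentially a fair coin entangled with the workspace, and each test fails with probability about $\tfrac14$. A handful of tests therefore push the simulated acceptance probability below $\tfrac13$, even though $\E_\ell\Pr[M^{A_\ell}\text{ accepts}]=1$. At those steps $W_t(\ell)\ge\tfrac12$ for every $\ell$, so $\sum_\ell W_t(\ell)$ is of order $L$, and no telescoping bookkeeping can rescue your bound. A finer grid of shared thresholds fails the same way, since a single threshold can still be located by bisection to within the estimator's polynomially wide noise window.

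Your third fallback is the route that works, and it is the paper's route, but it needs ideas the proposal lacks.

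First, each query $z$ gets its own index $h(z)$ from a random function $h$ into $[L]$. Repeated queries to $z$ are then treated identically, so the consistency issue from the $\PromiseBPP$ argument does not arise, and other queries reveal nothing about $h(z)$.

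Second, the induced completion $z\mapsto \Id[p(z)>\theta_{h(z)}]$ is still correlated with the ambiguity indicator $S(z)=\Id[h(z)=r_z]$, where $r_z$ is the sub-interval containing $p(z)$. So the ideal trajectory depends on which queries are ambiguous. The paper decouples them: on an ambiguous hit it sets $F(z)=1$ with probability $\tfrac{r_z-1}{L-1}$, which equals $\Pr[F(z)=1\mid S(z)=0]$. This makes the random completion $F$ independent of $S$. Conditioning on $F$ fixes the ideal states $\ket{\psi_t^F}$ and bounds the expected ambiguous weight by $\tfrac1L$. After that, your hybrid and Jensen step gives error $2T\sqrt{\epsilon}+2T/\sqrt{L}$.

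Third, $h$ has exponential domain, so it must be derandomized. It is replaced by a $2T$-wise independent family whose polynomial-length seed is held in superposition. This preserves the expected acceptance probability, because that probability is a degree-$2T$ polynomial in the indicators $\Id[h(z)=j]$.
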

\begin{proof}
    The containment from right to left is trivial, so let us focus on the reverse containment. Let $\Pi \in \PromiseBQP$ and $\Gamma \in \PromiseBQP^\Pi$. Hence, by \cref{def:robust_promise_queries}, there is a uniform circuit family $\{Q_x^A\}$ whose descriptions are independent of $A$ which correctly decides $\Gamma$ for every completion $A\in\comp(\Pi)$ whose error is at most $\tfrac{1}{16}$. Let $q\in \poly(\abs{x})\geq 1$ be the number of oracle calls in $Q_x^A$ and $L\in \poly(\abs{x})$ the size of the input register. We may assume that each query has size $L$.
    Let $V$ be the circuit for $\Pi$ and let $p_z = \Pr[V(z) = 1]$.

    Next, we partition the off-promise interval where $p_z\in (\tfrac{1}{3}, \tfrac{2}{3})$. Let $\rho \in \mathbb{N} \geq 2$ be a power of $2$. For $j\in [\rho]$, define
    \begin{align*}
        &\alpha_j \coloneq \frac{1}{3} + \frac{j}{3\rho}, &\beta_j \coloneq \frac{1}{3} + \frac{j-1}{3\rho}.
    \end{align*}
    Hence the intervals $[\beta_j, \alpha_j)$ partition $[\tfrac{1}{3}, \frac{2}{3})$. Notice that for all queries $z\in \{0,1\}^L$, for all $j\in [\rho]$ except at most one, either $p_z \geq \alpha_j$ or $p_z \leq \beta_j$. If there exists a $j$ such that $\beta_j < p_z < \alpha_j$, we call such $j$ ambiguous for $z$.
    Last, when $j$ is not ambiguous for $z$, we define the corresponding output bit $b_j(z)$ as,
    \begin{align*}
        b_j(z) &\coloneq \begin{cases}
            1 &\text{if $p_z \geq \alpha_j$,}\\
            0 &\text{if $p_z \leq \beta_j$.}
        \end{cases}
    \end{align*}
    Let $h:\{0,1\}^L \to [\rho]$ be a random function we will specify later. Suppose that we want to simulate the output $b_{h(z)}(z)$ to some queried input $z$. Let $\epsilon > 0$ and consider the simulation of the query by making $O(\rho^2 \log(1/ \epsilon))$ repetitions of $V(z)$ and comparing their mean with $\tfrac{\alpha_{h(z)} + \beta_{h(z)}}{2}$. 
    
    If $h(z)$ is not ambiguous for $z$, the decision error is at most $\epsilon$. By running the simulation of the query, computing the decision bit into the output register and then uncomputing the procedure, we are able to simulate the query $b_{h(z)}(z)$. We denote this operation by $\tilde{O}_h$. For nonambiguous $z$, on the subspace with input $z$, the Euclidean norm between the real oracle and simulation is at most $2\sqrt{\epsilon}$. Let $\tilde{Q}_x^h$ be the modification of $Q_x^{A}$ where each oracle uses $\tilde{O}_h$ and let $a_x(h)$ denote its acceptance probability when using $h$.

    Next, we analyze $\tilde{Q}_x^h$ for a random $h$ where each input is uniformly and independently assigned to a value in $[\rho]$. Let us define two helper functions $F, S: \{0,1\}^L \to \{0,1\}$. $S$ will indicate whether we hit the ambiguous range using $h$, while $F$ denotes the output using the procedure above. When $z$ does not have an ambiguous index $j$, we let $F(z) = b_{h(z)}(z)$ and $S(z) = 0$. Notice that this holds for all $z\in \Pi_\yes \cup \Pi_\no$.

    In the other case, let $r \in [\rho]$ be the ambiguous index for $z$ and set $S(z) = \Id_{\{r\}}(h(z))$. When $S(z) = 0$, let $F(z) = b_{h(z)}(z)$. Otherwise, independently for each $z$, choose $F(z)$ randomly as,
    \begin{align*}
        \Pr[F(z) = 1| h(z) = r] = \frac{r-1}{\rho - 1}.
    \end{align*}
    Therefore,
    \begin{align*}
        \Pr_h[F(z) = 1 | S(z) = 0] = \frac{r-1}{\rho -1} = \Pr_h[F(z) = 0 | S(z) = 1],
    \end{align*}
    where the probabilities include auxiliary randomness independent of $h$.
    Therefore, $F$ and $S$ are independent. Additionally, notice that $\Pr[S(z) = 1] \leq \tfrac{1}{\rho}$ and if we extend $F$ to strings of arbitrary length such that it agrees with the promise inputs, then $F$ forms a completion of $\Pi$. We will use $O_F$ to denote the application of the oracle using the completion $F$.

    For each $t\in [q]$, write the state which is running $Q_x^F$ immediately before its $t$th query $\ket{\psi_t^F}$ and its ambiguous query weight $\wt_t(F,S)$ as,
    \begin{align*}
        \ket{\psi_t^F} &= \sum_{z} \ket{z} \ket{\phi_{t,z}^F},\\
        \wt_t(F,S) &= \sum_{z: S(z) = 1} \norm{\ket{\phi_{t,z}^F}}^2.
    \end{align*}
    On the query subspace where $S(z)=0$, we have that the norm between a real oracle call and a simulation is at most $2\sqrt{\epsilon}$. Therefore,
    \begin{align*}
        \norm{(\tilde{O}_h - O_F)\ket{\psi_t^F} } \leq 2\sqrt{\epsilon} + 2\sqrt{\wt_t(F,S)}.
    \end{align*}
    Condition on some arbitrary $F$, meaning that the state $\ket{\psi_t^F}$ is fixed. As $F$ and $S$ are independent,
    \begin{align*}
        \E[\wt_t(F,S)|F] &= \sum_{z}\norm{\ket{\phi_{t,z}^F}}^2 \Pr[S(z) = 1|F] \leq \frac{1}{\rho}.
    \end{align*}
    By Jensen's inequality, $\E[\sqrt{\wt_t(F,S)}] \leq \tfrac{1}{\sqrt{\rho}}$. By applying the triangle inequality and taking the expectation over the choice of $h$ and the auxiliary randomness in $F$,
    \begin{align}
        \E\left[\norm{\ket{\tilde{\psi}_q^h} - \ket{\psi_q^F}}\right] &\leq 2q\sqrt{\epsilon} + \frac{2q}{\sqrt{\rho}}.\label{eq:expectation_states}
    \end{align}
    Let $\epsilon \coloneq (4096q^2)^{-1}$ and $\rho$ be the least power of two which is at least $4096q^2$. Then, we have that \cref{eq:expectation_states} is at most $\tfrac{1}{16}$. For normalized pure states, the difference in acceptance probabilities is at most their Euclidean distance. Therefore,
    \begin{align*}
        x\in \Gamma_\yes &\implies \E_h[a_x(h)] \geq \frac{7}{8},\\
        x\in \Gamma_\no &\implies \E_h[a_x(h)] \leq \frac{1}{8}.
    \end{align*}
    Lastly, note that until now we assumed that $h$ is a uniformly random function. However, as $h$ is exponential, we cannot just randomly instantiate it. Instead, notice that by polynomial method, $a_x(h)$ can be written as a multilinear polynomial of degree $2q$~\cite{BBC+01}. By standard analysis of limited-independence constructions, there exists some seed $s\in \poly(\abs{x})$ which initializes a family of functions $\{h_s\}$ which are $2q$-independent and uniform over $[\rho]$~\cite{Vad12}. Combining the two together,
    \begin{align*}
        \E_s[a_x(h_s)] = \E_h[a_x(h)].
    \end{align*}
    The final \PromiseBQP computation is run by initializing a random seed $s$ in superposition, which represents some $h_s$ and then running $\tilde{Q}_x^{h_s}$. As the seeds form an orthgonal basis, we have that the probability this circuit accepts is equal to $\E_{s}[a_x(h_s)]$, completing the proof.
\end{proof}

A similar approach also works for advice classes.

\begin{corollary}\label{cor:promise_bqpqpoly_selflow}
    Under robust promise-queries,
    \begin{align*}
        \bqpqpoly^{\bqpqpoly} = \bqpqpoly
    \end{align*}
\end{corollary}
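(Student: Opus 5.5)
The plan is to rerun the proof of \cref{thm:promise_bqp_selflow} almost verbatim, carrying the advice states along. The inclusion $\bqpqpoly \subseteq \bqpqpoly^{\bqpqpoly}$ is trivial, so only the reverse direction needs work. Let $\Pi \in \bqpqpoly$ be witnessed by a circuit $V$ with advice states $\{\sigma_m\}$, and let $\Gamma \in \bqpqpoly^\Pi$ be witnessed by an oracle circuit $Q_x^A$ with advice states $\{\tau_n\}$. We take $Q_x^A$ to make $q$ queries, each of length $L = L(\abs{x})$, and to have error at most $\tfrac1{16}$ for every $A\in\comp(\Pi)$. Since all queries have the same length $L$, only the single advice state $\sigma_L$ for $\Pi$ is relevant. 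This justifies the padding assumption, which we make explicitly.

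\textbf{Step 1 (define the new advice).} For inputs of length $n$ the advice will be $\tau_n \otimes \sigma_L^{\otimes K}$, where $K = O(q\rho^2\log(1/\epsilon))$ is polynomial. The point is that each simulated query consumes fresh copies of $\sigma_L$.

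\textbf{Step 2 (simulate the queries).} The simulated oracle $\tilde O_h$ for the $t$th query runs $V(z,\sigma_L)$ on $O(\rho^2\log(1/\epsilon))$ dedicated fresh copies. It compares the empirical mean with $\tfrac{\alpha_{h(z)}+\beta_{h(z)}}{2}$, XORs the resulting bit into the answer register, and then uncomputes.

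The main obstacle is that uncomputation does not return the advice copies exactly. They come back only approximately undisturbed, and only when the threshold decision is nearly deterministic. I handle this by viewing everything on the joint system consisting of the query register and the $t$th block of advice copies. For each basis query $z$ with $h(z)$ not ambiguous, the Chernoff bound makes the computed decision bit equal $b_{h(z)}(z)$ except with probability $\epsilon$. Hence compute--copy--uncompute is $2\sqrt\epsilon$-close to $O_F \otimes \Id$ acting on $\ket z\otimes\ket{\phi}\otimes\sigma_L^{\otimes k}$. I would purify $\sigma_L$ so that every state stays pure, and then argue exactly as in \cref{thm:promise_bqp_selflow}.

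One further point needs checking: because the advice blocks are disjoint, the error is never fed back through reused advice. The block used at step $t$ is untouched before step $t$, so the hybrid argument compares $\tilde O_h$ against $O_F\otimes \Id_{\text{block }t}$ at each step. The per-step bound $2\sqrt\epsilon + 2\sqrt{\wt_t(F,S)}$ therefore carries over directly.

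\textbf{Step 3 (random function and acceptance probability).} The random $h$, the independence of $F$ and $S$, the estimate $\E[\wt_t \mid F]\le 1/\rho$, and the same choice of $\epsilon$ and $\rho$ all carry over unchanged. Together they give $\E_h[a_x(h)]\ge \tfrac78$ on yes-inputs and $\le\tfrac18$ on no-inputs.

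\textbf{Step 4 (derandomize $h$).} Here I must make sure the polynomial-method step still works with quantum advice. Fix the purified advice. Then $a_x(h)$ is still a polynomial of degree at most $2q$ in the indicator variables $[h(z)=j]$, since each query applies a unitary that is linear in these indicators. So $2q$-wise independent $h_s$ suffice, and the seed $s$ is prepared in uniform superposition as before. This yields a polynomial-size quantum circuit with polynomial-size quantum advice that decides $\Gamma$ with a constant gap, and standard amplification finishes the proof. Amplification needs more advice copies, which remains polynomial.
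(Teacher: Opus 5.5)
Your proposal is correct and takes the same route as the paper. The paper's proof only remarks that trusted advice can be supplied in polynomially many independent copies, so the proof of \cref{thm:promise_bqp_selflow} goes through; your fresh advice block per query, the purification used to control uncomputation, and the degree check for the derandomization are exactly the details that remark leaves implicit. The one loose point is that you assume, rather than justify, that all queries have the same length; this is harmless, since the advice can simply include copies of $\sigma_\ell$ for every $\ell \le L$.
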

\begin{proof}
    As we are using trusted advice, we may arbitrarily ask for polynomial number of independent copies of the advice in order to perform any repetitions of simulations. Therefore, the same proof as the one for \cref{thm:promise_bqp_selflow} applies.
\end{proof}

\subsection{\PromiseAWPP versus \PromiseBQP}

\begin{lemma}\label{lem:gapp_in_fp_promiseawpp}
    $\GapP \subseteq \classname{FP}^{\PromiseAWPP}$.
\end{lemma}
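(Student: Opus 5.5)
Fix $g\in\GapP$ and write $g(x)=\#_1(x)-\#_{-1}(x)$ as in \cref{def:gapP_pp}, with $w\in\{0,1\}^{p(\abs{x})}$, so $\abs{g(x)}\le 2^{p}$. The plan is to recover $g(x)$ bit by bit, by binary search over thresholds, using a single \PromiseAWPP oracle that compares a normalized \GapP value against a threshold. We will set up the oracle so that every query the \FP machine issues is either inside the promise or its answer is irrelevant. Since access is robust, the machine must output $g(x)$ for every completion.

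The oracle problem $\Theta$ takes inputs $(x,c,1^m)$, where $c$ is an integer in $[-2^{p}-1,2^p+1]$ and $m\ge p(\abs{x})+2$. Define
\[
\phi(x,c)=\tfrac12+\frac{g(x)-c-\tfrac12}{2^{m}} .
\]
Replacing $2g-2c-1$ over $2^{m+1}$ keeps everything integral, so $2^{m+1}\phi(x,c)=2^m+2g(x)-2c-1$. This is a \GapP function by \cref{lem:closure_properties_gapp}, and it lies in $[0,2^{m+1}]$. Because $g(x)-c$ is an integer, the quantity $g(x)-c-\tfrac12$ is never zero: it is either at least $\tfrac12$ or at most $-\tfrac12$. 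This alone only gives a gap of $2^{-m}$ around $\tfrac12$, whereas \AWPP requires a constant gap.

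The main obstacle is therefore amplification. The gap must be made constant while the \GapP function stays in range. The first thing to try is the standard \AWPP amplification polynomial: compose $\phi$ with a degree-$\poly$ polynomial that maps $[0,\tfrac12-2^{-m}]$ into $[0,\tfrac13]$ and $[\tfrac12+2^{-m},1]$ into $[\tfrac23,1]$. However, an exponentially small gap needs exponential degree, so this fails.

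Instead I would change the promise. Let the \FP machine query thresholds $c$ for which the answer is sign-definite with a constant gap, defining
\[
\Theta_\yes=\{(x,c): g(x)\ge c\},\qquad \Theta_\no=\{(x,c): g(x)< c\}.
\]
This is a total language, so it is not in \AWPP in general.

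The fix is to exploit the promise: put $(x,k,b)$ in $\Theta_\yes$ iff the $k$th bit of $g(x)+2^p$ equals $b$, but only under the promise that the higher bits have already been supplied correctly. The \FP machine learns bits from the most significant downward, querying $(x,k,\text{prefix})$. The \PromiseAWPP function is
\[
\frac{1}{2^{q}}\sum_{y} \Id\bigl[y \text{ has the given prefix and bit } k\bigr]\cdot \Id[y=g(x)+2^p]\cdot 2^{q-\dots}.
\]
Directly writing $\Id[y=g(x)+2^p]$ in \GapP is not possible, so I would instead use the fact that, given the correct prefix, the remaining value $r=g(x)+2^p-\text{prefix}\cdot 2^{k+1}$ lies in $[0,2^{k+1})$. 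Its top bit $k$ decides whether $r\ge 2^k$.

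We then need a \GapP function $h\in[0,2^q]$ such that $r\ge 2^k$ gives $h/2^q\ge \tfrac23$ and $r<2^k$ gives $h/2^q\le\tfrac13$. This fails once more for $r$ near $2^k$. I therefore expect the real proof to replace binary search with a modular or exact step instead. The candidate I would pursue next is to use the promise to force exactness. For each candidate value $v$, set $h(x,v)=2^q\cdot\bigl(1-(g(x)-v)^2\bigr)$ restricted by the promise $\abs{g(x)-v}\le 1$. Under that promise, $v=g(x)$ gives value $1$ and $v=g(x)\pm1$ gives $0$. The \FP machine would then locate $g(x)$ through a chain of queries that stay within distance $1$ of the truth. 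Making these queries stay in-promise is exactly where I expect the difficulty to concentrate.
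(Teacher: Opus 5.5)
Your proposal has a genuine gap: none of your three constructions is completed, and the last one is circular. The promise $\abs{g(x)-v}\le 1$ can only be met after $g(x)$ is already known to within $1$, which is the whole problem. Nothing in your plan produces such a $v$, and you leave that step open yourself. You state the right principle at the start: every query should be in-promise or have an irrelevant answer. But you never build a query with that property. Both of your searches fail for the same reason. Binary search on a threshold $c$, and bit extraction under a prefix promise, each split the candidate range into two parts at a single point. So the answer matters exactly where the gap is smallest: $2^{-m}$ when you normalize by the global bound, or $r$ near $2^k$ for bits.

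The missing idea is to put the current search interval into the promise and normalize by its width. Take queries $(x,L,U)$ with $L<U$ and promise $L\le g(x)\le U$. Let yes instances be $g(x)\ge U-\lfloor (U-L)/4\rfloor$ and no instances be $g(x)\le L+\lfloor (U-L)/4\rfloor$. Then $\lfloor 2^{e}/(U-L)\rfloor\,(g(x)-L)$ with $e=2\abs{q}+4$ is a \GapP function: a polynomial-time computable factor times $g(x)-L$. Divided by $2^e$, it lies in $[0,1]$ on the promise, and it is $>2/3$ on yes and $\le 1/4$ on no, so the problem is in \PromiseAWPP.

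The \FP machine keeps the invariant $L_i\le g(x)\le U_i$, starting from $[-2^{p},2^{p}]$. On answer $1$ it discards only the bottom quarter, and on answer $0$ only the top quarter. If $g(x)$ lies in the bottom (resp. top) quarter, the query is a no (resp. yes) instance, so every completion answers $0$ (resp. $1$). Hence the discarded part never contains $g(x)$. If $g(x)$ lies in the middle half, the query is off-promise, but either answer preserves the invariant. So every query automatically satisfies the range part of the promise. The interval shrinks by a constant factor per step, and $O(p)$ queries yield $g(x)$ under every completion.

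Your final quadratic test is essentially the width-$1$ degenerate case of this. What you needed was to let the width be the current interval and to eliminate only a quarter per answer, rather than partitioning at a threshold.
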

\begin{proof}
    Fix some $f\in \GapP$. We define the promise problem $\Pi^f$ whose inputs are the triples $q = (x,L,U)$ where $L<U$ are binary-encoded integers as,
    \begin{align*}
        q \in \Pi^f_\yes &\iff L \leq f(x) \leq U \text{ and } f(x) \geq U - \floor*{\frac{U - L}{4}}\\
        q\in \Pi^f_\no &\iff L \leq f(x) \leq U \text{ and } f(x) \leq L + \floor*{\frac{U - L}{4}}
    \end{align*}
    Let us show that $\Pi^f \in \PromiseAWPP$. Let $n=\abs{q}$, $p(n) \coloneq 2n+4$ and,
    \begin{align*}
        g(q) \coloneq \floor*{\frac{2^{p(n)}}{U - L}} (f(x) - L).
    \end{align*}
    On bad incodings or inputs with $U\leq L$, let $g(q)=0$ and place those inputs outside of the promise.
    Notice that $p(n)\in \poly(n)$ so by \cref{lem:closure_properties_gapp}, $g(q) \in \GapP$. Suppose that $q\in \Pi^f_\yes$. Then,
    \begin{align*}
        \frac{g(q)}{2^{p(n)}} &\geq \frac{f(x) - L}{2^{p(n)}} \floor*{\frac{2^{p(n)}}{U - L}}\geq \frac{f(x) - L}{U - L} - \frac{f(x) - L}{2^{p(n)}}\\
        &\geq \frac{3}{4} - \frac{U - L}{2^{p(n)}} > \frac{2}{3},
    \end{align*}
    where the last line used the promise and that $1 \leq U - L < 2^{n+1}$, which means that $\tfrac{U-L}{2^{p(n)}} \leq \tfrac{1}{16}$. Similarly, when $q\in \Pi^f_\no$, we have $\tfrac{g(q)}{2^{p(n)}} < \tfrac{1}{3}$, so $\Pi^f\in \PromiseAWPP$.

    Let us show how to compute $f(x)$ using an arbitrary fixed completion $A\in \comp(\Pi^f)$. Fix some nondeterministic polynomial-time machine whose gap is $f$ and let $b\in \poly$ such that $\abs{f(x)} \leq 2^{b(\abs{x})}$. Consider the following protocol.

    \begin{algbox}
        \begin{enumerate}
            \item Let $L_0 = -2^{b(\abs{x})}, U_0 = 2^{b(\abs{x})}$
            \item Repeat starting with $i=0$:
            \begin{enumerate}
                \item If $U_i = L_i$, end the loop.
                \item Otherwise, query $q = (x,L_i, U_i)$ to get $a_i = A(q)$ and let $s_i = \floor{(U_i - L_i)/4}$.
                \item If $a_i = 1$, then let $L_{i+1} = L_i + s_i + 1$ and $U_{i+1} = U_i$.
                \item Otherwise, let $L_{i+1} = L_i$ and $U_{i+1} = U_i - s_i - 1$.
                \item Increment $i$ by $1$.
            \end{enumerate}
            \item Output $U_i$.
        \end{enumerate}
    \end{algbox}
    Notice that at every step $i$, $L_i \leq f(x) \leq U_i$. When the oracle answers $1$, the bottom quarter cannot contain $f(x)$ due to the promise. Similarly if the output is $0$, the top quarter cannot contain $f(x)$. This remains true when we are querying off-promise, because either answer preserves $f(x)$ when it lies off-promise. The interval decreases by a constant fraction, meaning the loop termintates in polynomially many steps. Hence we are able to compute $f(x)$ using $O(b(\abs{x}))$ queries to $\Pi^f$, meaning that $f\in \classname{FP}^{\Pi^f}$.
\end{proof}

% This is subsumed below.
% \begin{corollary}
%     If $\PromiseAWPP = \PromiseBQP$, then the counting hierarchy colllapses to \PP
% \end{corollary}
% \begin{proof}
%     Directly from \cref{thm:promise_bqp_selflow,lem:gapp_in_fp_promiseawpp}.
% \end{proof}

\begin{lemma}\label{lem:ch_collapse}
    If $\PromiseAWPP \subseteq \bqpqpoly$, then $\CH = \YQPstar$.
\end{lemma}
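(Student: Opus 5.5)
Assuming $\PromiseAWPP \subseteq \bqpqpoly$, the plan is to show that every level of the counting hierarchy lies in $\YQPstar$, and then close the loop using the known containment $\YQPstar \subseteq \APP \subseteq \PP$ from~\cite{Yir25}. That containment gives $\YQPstar \subseteq \CH$, so only $\CH \subseteq \YQPstar$ needs work.

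\textbf{Step 1: $\PP \subseteq \classname{P}^{\PromiseAWPP}$.} Take $L\in\PP$ with gap function $g\in\GapP$. By \cref{lem:gapp_in_fp_promiseawpp}, $g$ is computable in $\classname{FP}^{\Pi^g}$ for the promise problem $\Pi^g\in\PromiseAWPP$. This holds for \emph{every} completion of $\Pi^g$, since the binary-search protocol there is correct under robust queries. Testing the sign of $g$ then decides $L$.

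\textbf{Step 2: collapse into $\bqpqpoly$.} The hypothesis puts $\Pi^g$ in $\bqpqpoly$, so $\PP \subseteq \classname{P}^{\bqpqpoly} \subseteq \bqpqpoly^{\bqpqpoly} = \bqpqpoly$ by \cref{cor:promise_bqpqpoly_selflow}, still under robust queries. Next I induct on the level of $\CH$, claiming $\CHi{i}\subseteq\bqpqpoly$. For $\CHi{i+1}=\PP^{\CHi{i}}$, take $L \in \PP^{K}$ with $K\in\CHi{i}$ and $K \in \bqpqpoly$ (as a language). The gap function of $L$ lies in $\GapP^{K}$, and \cref{lem:gapp_in_fp_promiseawpp} relativizes: the construction of $\Pi^f$ and the $\GapP$ closure are oracle-independent, so I get $\GapP^K \subseteq \classname{FP}^{\PromiseAWPP^K}$. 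The hypothesis must then be used in relativized form, $\PromiseAWPP^K\subseteq \bqpqpoly^{K}$, which does not follow automatically from the unrelativized statement.

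\textbf{Main obstacle: the relativized hypothesis.} To avoid relativizing the hypothesis, I will first try absorbing the oracle. Since $K\in\bqpqpoly$, I can instead argue $\PromiseAWPP^{\bqpqpoly}\subseteq\PromiseAWPP_{/\mathrm{qpoly}}$-like objects, which seems awkward. So I will first try the cleaner route of working with promise problems throughout. The route is to show $\PP^{\Pi}\subseteq \classname{P}^{\PromiseAWPP}$ whenever $\Pi\in\bqpqpoly$, and then apply \cref{cor:promise_bqpqpoly_selflow} again. To get this, I adapt \cref{lem:bqp_completion_pp}: replace the $\bqpqpoly$ oracle by a \GapP-style expression with the advice folded into a uniform amplitude sum. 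That is plausible only when the advice is fixed per length, so the resulting gap is a $\GapP/\poly$-type function. This in turn means the $\PromiseAWPP$ queries carry advice, which $\bqpqpoly$ tolerates. If this bookkeeping fails, the fallback is a direct induction showing $\CHi{i}\subseteq\bqpqpoly$ by alternately applying ``$\GapP$ with $\bqpqpoly$ oracle $\subseteq\classname{FP}^{\PromiseAWPP}$ with advice'' and the self-lowness corollary.

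\textbf{Final step: from $\bqpqpoly$ to $\YQPstar$.} With $\CH\subseteq\bqpqpoly$, I convert to $\YQPstar$ using that $\PP$-type problems are self-verifying enough: a $\PP$ oracle with $\CH$ power can check candidate advice. Concretely, $\CH$ contains a complete problem $\mathrm{C}_iP$ per level, and for a fixed level the advice correctness is testable on all inputs of length $n$ by a problem in the next $\CH$ level, itself in $\bqpqpoly$. Combined with Aaronson--Drucker-style trusted-to-untrusted advice conversion (as in the definition of $\YQPstar$), this lets $A_n$ verify the advice. I expect this advice-verification conversion, and keeping the levels uniform, to be the second delicate point, but it mirrors the known $\classname{BQP/qpoly}=\classname{YQP/poly}$ arguments applied to a hierarchy that is closed under the required checks.
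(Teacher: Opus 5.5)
Your Step 1 and the first half of Step 2 are exactly the paper's argument: \cref{lem:gapp_in_fp_promiseawpp} together with \cref{cor:promise_bqpqpoly_selflow} gives $\PP \subseteq \classname{P}^{\PromiseAWPP} \subseteq \bqpqpoly^{\bqpqpoly} = \bqpqpoly$ under robust queries. The paper finishes from here in one line. Since $\PP$ is a class of languages (the promise is all strings), this is the language inclusion $\PP \subseteq \classname{BQP_{/qpoly}}$, and \cite[Theorem 3.3]{Yir25} states that this implies $\CH = \YQPstar$. The gap is in everything you do after that point.

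\textbf{The induction.} The claim $\CHi{i} \subseteq \bqpqpoly$ is never established. As you observe, the unrelativized hypothesis does not give $\PromiseAWPP^K \subseteq \bqpqpoly^K$. Your workaround, folding the advice into a $\GapP$-style sum, fails for quantum advice. A $p(n)$-qubit advice state $\rho$ has $2^{p(n)}$ arbitrary amplitudes, so acceptance probabilities $\mathrm{Tr}(M_q\rho)$ are not ratios of $\classname{GapP_{/poly}}$ functions. There is then no $\GapP$ numerator to feed into the product-distribution argument of \cref{lem:bqp_completion_pp}.

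\textbf{The final step.} Your advice-verification sketch is circular. Checking advice ``by a problem in the next $\CH$ level, itself in $\bqpqpoly$'' needs trusted advice for the checker, which is exactly what $\YQPstar$ forbids.

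\textbf{The missing idea.} The hierarchy collapses by \emph{lowness}, not by induction over levels. You only need to remove trust from the advice at the $\PP$ level to get $\PP \subseteq \YQPstar$; Aaronson's argument, as repaired in \cite{Yir25}, does this using the checkability of a $\PP$-complete problem. Then $\YQPstar \subseteq \APP$ and $\PP^{\APP} = \PP$ give $\PP^{\PP} \subseteq \PP^{\YQPstar} = \PP$, hence $\CH = \PP = \YQPstar$. You quote $\YQPstar \subseteq \APP$ yourself but use it only for the trivial inclusion $\YQPstar \subseteq \CH$. This lowness argument is the content of \cite[Theorem 3.3]{Yir25}, which is all the paper needs after your first step.
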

\begin{proof}
    By \cref{lem:gapp_in_fp_promiseawpp,cor:promise_bqpqpoly_selflow}, the hypothesis implies that $\PP \subseteq \bqpqpoly$. As $\PP$ is a syntactic class, we have $\PP \subseteq \classname{BQP_{/qpoly}}$ and thus \cite[Theorem 3.3]{Yir25} implies $\CH = \YQPstar$.
\end{proof}

\subsection{Stronger variants of \BQP}

Next, we define a variant of \PostBQP, where the postselection state-preparetion circuit depends only on the length of the input. As we show below, it functions as an intermediate class between $\PromiseYQPstar$ and $\bqpqpoly$ and lends itself well to the techniques discussed above.

\begin{definition}\label{def:postbqp_star}
    A promise problem $\Gamma$ is in $\PostBQPstar$ if there exist two polynomial-time uniform quantum circuit families $\{P_n\}$ and $\{D_n\}$ such that,
    \begin{itemize}
        \item The \emph{postselection} circuit $P_n$ produces some state $\ket{\psi}$ with a postselection bit $p$ such that $\Pr[p=1] > 0$. Let $\rho_n$ be the normalized state $\ket{\psi}$ after conditioning on $p=1$.
        \item The \emph{evaluation} circuit $D_n$ takes as input $x\in \{0,1\}^n$ together with $\rho_n$ and satisfies,
        \begin{align*}
            x\in \Gamma_\yes &\implies \Pr[D_n(x,\rho_n) = 1| p=1] \geq \frac{2}{3},\\
            x\in \Gamma_\no &\implies \Pr[D_n(x,\rho_n) = 1| p=1] \leq \frac{1}{3}.
        \end{align*}
    \end{itemize}
\end{definition}

We emphasize that $P_n$ does not take the input $x$.

\begin{proposition}\label{prop:yqp_start_in_postbqp_star}
    $\PromiseYQPstar \subseteq \PostBQPstar \subseteq \classname{PromiseAPP}$.
\end{proposition}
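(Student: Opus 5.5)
We prove the two inclusions separately.

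\emph{First inclusion.} Let $\Gamma\in\PromiseYQPstar$ via $\{A_n\},\{B_n\}$. We will use \cref{lem:yqpstar_postselection} to amplify the evaluation circuit, so that every advice state on which the advice test passes yields strong completeness and soundness. Then we use postselection to replace the existential choice of $\rho_0$ by a fixed, input-independent state.

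Concretely, the postselection circuit $P_n$ prepares a maximally mixed (or uniform-superposition) register as $\rho_0$ and runs $A_n$. It then runs the amplified verifier $C'$ of \cref{lem:yqpstar_postselection}, with $x$ removed, as a check on the residual state. It postselects on $b_{\mathrm{adv}}=1$ together with the recorded bits $y=z=1$.

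The point is that postselection on a maximally mixed input biases toward the good $\rho_0$. Since some $\rho_0$ passes with probability $\ge 9/10$, the postselection bit has positive probability. The conditioned state is a mixture weighted by passing probability. Components with passing probability below $1/10$ therefore receive relative weight at most roughly $(1/10)/(9/10\cdot \lambda)$, where $\lambda$ is the weight of the good eigenspace.

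This is the delicate step. The mixture weight on bad eigenvectors could dominate because the good subspace may be exponentially small. So I would instead amplify the advice test itself with Marriott–Watrous style repetition. Bad eigenstates then pass with probability $2^{-r}$ while good ones pass with probability $\Omega(1)$. This makes the postselected state concentrate on advice states for which the $B_n$ guarantee holds, up to weight $2^{-r}\cdot 2^{\mathrm{poly}}$, provided $r$ exceeds the number of qubits of $\rho_0$.

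The evaluation circuit $D_n$ then runs $B_n(x,\cdot)$ on the residual state. Conditioned on the good part, the $9/10$ versus $1/10$ guarantee gives the $2/3$ versus $1/3$ bounds, after absorbing the small bad weight.

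\emph{Second inclusion.} Let $\Gamma\in\PostBQPstar$. By the standard Feynman path-sum (\cref{lem:closure_properties_gapp}, as in~\cite{FR99}), for Hadamard–Toffoli circuits there are $\GapP$ functions $g(x)$ and $d(1^n)$, with a common power-of-two scaling, such that
\[
\frac{g(x)}{d(1^n)}=\Pr[D_n(x,\rho_n)=1\mid p=1].
\]
Here $d(1^n)$ is $2^{h}\Pr[p=1]$ and $g(x)$ is $2^{h}\Pr[p=1\wedge D_n=1]$. The crucial observation is that the denominator depends only on $n$, because $P_n$ ignores $x$; this is exactly the form required by \APP. We have $d>0$ since $\Pr[p=1]>0$.

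It remains to achieve error $2^{-r}$ for every polynomial $r$. I would first amplify inside the class: take $k=O(r)$ independent copies of $P_n$, postselect on all $p_i=1$ (still input-independent), run $D_n$ on each copy, and take the majority. The conditional acceptance then has error $2^{-r}$. Applying the $\GapP$ representation to the amplified pair gives $g,d$ satisfying the \PromiseAPP conditions, including $0\le g/d\le1$.

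The main obstacle is the first inclusion, specifically controlling the weight of bad advice in the postselected mixture. The second inclusion is routine once we note that the denominator is input-independent.
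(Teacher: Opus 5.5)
Your revised construction for the first inclusion is essentially the paper's. You start from the maximally mixed advice state, amplify the advice test with the Marriott--Watrous procedure of \cref{lem:yqpstar_postselection} using $r$ a little larger than the number $m$ of advice qubits (the paper takes $r=m+10$), and postselect. Eigenstates with pass probability below $\tfrac{1}{10}$ then carry relative weight at most about $2^{m-r}$, against the $\Omega(2^{-m})$ weight of a good eigenstate. For the second inclusion, the paper simply cites the characterization of \APP from~\cite{MN16} together with the \GapP representation of Hadamard--Toffoli circuits. Your direct argument is a correct, self-contained proof of exactly the direction needed: path sums whose denominator depends only on $n$ because $P_n$ ignores $x$, plus $k=O(r)$ jointly postselected copies of $P_n$ and a majority vote to reach error $2^{-r}$. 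It also makes explicit the amplification that the paper leaves to the citation.

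Two details of the first part need fixing before it is a proof.

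First, drop the ``uniform superposition'' alternative. $\ket{+}^{\otimes m}$ is a single pure state and can be orthogonal to every eigenstate with pass probability at least $\tfrac{9}{10}$. The maximally mixed state is the uniform incoherent mixture of the eigenstates $\ket{\psi_i}$, and that is what makes your per-eigenstate weight analysis valid.

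Second, the guarantee on $B_n$ in \cref{def:yqpstar} applies only to the residual state of $A_n$ conditioned on $b_{\mathrm{adv}}=1$, for an input that passes with probability at least $\tfrac{1}{10}$. So the order of operations matters, and it must be the paper's:
\begin{enumerate}
    \item Run the amplified test $A_n'$ and postselect on acceptance with $y=z=1$. This returns the advice register to a mixture of the $\ket{\psi_i}$ with weights proportional to $q_i$.
    \item Run $A_n$ afresh and postselect on $b_{\mathrm{adv}}=1$.
\end{enumerate}
Then $D_n=B_n$ receives a mixture of legitimate residual states with weights proportional to $q_ip_i$, and convexity gives the $\tfrac23$ versus $\tfrac13$ bounds. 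Your first description runs $A_n$ first and then applies an amplified check (of the evaluation circuit, with $x$ removed) to the residual. That does not produce such a state and should be discarded in favour of the order above.
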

\begin{proof}
    Let us prove the first containment. Fix some $\Gamma\in\PromiseYQPstar$, witnessed by $A_n,B_n$ and some input $x\in \{0,1\}^n$. Let $m$ be the size of the advice and $\{\ket{\psi_i}\}_{i\in [2^m]}$ the eigenbasis of the acceptance operator of $A_n$ and $p_i$ the probability that $A$ accepts $\ket{\psi_i}$. As some advice state is accepted with probability at least $\tfrac{9}{10}$, there exists some $i^\prime$ such that $p_{i^\prime} \geq \tfrac{9}{10}$.

    Apply \cref{lem:yqpstar_postselection} to $A_n$ with $r = m+ 10$, obtaining $A^\prime_n$. We define the postselection circuit $P_n$ as follows. It prepares the maximally mixed state, runs $A^\prime_n$ and postselects on $A_n^\prime$ accepting and $y=z=1$. Afterwards, $P_n$ runs $A_n$ again and postselects on $b_{\mathrm{adv}} = 1$, outputting the final state.

    For each $i$, let $q_i \coloneq \Pr[A^\prime_n\ket{\psi_i} \text{ accepts and }y=z=1]$. As the initial state is the maximally mixed state, the postselection probability is $Z=2^{-m} \sum_i q_i p_i$ which is strictly positive due to the index $i^\prime$. Let $\calB = \{i: p_i < \tfrac{1}{10}\}$ and $\delta \coloneq \tfrac{2^{-m}\sum_{i\in \calB} q_i p_i}{Z}$ to be the weight in the postselected state of branches for which the correctness of $B_n$ does not have to hold. By \cref{lem:yqpstar_postselection}, $q_i \leq 2^{-r}$ for $i\in \calB$. Therefore,
    \begin{align*}
        \delta &\leq \frac{2^{-m} \sum_{i\in \calB} q_i p_i}{2^{-m} \sum_i q_i p_i} \leq \frac{2^{-m} 2^m 2^{-r}}{2^{-m} p_i q_i}\\
        & \leq \frac{2^{-r}}{2^{-m} (1-2^{-r}) \tfrac{81}{200}} < \frac{1}{100}.
    \end{align*}
    If we condition on the postselected event, every branch outside of the one covered by $\delta$ has $p_i \geq \tfrac{1}{10}$, which is the condition on $\rho_0$ and $B_n$ in \cref{def:yqpstar}. Therefore, letting the evaluation circuit $D_n=B_n$,
    \begin{align*}
        x\in \Gamma_\yes &\implies \Pr[D_n(x, \rho_n)=1] \geq (1-\delta) \frac{9}{10} > \frac{2}{3},\\
        x\in \Gamma_\no &\implies \Pr[D_n(x, \rho_n)=1] \leq (1-\delta) \frac{1}{10} + \delta < \frac{1}{3}.
    \end{align*}
    Therefore, $P_n$ and $D_n$ satisfy the conditions in \cref{def:postbqp_star}, meaning that $\Gamma \in \PostBQPstar$.

    The second follows from the characterization of $\classname{APP}$ as $\PostBQP$ where the postselection event can be written using some $d\in \GapP$ which depends only on the input size~\cite{MN16}. As the acceptance probability of any circuit using Hadamard and Toffoli gates can be exactly written using a $\GapP$ functions~\cite{FR99}, the inclusion follows.
\end{proof}

We observe the following oracle separation.

\begin{proposition}
    There exists a classical oracle $O$ such that
    \begin{align*}
        \PromiseYQPstar^O \subsetneq \PostBQPstar^O \subsetneq \bqpqpoly^O.
    \end{align*} 
\end{proposition}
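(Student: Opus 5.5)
We need two strict separations relative to a single oracle $O$, while the two inclusions themselves relativize: $\PromiseYQPstar \subseteq \PostBQPstar$ (the proof of \cref{prop:yqp_start_in_postbqp_star} is black-box) and $\PostBQPstar \subseteq \bqpqpoly$ (take $\rho_n$ as quantum advice). So the work is to build one oracle $O = O_1 \oplus O_2$ by diagonalization, with a separate block of input lengths devoted to each separation.

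\textbf{Separating $\PostBQPstar^O$ from $\PromiseYQPstar^O$.} The extra power of $\PostBQPstar$ is that the postselected advice state $\rho_n$ can be produced with exponentially small success probability and still be trusted.
\begin{itemize}
\item \emph{Oracle $O_1$.} For each $n$, hide a random marked string $s_n \in \{0,1\}^n$ behind a Grover-type oracle. On inputs of length $n$, $O_1$ also encodes a function $g_n:\{0,1\}^n\times\{0,1\}^n \to \{0,1\}$ that is random on the row of $s_n$.
\item \emph{Target problem.} On input $x$, decide $g_n(s_n,x)$.
\item \emph{Upper bound.} $P_n$ prepares the uniform superposition, queries the membership oracle, and postselects on marked. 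This gives $\rho_n = \ket{s_n}$, after which $D_n$ queries $g_n(s_n,x)$.
\item \emph{Lower bound.} A $\PromiseYQPstar$ protocol must run a polynomial-time advice test $A_n$ that accepts some state with probability $\tfrac{9}{10}$. Every state it accepts with probability at least $\tfrac{1}{10}$ must enable correct evaluation on all $x$.
\end{itemize}
We would argue that if $\Pr[A_n \text{ accepts}]$ is insensitive to $s_n$ in a hybrid sense (BBBV), then the advice test can be fooled. Concretely, compare the world with $s_n$ to a world with $s_n$ removed. Either $A_n$ behaves almost identically in both, so an accepted state in the altered world carries no information about $s_n$ and $B_n$ cannot compute $g_n(s_n,\cdot)$ for most $x$. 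Or we use the adversary/diagonalization: choose $s_n$ outside the high-query-magnitude set of the fixed machine pair.

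\textbf{Separating $\bqpqpoly^O$ from $\PostBQPstar^O$.} Untrusted but efficiently-preparable-by-postselection advice is still uniform, so it cannot capture arbitrary advice. We would use a counting/diagonalization argument in the style of \cref{lem:ei_dense}.
\begin{itemize}
\item \emph{Oracle $O_2$.} At length $n$, take a random string $y_n \in \{0,1\}^n$, and let the promise problem be: on input $x$, decide $G(y_n,x)$ for a random oracle function $G$ indexed by $y_n$, with $y_n$ itself not directly queryable.
\item \emph{Upper bound.} $\bqpqpoly$ gets $y_n$ as classical advice and queries $G(y_n,x)$.
\item \emph{Lower bound.} For $\PostBQPstar$, the state $\rho_n$ is determined by $P_n^{O}$ alone. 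Postselection can amplify only information that is efficiently checkable, and $y_n$ is uncheckable because $G(y,\cdot)$ is random for every $y$. We would need to show that postselected query algorithms cannot identify $y_n$. Here the natural tool is the $\GapP$/$\PromiseAPP$ characterization: a $\PostBQPstar$ acceptance ratio is a ratio of low-degree polynomials in oracle bits. Then a degree/counting argument over the $2^n$ choices of $y_n$, with the ratio's denominator independent of $x$, shows that some choice defeats machine $M_i$. Enumerating machines and fixing finite lengths gives the diagonalization.
\end{itemize}

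\textbf{Main obstacle.} The hardest step is the second lower bound. Postselection is extremely powerful: $\PostBQP = \PP$, and with oracles it can perform exponential-size searches. One must therefore ensure $y_n$ is information-theoretically hidden, not merely hard to find. I would first try making $G(y,\cdot)$ independently random for all $y$, so that even full knowledge of $O_2$ minus the designation of $y_n$ gives nothing. The designation itself would be recorded only in the advice, not in the oracle.
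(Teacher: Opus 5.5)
\textbf{The first separation fails.} The problem you propose already lies in $\PromiseYQPstar^O$. The marked string $s_n$ can be checked with the very oracle you build. Let $A_n$ measure the advice register in the computational basis, query the membership oracle on the outcome $s$, and set $b_{\mathrm{adv}}=1$ iff $s$ is marked. The advice $\ket{s_n}$ passes with probability $1$. For every $\rho_0$, conditioned on $b_{\mathrm{adv}}=1$, the residual state is exactly $\ket{s_n}$, so $B_n$ outputs $g_n(s_n,x)$ with one query and zero error.

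Your BBBV/adversary step cannot exclude this. The advice is quantified after the oracle, so $A_n$ has no fixed query-magnitude profile from which you could choose $s_n$ to avoid it, and the honest advice puts all its magnitude on $s_n$. Conceptually, postselecting onto a string that the oracle lets you verify buys nothing over verified advice. To separate $\PostBQPstar$ from $\PromiseYQPstar$ you need information that postselection can produce but that no polynomial-time test can certify, which amounts to a lower bound against $\QMA$-type verification.

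The paper imports exactly this. By \cite{AKKT20} there is an oracle $O$ and unary languages $L_0,L_1\in\SBP^O$, one of which is not in $\QMA^O$. Since $\SBP\subseteq\PP=\PostBQP$ relativizes and the languages are unary, an input-independent $P_n$ can run the postselection algorithm on $1^n$, with $D_n$ rejecting non-unary inputs. This places both languages in $\PostBQPstar^O$. Meanwhile $\PromiseYQPstar\subseteq\classname{PromiseQMA}$ (use the untrusted advice as the witness) excludes one of them.

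\textbf{The second separation works but is overbuilt.} Your construction would go through, but you have put the main obstacle in the wrong place. Because $y_n$ is not encoded in the oracle, each $\PostBQPstar^O$ machine decides one fixed problem. At length $n$ it can agree with at most one of the $2^n$ pairwise distinct functions $G(y,\cdot)$. So no analysis of postselection or of ratios of low-degree polynomials is needed, and your remark that postselection is too powerful is beside the point.

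This is a pure cardinality argument, which the paper states directly for an arbitrary oracle. There are countably many $\PostBQPstar^O$ machines, whereas every unary language is in $\bqpqpoly^O$ (one advice bit per length), and there are uncountably many unary languages. No oracle construction and no join $O_1\oplus O_2$ is needed for this part. The real work is entirely in the first separation, which is where your construction breaks.
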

\begin{proof}
    Note that all inclusions hold for any oracle $O$ as \cref{prop:yqp_start_in_postbqp_star} relativizes and $\bqpqpoly$ can simulate $\PostBQPstar$ by using the post-selected state $\rho$ as the advice state.

    It was shown in~\cite[Corollary 20]{AKKT20} that there exists a classical oracle $O$ and unary languages $L_0,L_1 \in \SBP^O$ such that one of $L_0,L_1$ is not in $\QMA^O$. As $\SBP \subseteq \PP$ and $\PP = \PostBQP$~\cite{Aar05} relativize, there exist postselection algorithms which solve $L_0$ and $L_1$. As the languages are unary, we can make both algorithms independent of the input, hence placing it in \PostBQPstar. On non-unary strings, the circuit automatically rejects. On the other hand, as $\YQPstar \subseteq \QMA$, one of $L_0,L_1$ is not in $\PromiseYQPstar$, completing the first separation.

    For the second separation, we keep the same oracle $O$. There are countably many uniform $\PostBQPstar^O$ machines and hence only countably many unary languages in it. On the other hand, $\bqpqpoly^O$ contains every unary language, of which there are uncountably many.
\end{proof}

\begin{lemma}\label{lemma_postbqp_star_low}
    $\PP^{\PostBQPstar} = \PP$.
\end{lemma}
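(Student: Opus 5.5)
Since $\PP$ trivially reduces to $\PP^{\PostBQPstar}$, we only need to show $\PP^{\PostBQPstar} \subseteq \PP$. We will adapt \cref{lem:bqp_completion_pp}. Fix $\Pi\in\PostBQPstar$, witnessed by $P_n,D_n$, and a language $L\in\PP^\Pi$ under robust queries. Then there is a $\GapP$ oracle machine whose gap $g^A(x)$ is $\geq 1$ on $x\in L$ and $\leq -1$ otherwise, for every $A\in\comp(\Pi)$. As in that proof, we write $g^A(x)=\sum_{y,a}\sigma(x,y,a)\prod_{i\in[t]}[\cdots]$ with exactly $t$ distinct queries per transcript, all of length at most $t$. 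We then replace the oracle by a product distribution $\mu$ in which query $q$ is answered $1$ with probability $P(q)$, where $P(q)$ is an amplified acceptance probability of the postselected computation. The goal is to show that $\E_\mu[g^\calO(x)]$, after scaling by a suitable positive $\GapP$ quantity, is a $\GapP$ function with the correct sign.

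First, amplification. For each query length $m\le t$, we replace $(P_m,D_m)$ by a circuit that prepares $k=O(t)$ independent postselected copies, i.e.\ runs $P_m^{\otimes k}$ and postselects on all $k$ bits being $1$, then takes a majority vote of $D_m$ over the copies. Conditioned on postselection the copies are independent, so the conditional error on promise inputs is $2^{-6t}$. Using the Hadamard/Toffoli $\GapP$ characterization \cite{FR99}, the conditional acceptance probability can be written as $P(q)=\frac{H(q)}{d(1^{m})}$. Here $H\in\GapP$ counts accepting postselected paths, and $d(1^m)>0$ is a $\GapP$ function of the \emph{length only}, since $P_m$ does not see $q$. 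This is exactly where input-independence of the postselection is used, and it is the point that fails for general $\PostBQP$. We also have $0\le H(q)\le d(1^m)$.

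Next, the sampling argument carries over verbatim. The union bound over $|\calQ_x|\le t2^{2t}$ queries shows that with probability at least $1-2^{-3t}$ the sample $\calO$ is a completion of $\Pi$, so $\E_\mu[g^\calO(x)]$ has the correct sign because $|g|\le 2^t$. Linearity of expectation and distinctness of queries within a transcript give
\begin{align*}
\E_\mu[g^\calO(x)]=\sum_{y,a}\sigma(x,y,a)\prod_{i\in[t]}\Bigl[a_i\tfrac{H(q_i)}{d(1^{|q_i|})}+(1-a_i)\bigl(1-\tfrac{H(q_i)}{d(1^{|q_i|})}\bigr)\Bigr].
\end{align*}
To clear denominators, multiply by $D(x)\coloneq\prod_{m\le t} d(1^m)^{t}$, which is a positive $\GapP$ function by \cref{lem:closure_properties_gapp}. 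Each factor $\frac{H(q_i)}{d(1^{|q_i|})}$ becomes $H(q_i)$ times the $\GapP$ product of the remaining denominators, and each $1-\frac{H}{d}$ factor similarly becomes $(d-H)$ times that product. Every transcript term then gets the common denominator $\prod_i d(1^{|q_i|})$. Since that product depends on the transcript, we pad it to $D(x)$ by multiplying by the missing $d(1^m)$ factors, which can be selected as a $\GapP$ product indexed by $m$ and by the count of queries of length $m$. Because $D(x)>0$, the resulting function $G(x)=D(x)\E_\mu[g^\calO(x)]$ lies in $\GapP$ and has the same sign, so $L=\{x:G(x)>0\}\in\PP$.

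The main obstacle is this denominator bookkeeping. The queries $q_i$ have varying lengths, so the normalizations $d(1^{|q_i|})$ differ across factors and transcripts, and positivity of $d$ has to survive the amplification. I would first try to sidestep the issue by padding every query to a fixed length $t$, so that a single $d(1^{t})$ serves as the common denominator and $D(x)=d(1^t)^t$, exactly mirroring the $2^{p(t+k)t}$ scaling in \cref{lem:bqp_completion_pp}.
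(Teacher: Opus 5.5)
Your proposal is correct and follows the paper's proof: input-independent postselection gives a length-only positive $\GapP$ denominator $d(1^{m})$, the product-distribution sampling argument of \cref{lem:bqp_completion_pp} carries over, and multiplying by a positive $\GapP$ factor clears the denominators. The differences are cosmetic. Your tensored postselected copies with a majority vote yield exactly the paper's majority polynomial $H(x,q)/d(1^{\abs{q}})^N$, and the paper clears denominators per query (multiplying each $P_x(q)$ through by $\prod_{l\neq \abs{q}} d(1^l)^N$ to obtain a common $D_x$) rather than per transcript as you do, so your fixed-length padding fallback is unnecessary (it would in any case require checking that the padded oracle is still in $\PostBQPstar$).
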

\begin{proof}
    Fix some $\Pi\in \PostBQPstar$ and $L\in \PP^\Pi$. For simplicity, we use the notation of \cref{lem:bqp_completion_pp}. There is a uniform family of functions $g^A\in \GapP^A$ which can be decomposed using $F^A$ where $A\in \comp(\Pi)$. Similarly to before, we assume that each every computation makes exactly $t\in \poly(\abs{x})$ distinct queries of length $t$. Let $q_i(x,y,a_{<i})$ be the $i$th query after answers $a_{<i}$ and let $\sigma(x,y,a)\in \{-1,0,1\}$ be the final output. Let $\calQ_x$ be the set of all possible queries, whose size is bounded by $t2^{2t}$.
    
    Let $P_n$ and $D_n$ be the circuits for $\Pi$ from \cref{def:postbqp_star}. By \cref{lem:bqp_gapP_representation}, there exist functions $d(1^l), h(q) \in \GapP$ and $p\in \poly(l)$ such that for every query $q$ such that $\abs{q}=l$,
    \begin{align*}
        &\Pr[P_l \text{ postselects}] = \frac{d(1^l)}{2^{p(l)}}, &\Pr[P_l \text{ postselects and } D_l(q) = 1] = \frac{h(q)}{2^{p(l)}}.
    \end{align*}
    Furthermore, $1\leq d(1^l)$, $0 \leq h(q) \leq d(1^l)$ and,
    \begin{align*}
        q\in \Pi_\yes &\implies \frac{h(q)}{d(1^l)} \geq \frac{2}{3},\\
        q\in \Pi_\no &\implies \frac{h(q)}{d(1^l)} \leq \frac{1}{3}.
    \end{align*}
    Let $\epsilon = 2^{-6t}$ and $N=O(t)$ be an odd integer such that taking the majority over $N$ trials reduces the error to $\epsilon$~\cite{Hoe63}. For $\abs{q} \leq t$, let
    \begin{align*}
        H(x,q) \coloneq \sum_{\substack{z\in \{0,1\}^N \\ \sum_i z_i > N/2}} \prod_{i\in [N]} (z_i h(q) + (1 - z_i)(d(1^{\abs{q}}) - h(q))).
    \end{align*}
    and $H(x,q) = 0$ when $\abs{q} > t$. By \cref{lem:closure_properties_gapp}, $H(x,q) \in \GapP$. Furthermore, if we let $P_x(q) \coloneq \tfrac{H(x,q)}{d(1^{\abs{q}})^N}$, we have that $P_x(q)$ solves $\Pi$ up to error $\epsilon$. Additionally, we may ensure that $P(x,q)$ has a common denominator for all $\abs{q}\leq t$ by multiplying both the numerator and denominator by $\prod_{l: \abs{q}\neq l} d(1^{l})^N$. Let $D_x \coloneq \prod_{l} d(1^{l})^N$ be this denominator and $\hat{H}(x,q) \coloneq H(x,q) \prod_{l: \abs{q}\neq l} d(1^{l})^N$. Notice that $P_x(q) = \tfrac{\hat{H}(x,q)}{D_x}$ and by \cref{lem:closure_properties_gapp} both $\hat{H}(x,q)$ and $D_x$ are in $\GapP$.

    The rest of the proof essentially follows the proof of \cref{lem:bqp_completion_pp}. Let $\mu_x$ be the product distribution on assignments $\calO: \calQ_x \to \{0,1\}$ where $\Pr[\calO(q) = 1] = P_x(q)$. By union-bounding, we have that the probability that $\calO$ does not agree with $\Pi$ on $\calQ_x \cap (\Pi_\yes \cup \Pi_\no)$ is at most $\delta \coloneq 2^{-3t}$.

    Hence with probability at least $1-\delta$, $\calO$ is a valid completion of $\Pi$. Therefore,
    \begin{align*}
        x\in L &\implies \E_{\mu_x}[g^\calO(x)] \geq 1 -\delta(2^t +1) > 0\\
        x\notin L &\implies \E_{\mu_x}[g^\calO(x)] \leq -1 +\delta(2^t +1) < 0.
    \end{align*}
    It remains to write this expectation as the sign of a $\GapP$ function. Applying the independence of queried strings and linearity of expectation,
    \begin{align*}
        G(x) &\coloneq D_x^t \E_{\mu_x}[g^\calO(x)]\\
            &= \sum_{y,a\in \{0,1\}^t} \sigma(x,y,a) \prod_{i\in [t]}\left[ a_i\hat{H}(x,q_i(x,y,a_{<i})) + (1-a_i)(D_x - \hat{H}(x,q_i(x,y,a_{<i})))  \right].
    \end{align*}
    By \cref{lem:closure_properties_gapp}, $G(x)\in \GapP$. As $D_x^t > 0$, we have that $x\in L \iff G(x) > 0$ and hence $L\in \PP$.
\end{proof}

Immediately from \cref{prop:yqp_start_in_postbqp_star,lemma_postbqp_star_low}, we have the following.

\begin{corollary}
    $\PP^{\PromiseYQPstar} = \PP$.
\end{corollary}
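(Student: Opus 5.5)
The corollary follows by composing the two preceding results. I will derive both inclusions from \cref{prop:yqp_start_in_postbqp_star} and \cref{lemma_postbqp_star_low}, working throughout with robust promise-queries.

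For the nontrivial inclusion $\PP^{\PromiseYQPstar} \subseteq \PP$, fix $\Pi \in \PromiseYQPstar$ and $L \in \PP^\Pi$. By \cref{prop:yqp_start_in_postbqp_star}, $\Pi \in \PostBQPstar$. The important point is that this containment is witnessed by the same promise problem: the sets $\Pi_\yes$ and $\Pi_\no$ do not change, so $\comp(\Pi)$ does not change either. The $\PP$ oracle machine witnessing $L\in\PP^\Pi$ is correct for every $A\in\comp(\Pi)$, and therefore also witnesses $L \in \PP^{\Pi}$ with $\Pi$ viewed as a $\PostBQPstar$ problem. Applying \cref{lemma_postbqp_star_low} then gives $L\in\PP$.

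For the reverse inclusion, any $L\in\PP$ lies in $\PP^\Pi$ for every $\Pi$, for instance for the trivial promise problem with empty yes and no sets, which belongs to $\PromiseYQPstar$. The machine simply never queries the oracle, so it is correct under every completion.

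There is no substantial obstacle here. The only step that needs checking is that \cref{prop:yqp_start_in_postbqp_star} is a containment of promise problems with identical promise sets, and not a containment up to some reduction that could change the completions. Its proof reuses $B_n$ as $D_n$ on the same inputs, so this holds.
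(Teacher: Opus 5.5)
Your proposal is correct and follows the paper's argument: the paper derives the corollary directly from \cref{prop:yqp_start_in_postbqp_star} and \cref{lemma_postbqp_star_low}, since $\PromiseYQPstar \subseteq \PostBQPstar$ gives $\PP^{\PromiseYQPstar} \subseteq \PP^{\PostBQPstar} = \PP$. Your remarks that the completion set $\comp(\Pi)$ is unchanged and that the reverse inclusion is trivial only make explicit details the paper leaves implicit.
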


\bibliographystyle{alpha}
\bibliography{main}

\end{document}